\def\T{\mathcal{T}}
\newtheorem{theorem}{Theorem}[section]
\newtheorem{lemma}[theorem]{Lemma}
\newtheorem{definition}[theorem]{Definition}
\newenvironment{proof}[1][Proof]{\begin{trivlist}
\item[\hskip \labelsep {\bfseries #1}]}{\end{trivlist}}
\newcommand{\qed}{\nobreak \ifvmode \relax \else
      \ifdim\lastskip<1.5em \hskip-\lastskip
      \hskip1.5em plus0em minus0.5em \fi \nobreak
      \vrule height0.75em width0.5em depth0.25em\fi}
\begin{document}

\title{Discrete Conformal Deformation: Algorithm and Experiments}

\author{ 
Jian Sun \thanks{Mathematical Sciences Center, Tsinghua University, Beijing, 100084, China. \textit{Email: jsun@math.tsinghua.edu.cn.}}
\and Tianqi Wu \thanks{Mathematical Sciences Center, Tsinghua University, Beijing, 100084, China. \textit{Email: mike890505$@$gmail.com.}}
\and Xianfeng Gu \thanks{Department of Computer Science, Stony Brook University, New York 11794, USA. \textit{Email:  gu$@$cs.stonybrook.edu}}
\and Feng Luo \thanks{Department of Mathematics, Rutgers University, New Brunswick, NJ 08854, USA.\textit{Email: fluo$@$math.rutgers.edu}}
}


\date{}

\maketitle

\begin{abstract}
In this paper, we introduce a definition of discrete conformality for
triangulated surfaces with flat cone metrics and describe an algorithm for solving the problem
of prescribing curvature, that is to deform the metric discrete conformally 
so that the curvature of the resulting metric coincides with the 
prescribed curvature. We explicitly construct a discrete conformal map 
between the input triangulated surface and the deformed triangulated surface.  
Our algorithm can handle the surface with any topology with or without boundary, 
and can find a deformed metric for any prescribed curvature  satisfying the 
Gauss-Bonnet formula. In addition, we present the numerical examples to show 
the convergence of 
our discrete conformality and to demonstrate the efficiency and the robustness
of our algorithm. 
\end{abstract}

\newpage
\section{Introduction}
In this paper, we introduce a definition of discrete conformality for
triangle meshes and describe an algorithm for solving the problem
of prescribing curvature, that is to deform the metric discrete conformally 
so that the curvature of the resulting metric coincides with the 
prescribed curvature. In addition, we explicitly construct a 
discrete conformal map between the original triangle mesh 
and the deformed triangle mesh. The problem of prescribing curvature
has many applications in various engineering fields including
computer vision, image processing, and computer graphics. 
For instance, by setting the curvature to be zero, one can discrete conformally
flatten a triangle mesh into the plane and thus obtain a discrete conformal 
parametrization of the mesh. 

Our discrete conformal deformation consists of two basic operations:
{\it vertex scaling} and 
{\it cocircular diagonal switch} (see Figure~\ref{fig:basic-operations}). 
Assume a closed surface $S$ 
is equipped with a triangulation $T=(V, E, F)$ where $V$, $E$ and $F$ are the vertex set, 
the edge set and the triangle set, respectively. An edge length assignment
$l: E\rightarrow \mathbb{R}^+$ assigns any edge $e\in E$ with the length
$l(e)$, which determines a metric on $S$ provided that
the triangle inequalities are satisfies for all triangles in $T$. 
The operation of vertex scaling is a special way of changing the edge lengths. Specifically, 
the vertex scaling of the edge length assignment $l$ by a 
function $w: V\rightarrow \mathbb{R}$ is another edge length assignment, 
denoted $w *_T l$, so that for any edge $e\in E$ with the endpoints $u, v\in V$ 
\begin{equation}
w*_T l(e) = e^{w(u) + w(v)}l(e). 
\label{eq:vertexscaling}
\end{equation}
We call the function $w$ the discrete conformal factor. A discrete conformal factor $w$
is legitimate if the edge length assignment $w *_T l$ satisfies the triangle inequalities
for all the triangles in $T$. By a simple dimension counting,  the vertex scalings of $l$ 
will not in general cover all possible edge length assignments on $E$.
For an edge $e$ in $T$, 
denote $f$ and $f'$ the two triangles in $T$ incident to $e$, 
and $e_1, e_2$, respectively $e'_1, e'_2$, are two other edges of $f$ respectively $f'$ listed
counterclockwise, as shown in Figure~\ref{fig:basic-operations}. Define the length cross ratio
of the edge $e$ under the edge length assignment $l$ as $c_l(e) = (l(e_1)l(e'_1))/(l(e_2)l(e'_2))$. 
Then it is easy to verify that an edge length assignment $\tilde{l}$ is a vertex scaling of $l$ if 
and only if the length cross ratio is preserved, i.e., $c_l(e) = c_{\tilde{l}}(e)$, for any edge $e$ 
in $T$. 

The vertex scaling operatoion was introduced by	Ro$\check{c}$ek and Williams in physics~\cite{Rocek} 
and independently by Luo in mathematics~\cite{luo}. Luo established a (convex) variational principle 
associated to the vertex scaling operation.  This variational principle has many nice properties.
The one most relevant to the applications in engineering fields
is that there is an efficient algorithm to solve the problem of prescribing curvature, 
and thus the problem of discrete conformal parametrization as a special case. 
The main observation is that given a triangulation $T$ and an edge length assignment $l$
over its edges $E$, 
the conformal factor $w$ so that the metric determined by $w *_T l$ achieves 
the prescribed curvature is the unique minimizer of a convex energy, 
whose gradient and Hessian can be explicitly estimated. 
Thus, the minimizer can be efficiently computed by Newton's method. 
The convex energy discovered by Luo~\cite{luo} takes the form of path integral of 
a differential one-form.  An explicit formula of this convex energy based on Lobachevsky function
was found later by Springborn et al.~\cite{ssp}. 
However, there are the cases where the discrete conformal factor $w$ solving the prescribing 
curvature problem does not exist. In fact, in those cases, the minimizer $w$ of the above convex 
energy is not legitimate. 

\begin{figure}[t]
\begin{center}
\begin{tabular}{cc}
\includegraphics[width=0.5\textwidth]{./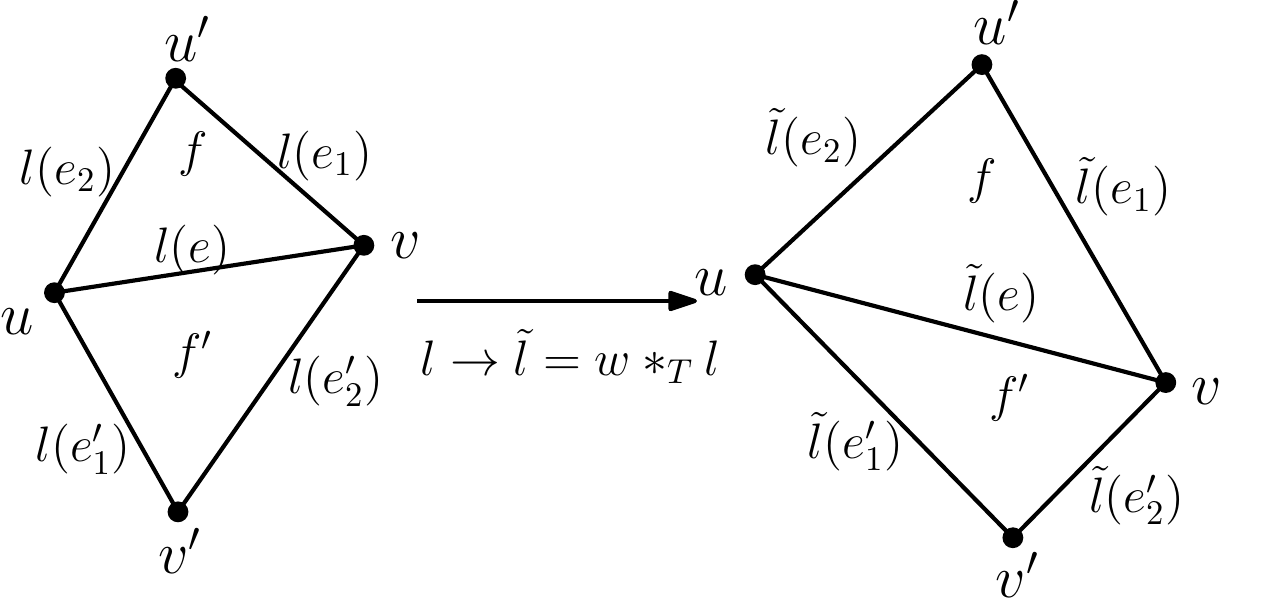} &
\includegraphics[width=0.5\textwidth]{./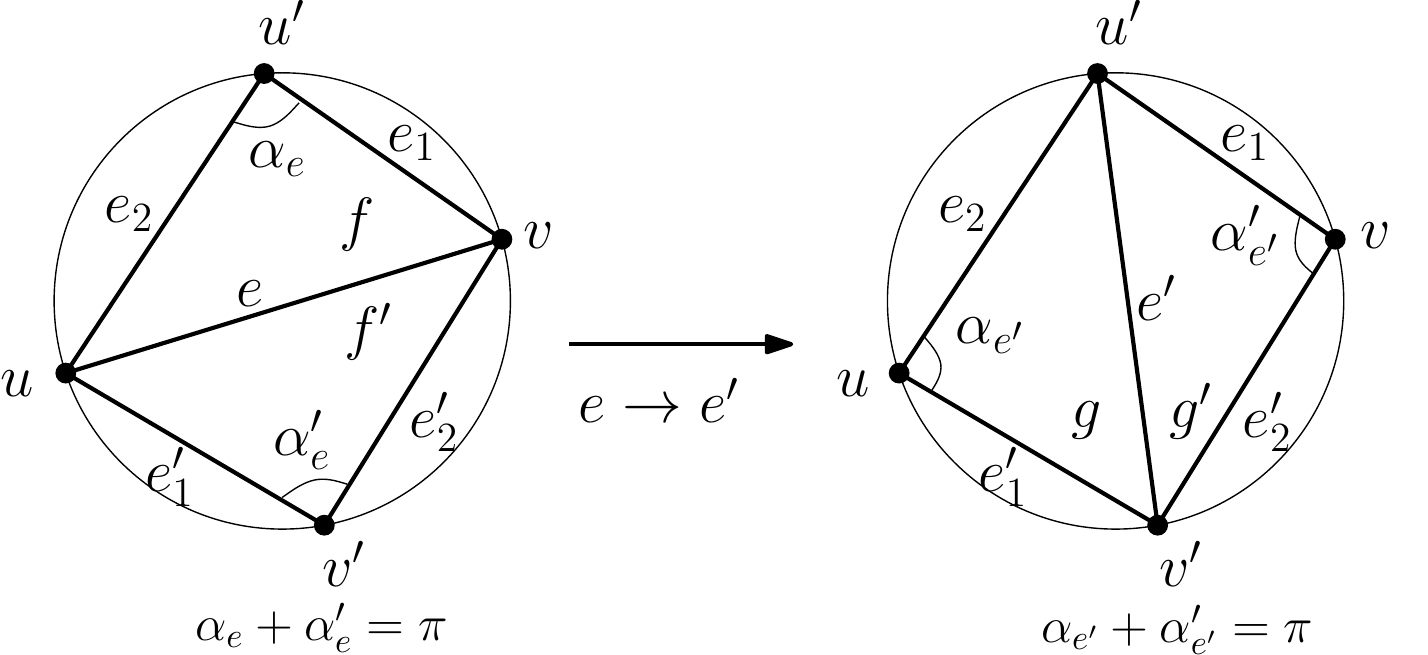}\\
Vertex scaling & Cocircular diagonal switch
\end{tabular}
\end{center}
\vspace{-0.1in}
\caption{Two basic operations in discrete conformal deformation.
\label{fig:basic-operations}}
\end{figure}

To tackle the issue of existence, we introduce the second operation: diagonal switch. 
Let $e$ be an edge in $T$ adjacent to two distinct triangles $f$ and $f'$ in $T$, 
the diagonal switch of the edge $e$ replaces $e$ by the other diagonal $e'$
of the quadrilateral $f \cup f'$. This also replaces the triangles $f, f'$ by two new triangles
$g, g'$, as shown in Figure~\ref{fig:basic-operations}, and produces a new triangulation $T'=(V, E', F')$ 
on $S$.  With the diagonal switch operation, 
we can extend the domain of legitimate discrete conformal factors. To see this, we start with a Euclidean
triangulation $T$ and an initial edge length assignment $l$ over the edges in $T$, and 
then we vertex scale $l$ by continuously changing the function $w$ along the gradient of the 
above convex energy. At some point, some triangle in $T$ may become degenerate under the new 
edge length assignment $w*_T l$, that is the triangle inequality becomes equality. 
It was shown by Luo~\cite{luo} that in any degenerated triangle one of its inner angle must equal $\pi$. 
By diagonally switching the edge opposite to that angle, the degenerated triangle is removed. 
In this way, one may make the conformal 
factor $w$ legitimate. However, the diagonal switch operation brings up many complicated issues. 
For instance, with diagonal switch, a priori, the energy depends on not only the discrete conformal 
factor $w$, but also the triangulations on $S$, which are combinatorial structures. 
{\it Can the energy with combinatorial variables still be convex?}
In addition, the new edges are emerging with the diagonal switch operation. 
{\it What is the assignment of the lengths for these edges which are not in the 
initial triangulation?}
Furthermore, {\it if multiple triangles simultaneously become degenerate, 
do different sequences of diagonal switch operations lead to the same solution?}

Our key observation to make the operation of diagonal switch work nicely is to switch an edge 
well before its incident triangle become degenerate. Specifically, an edge $e\in E$ shared by the triangles $f, f'$
is switched when it fails to be Delaunay, that is the sum of the angles opposite to $e$ in $f$ and $f'$
becomes bigger than $\pi$. We call it cocircular diagonal switch 
as the edge $e$ is switched at the moment that the quadrilateral $f\cup f'$ 
become cocircular. See Figure~\ref{fig:basic-operations}.
We will answer the above three questions later. 
Roughly speaking,  two PL metrics on $S$ are discrete conformal if one can be deformed to the other by a sequence of vertex 
scalings and cocircular diagonal switches. The rigorous definition is given in Definition~\ref{dc}. Based on this discrete
conformality, there always exists a PL metric which is discrete conformal to the initial PL metric and 
achieves any prescribed curvature. Furthermore, such metric can be computed using an efficient algorithm through
minimizing a convex energy. The algorithm can deal with the surface with any topology with or without boundary. 

In this paper, we describe our theory of discrete conformality with a focus on explaining
the algorithm for solving the problem of prescribing curvature, and present the
numerical examples, in particular to show the convergence of our discrete conformality. 
For the rigorous mathematical treatment of our theory, the interested readers are 
referred to~\cite{glsw1, glsw2}. 

\vspace{0.1in}
\noindent{\bf Related work.~}
There has been a lot of research into discrete conformality and we will not attempt 
a comprehensive review here. Instead, we focus on methods closely related to 
ours. Note all previous work deals with the concept of discrete conformality with 
fixed triangulations.  

Bobenko, Pinkall and Springborn~\cite{bps} introduced a geometric interpretation 
to the vertex scaling operation in both Euclidean and hyperbolic geometry using the 
volume of generalized hyperbolic tetrahedron. Glickenstein~\cite{Glickenstein1, Glickenstein2} 
extended the vertex scaling operation  to 3-dimensional piecewise flat manifolds.

One closely related work is circle patterns where a system of circles associated with
vertices. Two triangulated surfaces are considered conformally equivalent if the 
intersection angles of the circles are equal in both triangulated surfaces. 
The idea of approaching discrete
conformality through circle patterns goes back to Thurston~\cite{Stephenson03}. 
Rodin and Sullivan~\cite{RS} proved the Thurston's conjecture that Riemann mapping 
can be approximated by tangential circle packings (i.e., circle patterns with $0$ 
intersection angles) of hexagonal triangulations, and He and Schramm~\cite{he98} 
later showed the convergence is $C^{\infty}$. Colin de Verdi\'{e}re~\cite{verdiere} 
discovered a variational 
principle for circle patterns with intersection angles in $[0, \pi/2]$, 
and Chow and Luo~\cite{chow} introduced discrete Ricci flow based on circle packing and 
established a convergence theorem.. 
An issue with circle patterns is that not all metrics can be realized by circle patterns with 
intersection angles in $[0, \pi/2]$. To tackle this issue, Bowers and Stephenson~\cite{Bowers} 
introduced inversive circle patterns where circles are not necessarily intersect.
Guo~\cite{Guo} established a variational principle for inversive circle patterns 
and showed that inversive distance circle patterns are locally rigid, i.e., locally 
determined by the curvature. Luo gave a proof for global rigidity in~\cite{Luo2011}.
However, the question of existence to the problem of prescribing curvature 
remains open. It is interesting to see if diagonal switch can help solving the existence problem. 
Many practical algorithms based on circle patterns have been proposed for conformally 
flattening triangulated surfaces, including~\cite{Kharevych, Jin}.

Conformality is closely related to harmonicity. Pinkall and Polthier~\cite{pinkall} proposed
an approach for flattening a triangulated surface by computing a pair of 
discrete harmonic functions conjugate to each other. 
Gu and Yau~\cite{Gu:2003} proposed a method to conformally flatten a surface
into the plane using holomorphic one-form. Assume $h=f(z)dz$ is a 
holomorphic one-form of the surface, and then the metric 
$|f(z)|^2 dz d\bar{z}$ is conformal and flat when $f(z) \neq 0$. 
Noticing that any holomorphic one-form can be decomposed as 
$h = \omega + i(*\omega)$ where $w$ is a real harmonic one-form and 
$*w$ is its conjugate, Gu and Yau developed discrete algorithms
to approximate holomorphic one-forms from a triangulated surface by 
computing discrete harmonic one-forms and their conjugates. 

Another class of methods achieve conformality by minimizing conformal distortion. 
In these methods, piecewise linear maps are used to approximate actual conformal maps. 
Noticing that $h_{\bar{z}} = 0$ for a conformal map $h$, 
Levy et al.~\cite{Levy} proposed a method to find a piecewise linear map $f$ from a 
triangulated surface into the plane by minimizing $\|| f_{\bar{z}}|\|_{L_2}$. 
Lipman~\cite{Lipman12} proposed a method to find a piecewise linear map $f$ whose
conformal distortion $\frac{|f_z|+|f_{\bar{z}}|}{|f_z|-|f_{\bar{z}}|}$ is bounded. 
Lui~\cite{Lui} et al. noticed that the magnitude of the Beltrami coefficient 
$\mu = \left |\frac{h_{\bar{z}}}{h_z}\right|$ is constant for the extremal map $h$ 
(the map with minimal conformal distortion) and proposed an iterative procedure 
to find a piecewise linear map $f$ whose Beltrami coefficient has constant magnitude.

\section{PL metrics and triangulations}
The purpose of this section is to explain the relation between
PL metrics and triangulations and to familiarize the readers with a more
general triangulated surface than the one usually encountered
in many engineering fields with the structure of simplicial complex (i.e., 
any higher dimensional simplex is uniquely determined by its vertices). 
For simplicity, we assume the surface $S$ is closed without boundary. 
We will discuss how to deal with the 
surfaces with boundary in Section~\ref{surfaceswithboundary}. 

We start with a triangulated surface $S$ with which we are familiar, i.e., embedded in $\mathbb{R}^3$ 
where each triangle is the convex hull of its three vertices. For example, the boundary of a 
tetrahedron in $\mathbb{R}^3$ is such a triangle mesh with four Euclidean 
triangles, as shown in the leftmost picture in Figure~\ref{fig:triangulations}. 
Denote $V$ the set of vertices. 
Note that other than the vertices, any point $p\in S$ has a flat neighborhood. 
This is obvious if $p$ is in the interior of a triangle which is Euclidean. 
For $p$ in the interior of an edge, one can flatten the 
two triangles incident to the edge into the Euclidean plane, and thus $p$ also has
an (intrinsically) flat neighborhood. For a vertex $v\in V$, it has a neighborhood like a 
cone, as shown in Figure~\ref{fig:triangulations}. Thus the metric on $S$ is flat with possible
cone singularities at a discrete set of vertices. We call such a metric a {\it polyhedral metric} 
or simply  PL metric. In general, a surface with a PL metric is obtained by isometrically 
gluing pairs of edges of a finite collection of Euclidean triangles. See Figure~\ref{fig:triangulations}
for examples. The converse also holds, that is any surface with a PL metric can be 
partitioned into Euclidean triangles. In fact, this can be done by keeping connecting pairs 
of cone singular vertices with intrinsically straight edges on $S$ until no edge can be added
without intersecting the previously added edges in their interiors. Each partition is in fact
a triangulation with Euclidean triangles (in short {\it Euclidean triangulation}) on the surface $S$.
The curvature of a PL metric is $0$ everywhere except at the cone singular vertices where
the curvature is defined as $2\pi$ less the cone angle. Given a Euclidean triangulation
$T=(V, E, F)$ on the surface $S$,  one can evaluate the cone angle at a vertex $v$ 
by summing the inner angles at $v$ in the triangulation $T$, and even deform the PL metric $d$ 
by changing the edge length. For an edge $e\in E$, let $d(e)$ be the length of $e$ measured in the metric 
$d$. The edge length assignment $l:E \rightarrow \mathbb{R}^+$ with $l(e) = d(e)$ uniquely determines
the PL metric $d$.

\begin{figure}[t]
\begin{center}
\begin{tabular}{c}
\includegraphics[width=0.8\textwidth]{./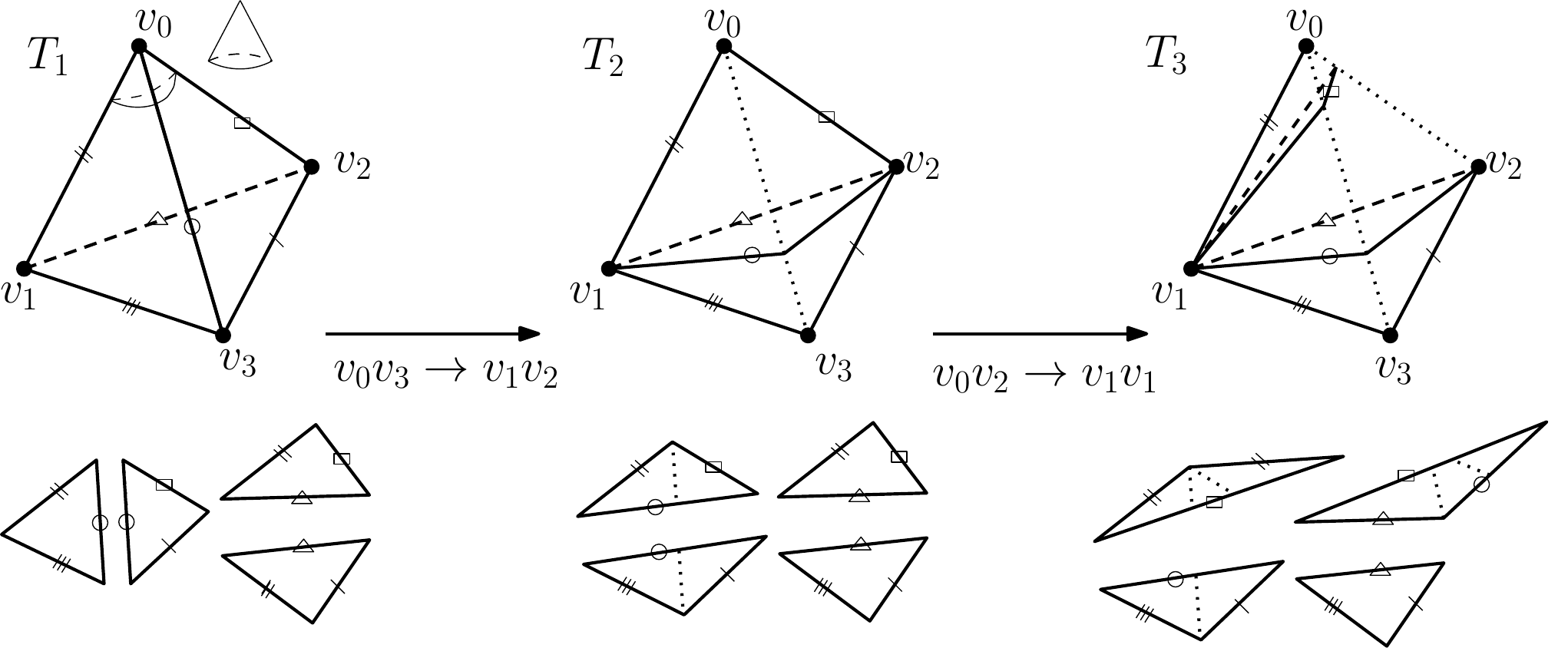}
\end{tabular}
\end{center}
\vspace{-0.1in}
\caption{Triangulations of the boundary of a tetrahedron: The second row shows the gluing pattern of 
the triangles for different triangulations, where the edges marked with the same symbol are glued together. 
The triangulation $T_2$ ($T_3$) is obtained by switching an edge in $T_1$ ($T_2$). 
\label{fig:triangulations}}
\end{figure}

Given a PL metric $d$ on $S$, there may be more than one Euclidean triangulations. 
Figure~\ref{fig:triangulations} shows three different triangulations of the boundary of a tetrahedron, 
where the triangulation $T_2$ respectively $T_3$ is obtained by diagonally switching the edge $v_0v_3$ in
$T_1$ respectively the edge $v_0v_2$ in $T_2$. It is generally true that any two (Euclidean) 
triangulations on $S$ with the same set of vertices $V$ are related by diagonal switches~\cite{Hatcher}. 
Among those Euclidean triangulations, there always exists a Delaunay triangulation where
every edge is Delaunay, that is the sum of the angles opposite to the edge is no bigger than $\pi$~\cite{Bobenko07}. 
There may exist more than one Delaunay triangulations.  If it happens that the sum of the angles 
opposite to an edge is exactly $\pi$, then by (cocircular) diagonally switching that edge, 
we obtain another Delaunay triangulation. In fact, any two Delaunay triangulations are related by 
a sequence of cocircular diagonal switches.

In the paper, we fix the topology of the closed surface $S$ and a finite non-empty set $V\subset S$
and call the pair $(S, V)$ a {\it marked surface}. A PL metric on the pair $(S, V)$ is a PL metric
on $S$ with the cone singularities in $V$, the curvature of a PL metric on $(S, V)$ is the function 
$K: V\rightarrow \mathbb{R}$ sending a vertex $v$ to $2\pi$ less than the cone angle at $v$, and a 
triangulation of the pair $(S, V)$ is a triangulation on $S$ with vertex set $V$. The curvature $K$
of a PL metric satisfies the Gauss-Bonnet formula: $\sum_{v\in V}K(v) = 2\pi\chi(S)$ where $\chi(S)$
is the Euler characteristic number of $S$. If $T = (V, E, F)$ is a triangulation on $(S, V)$, then 
$\chi(S) = |V| -|E| + |F|$. 

Let $T_{pl}(S, V)$ be the space of PL metrics on $(S, V)$ 
\footnote{Strictly speaking, we should consider the set of equivalence classes of PL metrics where two PL metric
$d, d'$ on $(S, V)$ are equivalent if there is an isometry $h: (S, V, d) \rightarrow (S, V, d')$ that is homotopic
to the identity map on $(S, V)$. However this difference is subtle and can be ignored, especially for the purpose
of understanding the algorithm. }. Given a triangulation $T$ of $(S, V)$ with set of edges $E=E(T)$, let $\mathcal{E}(T)$
be the set of edge length assignments so that the triangle inequalities are satisfied for all triangles in $T$. 
$\mathcal{E}(T)$ is a convex polytope in $\mathbb{R}^{|E(T)|}$. Since any edge length assignment $l\in \mathcal{E}(T)$ 
determines a PL metric $d$ on $(S, V)$ with $d(e) = l(e)$, there is an injective map 
\begin{equation}
\Phi_T: \mathcal{E}(T) \rightarrow T_{pl}(S, V)
\end{equation}
sending $l$ to a PL metric $d_l=\Phi_T(l) $ on $(S, V)$. The image $\mathcal{M}(T) := \Phi_T(\mathcal{E}(T))$ is the space
of all PL metrics $d$ on $(S, V)$ for which $T$ is a Euclidean triangulation in $d$. 
From the previous discussion, for any PL metric $d$ on 
$(S, V)$, there exists a Euclidean triangulation $T$ on $(S, V)$ whose edge length assignment is
given by the metric $d$, i.e, there exists an edge length assignment $l\in \mathcal{E}(T)$ with $d = \Phi_T(l)$.  
Thus we have $T_{pl}(S, V) = \cup_T \mathcal{M}(T) $ where the union is over all triangulations on $(S, V)$. Notice that
$E(T) = (-3\chi(S) + 3|V|)$ where $\chi(S)$ is the Euler characteristic number of $S$, which is independent of $T$. 
This means that  $T_{pl}(S, V)$ is a manifold of dimension $(-3\chi(S) + 3|V|)$
with coordinate charts $\{(M(T), \Phi_T^{-1}) | T~\text{is a triangulation on}~(S, V)\}$, as illustrated in 
Figure~\ref{fig:pl_metrics}. Note that in general $T_{pl}(S, V) \neq \mathcal{M}(T)$.

\begin{figure}[t]
\begin{center}
\begin{tabular}{c}
\includegraphics[width=0.8\textwidth]{./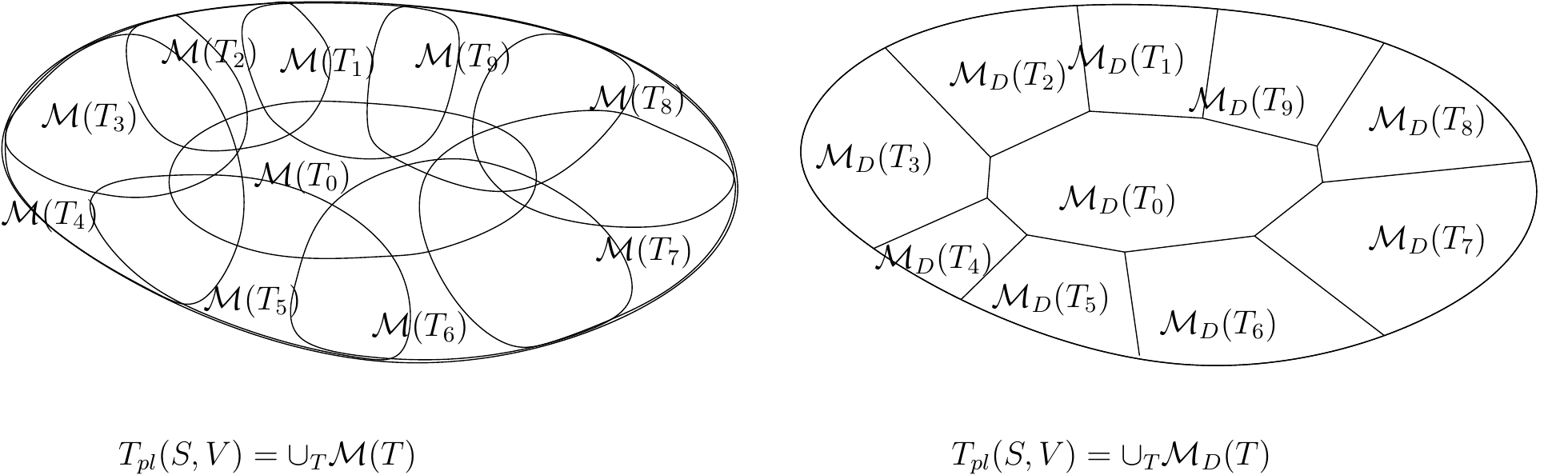}
\end{tabular}
\vspace{-0.1in}
\end{center}
\caption{Coverings of the space of PL metrics $T_{pl}(S, V)$.
\label{fig:pl_metrics}}
\end{figure}

Now we consider a subset of $\mathcal{E}(T)$:
\begin{equation}
\mathcal{E}_D(T) = \{l \in \mathcal{E}(T) | T ~\text{is a Delaunay triangulation on}~(S, V)~\text{in the PL metric}~\Phi_T(l)\}. 
\end{equation}
As we discussed before, for any PL metric $d$, there is a Delaunay triangulation $T$ whose edge length assignment $l$
is given by the metric $d$, i.e., $d = \Phi_T(l)$. 
Thus the set $\{\mathcal{E}_D(T)|  T~\text{is a triangulation on}~(S, V)\}$ also covers $T_{pl}(S, V)$. 
In fact, this set forms a cell decomposition of $T_{pl}(S, V)$~\cite{rivin, glsw1}, as illustrated in Figure~\ref{fig:pl_metrics}.
Thus one may say that Delaunay triangulation is canonical as it is uniquely determined by the PL metric except 
for those metrics on the cell boundary which have multiple Delaunay triangulations.

Finally, we remark that in a Euclidean triangulation $T$ on $(S, V)$, it is possible to have multiple intrinsically 
straight edges between two vertices (e.g., the edges marked with ``$\circ$'' and ``$\triangle$'' between 
$v_1$ and $v_2$ in the triangulation $T_2$ in Figure~\ref{fig:triangulations}), 
and even to have an intrinsically straight loop edge (e.g., the edge marked with ``$\square$'' in the 
triangulation $T_3$ in Figure~\ref{fig:triangulations}). Note, even if we start with a mesh 
with the structure of simplicial complex, we may end up with a mesh
with the more general structure as above as we allow diagonal switches. 

\section{Discrete Conformality}
Now we are ready to present our definition of discrete conformality. 

\begin{definition} \label{dc}(Discrete conformality for surfaces without boundary)
 Two PL metrics $d, {d}'$ on $(S,V)$ are discrete conformal  if there exist sequences
 of PL
metrics $d_0=d, ..., d_m ={d}'$ on $(S, V)$ and triangulations
$\T_0, ..., \T_m$ of $(S, V)$  satisfying

(a) each $\T_i$ is Delaunay in $d_i$,

(b) if $\T_i=T_{i+1}$, then $l_{i+1} = w*_{T_i} l_i$ for a conformal factor $w: V\rightarrow \mathbb{R}$
where $l_{i+1}$ and $l_i$ are the edge length assignments over the edges of $T_i$ with $l_{i+1}(e) = d_{i+1}(e)$ and 
$l_i(e) = d_i(e)$ for any edge $e$ in $T_i$. 

(c) if $\T_i \neq \T_{i+1}$, then $d_{i+1} = d_{i}$\footnote{Strictly speaking, 
$d_{i+1}=d_{i}$ in the sense of equivalence class, that is  $(S, d_i)$ is isometric 
to $(S,d_{i+1})$ by an isometry homotopic to the identity in $(S, V)$.}, and $\T_i, \T_{i+1}$ are related by cocircular diagonal switches.
\end{definition}
This definition means that $d, {d}'$ are discrete conformal if and only if there exists a path 
connecting two PL metrics in the space of $T_{pl}(S, V)$
so that within a cell $\mathcal{E}_D(T)$, the metrics deform along the path by vertex scalings, 
and on the cell boundary, the Delaunay triangulation is changed to another
via cocircular diagonal switches.  

The condition (a) in the definition is critical. Note that the operation of vertex scaling 
depends on the choice of triangulations. The vertex scaling of the same PL metric but under different
Euclidean triangulations may generate different PL metrics. So the previous definition of discrete conformality 
by vertex scaling~\cite{Rocek, luo, bps} heavily depends on the triangulations, which is
not inherent to PL metrics. On the other hand, by restricting to Delaunay triangulations which are canonical 
to PL metrics, our definition of discrete conformality is inherent to PL metrics. With this definition, 
we are able to prove the following uniformization theorem in~\cite{glsw1}. 

\begin{theorem} \label{thm:main} Suppose $(S, V)$ is a closed connected marked surface
and  $d$ is any PL metric on $(S, V)$.  Then for any $K^*:V \to
(-\infty, 2\pi)$ with $\sum_{v \in V} K^*(v) =2\pi \chi(S)$, there
exists a PL metric $d'$, unique up to scaling,  on $(S, V)$ so
that $d'$ is discrete conformal to $d$ and the discrete curvature
of $d'$ is $K^*$. 
\end{theorem}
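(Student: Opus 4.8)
The plan is to prove existence and uniqueness separately, treating the uniqueness (rigidity) first since it is the more standard part, and then deducing existence from a degree/completeness argument built on the convex variational principle. For uniqueness, I would fix a PL metric $d$ and consider the set $\mathcal{D}(d) \subset T_{pl}(S,V)$ of all PL metrics discrete conformal to $d$, modulo global scaling. The key structural fact is that, within a single Delaunay cell $\mathcal{E}_D(T)$, discrete conformal deformations are exactly vertex scalings, and Luo's convex energy (whose explicit Lobachevsky-function form is due to Springborn et al.~\cite{ssp}) is strictly convex on the slice of conformal factors. I would assemble these cell-wise energies into a single globally defined function on $\mathcal{D}(d)$ by checking that the energies agree on the shared cell boundaries (where cocircular diagonal switches occur and the metric is unchanged), and that the glued function is $C^1$ and convex along the natural conformal parameter; strict convexity then forces the curvature map $K: \mathcal{D}(d) \to \{K^* : \sum K^*(v) = 2\pi\chi(S), K^*(v) < 2\pi\}$ to be injective, which gives uniqueness up to scaling.

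For existence, I would show that this same curvature map $K$ is also surjective onto the open convex polytope $P = \{K^*:V\to(-\infty,2\pi) \mid \sum_v K^*(v) = 2\pi\chi(S)\}$. The approach is to fix the total-area normalization and regard $K$ as a map between manifolds of the same dimension $|V|-1$: the domain $\mathcal{D}(d)$ (suitably normalized) and the target $P$. Since $K$ is locally a diffeomorphism by the nondegenerate Hessian of the convex energy inside each cell, it suffices to show the image is both open and closed in $P$. Openness is immediate from the local diffeomorphism property. Closedness is the crux: one must show that if $K^*_n \to K^* \in P$ and $d_n \in \mathcal{D}(d)$ realize $K^*_n$, then (after normalization) the $d_n$ subconverge to some $d' \in \mathcal{D}(d)$ realizing $K^*$. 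This requires controlling the geometry of the metrics $d_n$ — their edge lengths, the cone angles staying bounded away from $0$, and crucially the fact that when a triangle tries to degenerate the cocircular diagonal switch intervenes first and prevents genuine collapse. One would invoke a properness-type estimate: the glued convex energy is proper on $\mathcal{D}(d)$ as a function of the conformal parameters when curvatures stay in a compact subset of $P$, so minimizing sequences cannot escape to the boundary of the configuration space.

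The main obstacle, as indicated in the introduction, is precisely handling the combinatorial variables — showing that the energy ``with combinatorial variables'' remains convex and $C^1$ across diagonal switches, and that sequences of switches do not cause pathological behavior (e.g., infinitely many switches, or path-dependence when several edges simultaneously fail to be Delaunay). Concretely, the hard technical lemma is that the length assignment extends continuously and consistently across a cocircular switch: the new edge $e'$ gets the length it would have as the diagonal of the cocircular quadrilateral $f \cup f'$ in the metric $d_i = d_{i+1}$, and one must verify the cross-ratio/length-scaling compatibility so that the two cell-energies match to first order on the overlap. Once this gluing is established, the global convexity and the Gauss-Bonnet constraint determine the correct dimension count, and the open-and-closed argument closes the proof. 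For the detailed verification of these properties I would refer to the companion papers~\cite{glsw1, glsw2}, as the present paper does.
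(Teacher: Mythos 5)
Your overall skeleton---glue Luo's cell-wise convex energies over the Delaunay cells into one $C^1$ convex functional of the conformal factor, deduce uniqueness from strict convexity on the slice $W=\{w:\sum_i w_i=0\}$, and deduce existence by showing the curvature map is open and closed onto the admissible polytope---is the same variational route this paper sketches in Section~\ref{sec:convexenergy} and, for the rigorous details, delegates to~\cite{glsw1}. However, the step you single out as ``the hard technical lemma'' (consistency of the new edge length and first-order matching of the energies across a cocircular switch) is actually the easy part: it is done explicitly in the paper via the Ptolemy identity and the cotangent cancellation in the proof of Lemma~\ref{lem:pKpw}.

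The genuine gap is that your argument silently assumes the content of Lemma~\ref{lem:conformalmetricspace}: that the conformal class $C(d)$ is globally parameterized by all of $\mathbb{R}^{|V|}$---equivalently, that deforming along an arbitrary path of conformal factors encounters only (locally) finitely many cocircular switches and never genuinely degenerates, so that $w*d$ is defined for every $w$ and your normalized domain really is a $(|V|-1)$-manifold on which the glued energy and curvature map make sense. Your heuristic that ``the cocircular switch intervenes before collapse'' does not prove this: a priori switches could accumulate, or the deformation could exit every cell without a limiting PL metric. Likewise, the closedness/properness step---that normalized conformal factors stay bounded while the target curvatures range over a compact subset of the polytope---is asserted rather than argued. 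These two points are precisely where the theorem is nontrivial; in~\cite{glsw1} they are established by identifying $C(d)$ with a decorated Teichm\"uller space of hyperbolic metrics with cusps on $S\setminus V$ (which also gives the finiteness of the cell decomposition of $C(d)$), and the limit analysis for surjectivity is carried out in that hyperbolic picture. So your outline matches the paper's, but the ingredients you defer or treat as routine are the mathematical core; without them the open-and-closed argument does not get off the ground.
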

In the above theorem, the conditions on the curvature $K^*$ are necessary for
$K^*$ to be a curvature of a PL metric on $(S, V)$. The theorem states that those conditions are 
also sufficient for $K^*$ to be achieved by a metric that is discrete conformal to the given
metric $d$. This solves the existence and the uniqueness of the discrete conformal deformation 
mentioned in the introduction.

\subsection{Surfaces with boundary}
\label{surfaceswithboundary}
To deal with a surface with boundary, our strategy is to double the surface to 
remove the boundary, that is to make another
copy of the original surface and glue them along the boundary, and apply 
the discrete conformal deformation described above to the doubled surface, and finally cut out
a copy of the deformed surface from the deformed doubled surface. 

Let $B$ be the boundary of the marked surface $(S, V)$. Given a PL metric $d$ on $S$, $B$ consists
of a set of closed polygonal loops. A Euclidean triangulation on $(S, V)$ is a partition of $S$ into 
Euclidean triangles with the vertices $V$. Note those edges of the polygonal loops of $B$ have to be 
in the triangulation. 
For a vertex $v\in V$ on the boundary, its curvature is defined as $\pi$ less than the cone angle at $v$. 
With this definition, the Gauss-Bonnet theorem still holds: $\sum_{v\in V} K(v) = 2\pi\chi(S)$. 

The doubled surface of $(S, V)$ is defined by taking the disjoint union of two 
copies of $(S, V)$ and identifying the points on the boundary by an homeomorphism 
$f: B\rightarrow B$ which preserves the vertices on the boundary. Denote $(\tilde{S}, \tilde{V})$ 
the doubled surface of $(S, V)$.  A PL metric $d$ on $(S, V)$ induces 
a PL metric $\tilde{d}$ on $(\tilde{S}, \tilde{V})$ by forcing the gluing map $f$ to 
be isometric in $d$. We call $\tilde{d}$ the doubled metric of $d$. 
Conversely, a PL metric on the doubled surface $(\tilde{S}, \tilde{V})$ is said 
to respect the doubling structure if it is the doubled metric of a PL metric on $(S, V)$. 
Let the map $h: (\tilde{S}, \tilde{V}) \rightarrow (\tilde{S}, \tilde{V}) $ be the 
mirror map sending a point to the other copy. The map $h$ is a self-isometric map if 
the PL metric on $(\tilde{S}, \tilde{V})$ respects the doubling structure. 
For convenience, the set of fixed points 
of the map $h$ is called the boundary of $(\tilde{S}, \tilde{V})$.

\begin{definition}
\label{dcwithboundary}(Discrete conformality for surfaces with boundary)
 Two PL metrics $d, {d}'$ on the surface $(S,V)$ with boundary are discrete conformal  if 
 their doubled metrics on the doubled surface of $(S, V)$ are discrete conformal according to the definition~\ref{dc}. 
\end{definition}

\begin{theorem}
\label{theorem:dc_boundary} 
Suppose $(S, V)$ is a connected marked surface with 
boundary and $d$ is any PL metric on $(S, V)$. Then for any $K^*:V \to (-\infty, 2\pi)$ 
with $\sum_{v \in V} K^*(v) =2\pi \chi(S)$ and $K^*(v)<\pi$ for a vertex $v$ on the boundary, 
there exists a PL metric $d'$, unique up to scaling, on the surface $(S, V)$ 
so that $d'$ is discrete conformal to $d$ and the discrete curvature
of $d'$ is the prescribed curvature $K^*$. 
\end{theorem}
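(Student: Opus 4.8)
The plan is to reduce Theorem~\ref{theorem:dc_boundary} to the closed-surface case, Theorem~\ref{thm:main}, by passing to the doubled surface $(\tilde S,\tilde V)$ and keeping track of the mirror symmetry throughout. First I would prescribe on $(\tilde S,\tilde V)$ the \emph{symmetric} target curvature $\tilde K^*$ obtained by copying $K^*$ to both sheets: for $v\in\tilde V$ in the interior of one of the two copies set $\tilde K^*(v)=K^*(v)$, and for $v$ on the boundary $B$ (the fixed-point set of the mirror map $h$) set $\tilde K^*(v)=2K^*(v)$. I would check that $\tilde K^*$ satisfies the hypotheses of Theorem~\ref{thm:main} on $(\tilde S,\tilde V)$: the bound $\tilde K^*(v)<2\pi$ holds at interior vertices because $K^*(v)<2\pi$ and at boundary vertices because $K^*(v)<\pi$ there (this is exactly where that extra boundary hypothesis is used), and the Gauss--Bonnet sum works out since $\sum_{v\in\tilde V}\tilde K^*(v)=2\sum_{v\in V}K^*(v)=2\cdot 2\pi\chi(S)=2\pi\chi(\tilde S)$, using $\chi(\tilde S)=2\chi(S)$ for the double of a surface with boundary. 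Then Theorem~\ref{thm:main} produces a PL metric $\tilde d'$ on $(\tilde S,\tilde V)$, discrete conformal to $\tilde d$ (the doubled metric of $d$), with curvature $\tilde K^*$ and unique up to scaling.

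The main step is to show that $\tilde d'$ respects the doubling structure, i.e.\ that the mirror map $h$ is an isometry of $\tilde d'$, so that $\tilde d'$ descends to a PL metric $d'$ on $(S,V)$. For this I would exploit the uniqueness clause of Theorem~\ref{thm:main}. Since $\tilde d$ is a doubled metric, $h:(\tilde S,\tilde V,\tilde d)\to(\tilde S,\tilde V,\tilde d)$ is an isometry; pushing $\tilde d'$ forward by $h$ gives a metric $h_*\tilde d'$. I would argue that $h_*\tilde d'$ is again discrete conformal to $\tilde d$ — because the relation ``discrete conformal'' of Definition~\ref{dc} is built from Delaunay triangulations, vertex scalings, and cocircular diagonal switches, all of which are preserved under the isometry $h$ (applying $h$ to a discrete-conformal sequence $d_0=\tilde d,\dots,d_m=\tilde d'$ with triangulations $\T_i$ yields the valid sequence $h(d_i)=d_i$ for $i<m$... more precisely $h_*d_0=h_*\tilde d=\tilde d$ since $\tilde d$ is symmetric, and $h_*d_m=h_*\tilde d'$, with triangulations $h(\T_i)$ still Delaunay and the conformal factors $w\circ h^{-1}$). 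Moreover $h_*\tilde d'$ has curvature $\tilde K^*\circ h^{-1}=\tilde K^*$ because $\tilde K^*$ was chosen symmetric. By the uniqueness in Theorem~\ref{thm:main}, $h_*\tilde d'=c\cdot\tilde d'$ for some scaling constant $c>0$; comparing total areas (or applying $h$ twice, $h^2=\mathrm{id}$, to get $c^2=1$) forces $c=1$, so $h_*\tilde d'=\tilde d'$ and $h$ is an isometry of $\tilde d'$.

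Once $\tilde d'$ is $h$-symmetric, it is the doubled metric of a unique PL metric $d'$ on $(S,V)$; by Definition~\ref{dcwithboundary}, $d'$ is discrete conformal to $d$ since their doubles $\tilde d',\tilde d$ are discrete conformal. It remains to recover the curvature of $d'$ and the uniqueness statement. At an interior vertex $v\in V$ the cone angle of $d'$ equals the cone angle of $\tilde d'$ at the corresponding point, so the curvature is $\tilde K^*(v)=K^*(v)$; at a boundary vertex $v$, the cone angle of $\tilde d'$ at $v$ is twice the cone angle of $d'$ at $v$ (the two sheets glue along $B$), hence the curvature of $d'$ at $v$ is $\pi-\tfrac12(\text{cone angle of }\tilde d')=\pi-\tfrac12(2\pi-\tilde K^*(v))=\pi-\tfrac12(2\pi-2K^*(v))=K^*(v)$, as required. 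Uniqueness up to scaling of $d'$ follows from uniqueness of $\tilde d'$ together with the fact that the doubling operation $d'\mapsto\tilde d'$ is injective and scaling-equivariant.

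I expect the main obstacle to be the second paragraph: verifying carefully that the discrete-conformality relation of Definition~\ref{dc} is genuinely invariant under the mirror isometry $h$ — one must check that applying $h$ to a whole admissible sequence (metrics $d_i$, triangulations $\T_i$, conformal factors $w_i$) produces another admissible sequence, in particular that $h$ carries Delaunay triangulations to Delaunay triangulations and cocircular diagonal switches to cocircular diagonal switches, which is true because all these notions are defined purely metrically (angle sums opposite an edge, the triangle inequality, the cross-ratio/vertex-scaling condition) and $h$ preserves the metric. The other delicate point is pinning down the scaling constant $c=1$, for which the involutivity $h^2=\mathrm{id}$ gives the cleanest argument. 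Everything else is bookkeeping about Euler characteristics, cone angles, and the doubling construction already set up in Section~\ref{surfaceswithboundary}.
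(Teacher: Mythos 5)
Your reduction to the doubled surface, the choice of $\tilde K^*$, the Gauss--Bonnet and Euler-characteristic checks, and the cone-angle bookkeeping at interior and boundary vertices all match the paper's argument. The genuine divergence---and the gap---is in your middle step, where you deduce that $\tilde d'$ respects the doubling structure from the uniqueness clause of Theorem~\ref{thm:main} applied to $h_*\tilde d'$. That uniqueness is only up to scaling \emph{and} up to isometry homotopic to the identity (this is how PL metrics are identified throughout the paper; see the footnote to $T_{pl}(S,V)$ and to Definition~\ref{dc}). So what your argument actually yields is that $h_*\tilde d'$ and $\tilde d'$ are the same point of the moduli of PL metrics: there is an isometry $\psi$ from $(\tilde S,\tilde V,\tilde d')$ to $(\tilde S,\tilde V,h_*\tilde d')$ homotopic to the identity, hence an isometry $g=\psi\circ h$ of $(\tilde S,\tilde V,\tilde d')$ in the mapping class of $h$---not that $h$ itself is an isometry of $\tilde d'$. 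Your involution trick only shows that $g^2$ is an isometry homotopic to the identity, which in this category you cannot conclude is the identity map; and without an honest isometric involution whose fixed set is the boundary circle (passing through the boundary vertices and bisecting their cone angles), you do not obtain a representative of the class of $\tilde d'$ that is literally a doubled metric, which is exactly what ``$\tilde d'$ descends to $d'$ on $(S,V)$'' requires and what your boundary-curvature computation tacitly assumes. Closing this would need an extra rigidity/realization argument that you do not supply.

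The paper closes precisely this gap by working constructively at the level of conformal factors rather than metrics-up-to-isometry. Using the global parametrization $C(\tilde d)\cong\mathbb R^{|\tilde V|}$ of Lemma~\ref{lem:conformalmetricspace}, the same symmetry-plus-uniqueness idea you use shows $w\circ h=w$ for the factor with $\tilde d'=w*\tilde d$ (your naturality observation, that transporting the data by $h$ preserves Delaunayness, vertex scaling, cocircular switches and the curvature, is the same verification the paper performs there). It then deforms along $\tilde d(t)=tw*\tilde d$ and proves by induction, cell by cell, that one can choose Delaunay triangulations with a mirror-symmetric structure---triangles crossing a boundary segment occur in cocircular mirror pairs bisected by the boundary---so that every $\tilde d(t)$, in particular $\tilde d'$, is honestly a doubled metric. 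Besides filling the gap above, this constructive symmetry is also the structural information the algorithm's Cut procedure relies on, so if you keep your softer route you should either add the missing rigidity step or recover this boundary structure separately.
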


The proof of the above theorem is deferred to the appendix. The basic idea is as follows. 
We obtain the doubled surface $(\tilde{S}, \tilde{V})$, and prescribe the curvature $\tilde{K}^*(v)$ for $(\tilde{S}, \tilde{V})$ as 
follows: for a vertex $v$ on the boundary, set $\tilde{K}^*(v) = 2 * K^*(v)$ 
and for a vertex in the interior, set $\tilde{K}^*(v) =  K^*(v)$. 
It is easy to verify that the curvature $\tilde{K}^*$ satisfies the hypotheses imposed in Theorem~\ref{thm:main}
to a target curvature on $ (\tilde{S}, \tilde{V})$. Thus there exists a PL metric $\tilde{d}'$ discrete
conformal to $\tilde{d}$ and the discrete curvature of $\tilde{d}'$ is the curvature of $\tilde{K}^*$.
It remains to show that $\tilde{d}'$ respects the doubling structure and the restriction of $\tilde{d}'$ onto
$S$ is the PL metric $d'$ with the property stated in the theorem. The key is to show that the conformal 
factor $w$ remains respecting the conformal structure, i.e., $w(h(v)) = w(v)$, and the Delaunay triangulation $T$
of $(\tilde{S}, \tilde{V})$ under metric $\tilde{d}'$ has certain symmetric property. Specifically, 
any triangle $f$ crossing an edge $ij$ on the boundary has to
have two vertices $u, u'$ so that $u'=h(u)$, and moreover, if the third vertex $v$ of the 
triangle $f$ is neither $i$ nor $j$, the neighboring triangle $f' = v'vu'$ with $v'=h(v)$ has also to cross
the edge $ij$ and two triangles $f$ and $f'$ form a cocircular quadrilateral, as shown in Figure~\ref{fig:boundary_triangles}. 
In addition, the boundary edge $ij$ remains straight after the discrete conformal deformation, 
and subdivide the crossed cocircular quadrilaterals into two identical pieces. This makes it easy 
to algorithmically cut out a copy of the deformed surface from the deformed doubled surface. 

\begin{figure}[t]
\begin{center}
\begin{tabular}{c}
\includegraphics[width=0.6\textwidth]{./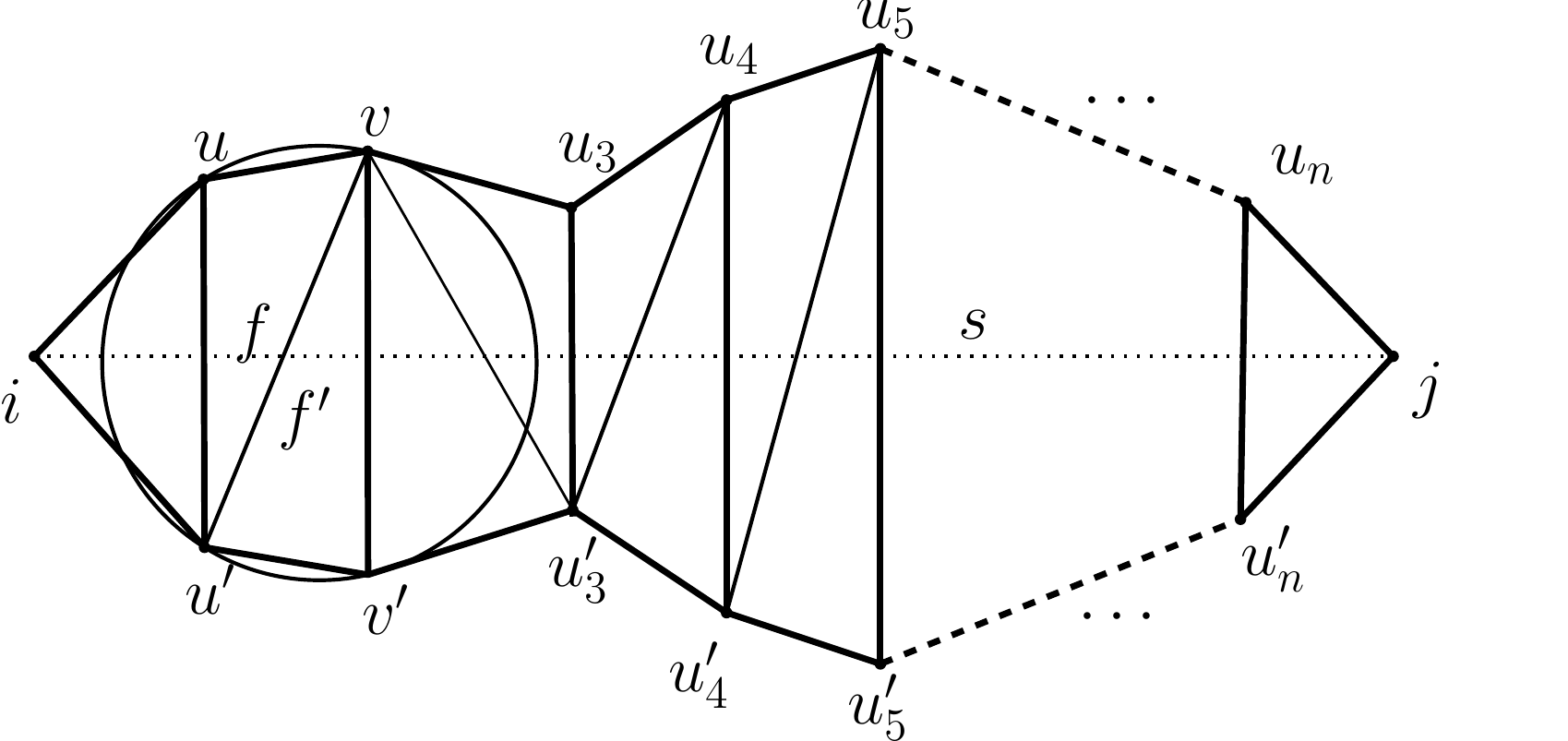}
\end{tabular}
\end{center}
\vspace{-0.1in}
\caption{The symmetric property of the triangles crossing an edge $ij$
on the boundary.
\label{fig:boundary_triangles}}
\end{figure}

\section{Convex Energy}
\label{sec:convexenergy}
In this section, we describe a convex energy for solving the problem of prescribing curvature.  
This will answer the first question raised in the introduction positively, i.e.,  there is
still a convex energy even with the seeming appearance of combinatorial variables for changing 
triangulations. Roughly speaking, as our discrete conformality only involves Delaunay
triangulations, which are canonical and determined by PL metrics, the combinatorial variables
of triangulations are not independent. 

\begin{figure}[t]
\begin{center}
\begin{tabular}{c}
\includegraphics[width=0.75\textwidth]{./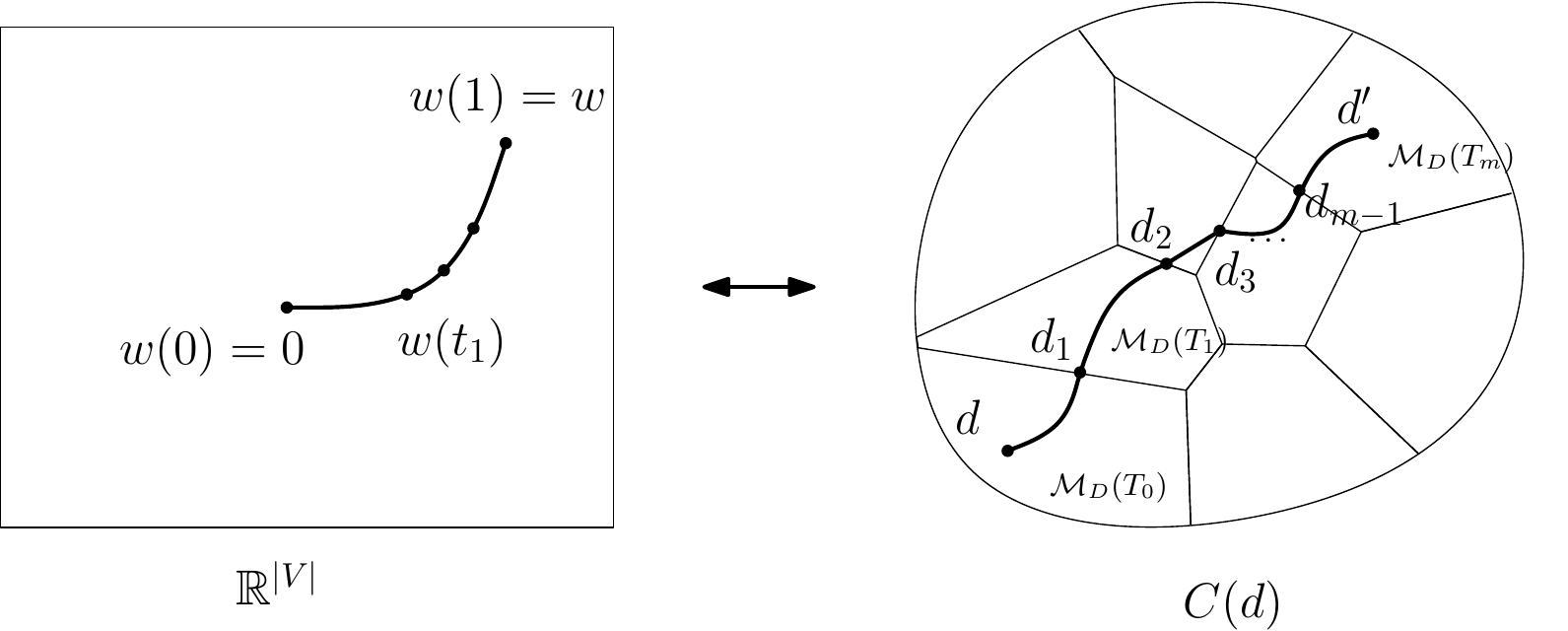}
\end{tabular}
\end{center}
\vspace{-0.1in}
\caption{1-1 correspondence between the conformal factors and the PL metrics on $(S, V)$ discrete 
conformal to a PL metric $d$. 
\label{fig:conformal_class}}
\end{figure}

Given a PL metric $d$ on $(S, V)$, we let $C(d)$ denote the space of PL metrics that
are discrete conformal to $d$. The following lemma about $C(d)$ is important. 
\begin{lemma}
\label{lem:conformalmetricspace}
There is a $C^1$ diffeomorphism from $C(d)$ to $\mathbb{R}^{|V|}$ where a point $w \in \mathbb{R}^{|V|}$ is 
understood as a discrete conformal factor on $V$. 
\end{lemma}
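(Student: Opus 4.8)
The plan is to construct the diffeomorphism explicitly, using the cell decomposition of $T_{pl}(S,V)$ by the sets $\mathcal{E}_D(T)$ together with the convex variational principle attached to vertex scaling within each cell. First I would fix a Delaunay triangulation $T_0$ of $(S,V)$ in the metric $d$, with edge length assignment $l_0$. For a conformal factor $w\in\mathbb{R}^{|V|}$, one wants to define $\Psi(w)\in C(d)$ by "flowing" $w$ through the cell structure: start vertex-scaling $l_0$ by $tw$, and whenever the deforming metric reaches a wall of $\mathcal{E}_D(T_0)$ (an edge fails to be Delaunay), perform the cocircular diagonal switch, re-express the current metric as an edge length assignment on the new triangulation, and continue. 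The first task is to show this procedure is well defined: it terminates at $t=1$ (no blow-up in finite "time"), and the resulting PL metric does not depend on the order in which simultaneous diagonal switches are performed. Here I would invoke the key fact, recorded just before Definition~\ref{dc} and proved in~\cite{glsw1}, that $\{\mathcal{E}_D(T)\}$ is genuinely a cell \emph{decomposition} of $T_{pl}(S,V)$ — in particular the cells fit together consistently along their boundaries, so cocircular switches across a lower-dimensional face are confluent. That $C(d)$ is exactly the image, i.e.\ that every metric discrete conformal to $d$ arises this way for a unique $w$, will come from Theorem~\ref{thm:main} applied not to curvature but to the "identity" target together with the additivity of conformal factors under composition of vertex scalings ($w' *_T (w*_T l) = (w+w')*_T l$), which lets one normalize and glue the factors across cells.

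Next I would establish that $\Psi$ is a bijection $\mathbb{R}^{|V|}\to C(d)$. Injectivity: if $\Psi(w)=\Psi(w')$ then the two metrics agree, and by tracking the total accumulated conformal factor along any connecting path in $T_{pl}(S,V)$ — which is well defined precisely because cocircular switches do not change the metric, only the chart — one gets $w=w'$; the rigidity part of Theorem~\ref{thm:main} is the clean way to phrase this. Surjectivity is Definition~\ref{dc} itself: a metric $d'$ discrete conformal to $d$ is connected to $d$ by a sequence of vertex scalings and cocircular switches, and concatenating the conformal factors (zero factor on the switch steps) yields a $w$ with $\Psi(w)=d'$. So as a \emph{set} map the lemma is essentially bookkeeping on top of the uniformization theorem.

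The analytic content — and the main obstacle — is the $C^1$ (indeed smooth) regularity of $\Psi$ and of $\Psi^{-1}$ \emph{across the cell walls}, where the triangulation changes combinatorially. Inside a single cell $\mathcal{E}_D(T)$, the map $w\mapsto w*_T l$ is clearly real-analytic, so the only issue is transversal to a wall where an edge $e$ is cocircular. The standard device is the $C^1$-gluing lemma for the vertex-scaling energy (this is exactly the mechanism Springborn et al.~\cite{ssp} and the authors' companion papers use): when the angles opposite $e$ in the two incident triangles sum to $\pi$, the geometry of the quadrilateral $f\cup f'$, and hence all edge lengths expressed in either diagonal, together with the relevant energy, match to first order across the switch — the ``missing'' diagonal length is determined by Ptolemy's relation for the cocircular quadrilateral, $l(e)l(e')=l(e_1)l(e_1') + l(e_2)l(e_2')$, which is smooth in the vertices and whose derivative agrees from both sides. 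I would verify this by a local computation in a neighborhood of a point of $C(d)$ lying on a single wall (a codimension-one face), reducing the multi-wall case to it by the confluence already established; the inverse map's regularity then follows from the inverse function theorem once one checks the differential of $\Psi$ is everywhere nonsingular, which is where the strict convexity of Luo's energy enters. Handling faces of codimension $\geq 2$ — several edges simultaneously cocircular, possibly sharing a vertex — is the fussiest point and is where I expect to spend the most care; the cell-decomposition structure of $\{\mathcal{E}_D(T)\}$ is what ultimately guarantees the pieces assemble into a single $C^1$ atlas on $C(d)$ in which $\Psi$ is a global diffeomorphism onto $\mathbb{R}^{|V|}$.
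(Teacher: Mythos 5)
Your plan is essentially an attempt to upgrade the paper's own heuristic into a proof, and you should know that the paper explicitly disclaims this route: Section~\ref{sec:convexenergy} states that the cell-by-cell flow description is only ``an intuitive explanation of the lemma \ldots which although does not mathematically prove it,'' and that the rigorous proof (in~\cite{glsw1}) goes through Teichm\"uller theory, namely a correspondence between PL metrics on $(S,V)$ and hyperbolic metrics on $S\setminus V$ with cusps and decorations, under which the Ptolemy relation is the identity for Penner's lambda lengths and the whole construction becomes a change of coordinates for a fixed point of decorated Teichm\"uller space. The two steps you treat as available inputs are exactly the ones that this hyperbolic picture is needed for. First, the well-definedness of $\Psi$ (independence of the final metric from the path and from the order of simultaneous switches) does not follow from the bare fact that $\{\mathcal{E}_D(T)\}$ is a cell decomposition of $T_{pl}(S,V)$; that fact says nothing about why the Ptolemy-assigned base lengths $l_{T_i}$ developed along two different paths through the walls agree at the endpoint. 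You assert ``confluence'' but give no mechanism for it; in~\cite{glsw1} it is a consequence of the lambda-length interpretation, not of the cell structure.

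Second, your injectivity argument is both circular and off target. Theorem~\ref{thm:main} is proved in~\cite{glsw1} precisely by means of this lemma: the convex energy whose strict convexity yields the rigidity there is defined on $\mathbb{R}^{|V|}$ via the correspondence $w\mapsto w*d$, so you cannot invoke the theorem's uniqueness to establish that correspondence. Moreover, the theorem's uniqueness statement compares \emph{metrics} with equal curvature up to scaling; it does not say that two distinct factors $w\neq w'$ cannot produce the same metric, which is the factor-level rigidity you need, and your fallback (``tracking the total accumulated conformal factor along any connecting path'') is again the unproved path-independence statement. A smaller but real confusion: the nonsingularity of $d\Psi$ inside a cell has nothing to do with strict convexity of Luo's energy (that concerns the curvature map $w\mapsto K(w)$, whose Hessian even has a kernel); within a cell it follows from the elementary injectivity of $w\mapsto (w(u)+w(v))_{uv\in E}$ on a triangulation. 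Finally, you yourself flag the $C^1$ matching at faces of codimension $\geq 2$ as undone. So while your outline faithfully reproduces the paper's constructive narrative (flow through cells, Ptolemy lengths for new edges, $C^1$ matching at cocircular walls), the decisive ingredients--confluence, factor rigidity, and the global $C^1$ assembly--are either asserted or borrowed circularly, which is a genuine gap; the paper closes it by a different route entirely, via the hyperbolic/decorated-Teichm\"uller correspondence of~\cite{glsw1}.
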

This lemma means there is a one-to-one correspondence between the PL metrics
discrete conformal to $d$ and all discrete conformal factors on $V$.
The energy is defined over $\mathbb{R}^{|V|}$, the space of all discrete conformal factors.  
The rigorous mathematical proof of this lemma uses the Teichm\"uller theory by 
establishing a one-to-one correspondence between PL metrics on $(S, V)$ and the 
hyperbolic metrics on $S\setminus V$ with cusps and 
decorations at $V$~\cite{glsw1}. In this paper, we will not explain this 
connection to hyperbolic metrics. Instead, we will give an intuitive explanation of
the lemma aiming at the understanding of the algorithmic aspects, which although does not 
mathematically prove it. 

The space $C(d)\subset T_{pl}(S, V)$ has a cell 
decomposition induced by that of $T_{pl}(S, V)$, where a cell is the intersection 
$C(d)\cap \mathcal{M}_D(T)$ for some triangulation $T$ on $(S, V)$. 
Note that the number of cells in $C(d)$ is finite~\cite{glsw1}. See Figure~\ref{fig:conformal_class}.
Let $T_0$ be a Delaunay triangulation in the initial PL metric $d$, and $l_{T_0}$ be the
edge length assignment with $l_{T_0}(e)=d(e)$ for any edge $e$ of $T_0$. Given a conformal factor 
$w\in \mathbb{R}^{|V|}$, let $w$ also denote a path in $\mathbb{R}^{|V|}$ from $0$ and $w$, 
that is $w: [0, 1]\rightarrow  \mathbb{R}^{|V|}$ with $w(0) =0$ and $w(1)= w$.
We have $l_{T_0} = w(0)*_{T_0}l_{T_0}$. As we move along the path $w$, we continuously 
deform the PL metric $d$ discrete conformally through vertex scaling $l_{T_0}$ by $w(t)$.
This will trace out a path $\Phi_{T_0}(w(t)*_{T_0} l_{T_0})$ in the cell $C(d)\cap \mathcal{M}_D(T_0)$. 
At some point, this path may hit the boundary of 
the cell. Assume that happened at $t=t_1$ and for example the quadrilateral $f\cup f'$ with the diagonal $e$
becomes cocircular in the metric $d_1 = \Phi_{T_0}(w(t_1)*_{T_0} l_{T_0})$.
We diagonal switch the edge $e$ to the edge $e'$ and obtain another Delaunay 
triangulation $T_1$ in $d_1$, as shown in Figure~\ref{fig:basic-operations}.
Note that $T_0$ is also Delaunay in $d_1$. Due to the well-known Ptolemy identity 
for a  cocircular quadrilateral, we have 
\begin{eqnarray}
d_1(e') &=& \frac{d_1(e_1)d_1(e'_1) + d_1(e_2)d_1(e'_2)}{d_1(e)} \nonumber \\
        &=& \frac{l_{T_0}(e_1)l_{T_0}(e'_1) + l_{T_0}(e_2)l_{T_0}(e'_2)}{l_{T_0}(e)}e^{w(t_1)(u') + w(t_1)(v')}
\end{eqnarray}
where $u'$ and $v'$ are the endpoints of $e'$.
If let $l_{T_1}$ be the edge length assignment over the 
edges of $T_1$ so that $l_{T_1}(e) = l_{T_0}(e)$ for $e\neq e'$ and 
\begin{equation}
l_{T_1}(e') = \frac{l_{T_0}(e_1)l_{T_0}(e'_1) + l_{T_0}(e_2)l_{T_0}(e'_2)}{l_{T_0}(e)}, 
\end{equation}
then we have $d_1 =\Phi_{T_1}(w(t_1)*_{T_1} l_{T_1})$. Note that $l_{T_1}(e')$ for the new edge $e'$ 
depends only on $l_{T_0}$, in particular is independent of the conformal factor $w(t_1)$.  
This answers the second question raised in the introduction on how to assign the initial edge lengths
for the new edges after diagonal switches. 
Repeat the above procedure as we continuously move along the path $w$. At the end, 
we reach a metric $d'= \Phi_{T_m}(w *_{T_m} l_{T_m})$ in the cell  $C(d)\cap \mathcal{M}_D(T_m)$ 
for some triangulation $T_m$. Mathematically, we can show that the final metric $d'$ is independent 
of the choice of path, namely if we choose another path connecting $0$ and $w$ and repeat the above
procedure, we reach the same metric $d'$. Thus $d'$ depends only on the initial PL metric $d$ and the 
conformal factor $w$. We write $d'= w*d$.  Conversely, for any PL metric $d'\in C(d)$, one can find a conformal
factor $w\in \mathbb{R}^{|V|}$ so that $d'= w*d$. To see this, from the definition of discrete conformality, 
there is a path in $C(d)$ connecting $d$ and $d'$. From the above procedure, it is easy to trace out 
a path $w: [0, 1] \rightarrow  \mathbb{R}^{|V|}$ starting at $0$ so that $w(t)*d$ is the path 
in  in $C(d)$ connecting $d$ and $d'$. This shows that there is 
a one-to-one correspondence between $\mathbb{R}^{|V|}$ and $C(d)$. This in fact answers the third question 
raised in the introduction positively, i.e., different orders of switching diagonals leading to 
the same final PL metric.

We follow Luo~\cite{luo} and define the energy as a path integral of a differential 1-form on $\mathbb{R}^{|V|}$. 
Given a PL metric $d$ on $(S, V)$, let $K: \mathbb{R}^{|V|} \rightarrow \mathbb{R}^{|V|}$ be the curvature map
so that $K(w)$ is the curvature of the PL metric $w*d$ on $(S, V)$ for any conformal factor $w\in \mathbb{R}^{|V|}$. 
Label the vertices $V$ using $1, 2, \cdots, n=|V|$. 
Let $K_i$ and $w_i$ denote the curvature $K$ and the conformal factor $w$ evaluated at the vertex $i$, respectively.
Given a Euclidean triangulation $T=(V, E, F)$ of $(S, V)$, associate each edge $ij \in E$ into two oriented 
half edges, one from $i$ to $j$ and the other from $j$ to $i$. 
Let $E_{ij}(T)$ be the set of oriented edges in $T$ starting 
at the vertex $i$ and pointing to the vertex $j$. Note that $E_{ii}(T)$ may not be empty. 
Let $E_i(T)$ be the set of oriented edges in $T$ starting from the vertex $i$, i.e.,  
$E_i(T) = \cup _{j \sim i} E_{ij}(T)$. 
For an edge $e$ shared by the triangles $f$ and $f'$, let $\alpha_e$ and $\alpha'_e$ be the angles opposite 
to $e$ in $f$ and $f'$ respectively. We have the following lemma on the curvature $K$.  
\begin{lemma}
\begin{itemize}
\item[(i)] $K_i$ is a $C^1$ function on $\mathbb{R}^{|V|}$ for any vertex $i$. 
\item[(ii)] Let $T$ be a Delaunay triangulation in the metric $w*d$ and then 
\begin{eqnarray}
\frac{\partial K_i}{\partial w_j} = \left\{
\begin{array}{rl}
& -\sum_{e\in E_{ij}(T)}(\cot\alpha_e + \cot \alpha'_e)  \text{~~~if $i\neq j$}\\
& \sum_{e\in E_i(T)} (\cot\alpha_e + \cot \alpha'_e) - \sum_{e\in E_{ii}(T)} (\cot\alpha_e + \cot \alpha'_e)  \text{~~~if  $i = j$ }
\end{array} \right.
\label{eqn:pKpw}
\end{eqnarray}
\item[(iii)] The matrix $(\frac{\partial K_i}{\partial w_j})_{i, j}$ is semi-positive definite and its null space
 only consists of constant vectors. 
\end{itemize}
\label{lem:pKpw}
\end{lemma}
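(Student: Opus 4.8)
The plan is to prove (i) and (ii) together — first computing $\nabla_w K_i$ on the interior of a single Delaunay cell, then showing the resulting formula survives a cocircular diagonal switch — and then to read off (iii) from the formula. For the first part I would fix a triangulation $T$ and work on the interior of the cell $C(d)\cap\mathcal M_D(T)\subset C(d)\cong\mathbb R^{|V|}$, where the combinatorics are frozen, the edge lengths are $e^{w_u+w_v}$ times a fixed base length, and $K_i=2\pi-\sum_{t\ni i}\theta_i^t$ with $\theta_i^t$ the angle at $i$ in the triangle $t$. Since by construction $w*d$ is a genuine PL metric for every $w\in\mathbb R^{|V|}$, all triangles of $T$ stay non-degenerate on the whole closed cell, so each $\theta_i^t$ — hence $K_i$ — is real-analytic in $w$ there. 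A chain-rule computation from the law of cosines together with the edge-projection identities recovers the classical single-triangle derivatives (as in~\cite{luo,ssp}): $\partial\theta_i^t/\partial w_i=-\cot\theta_j^t-\cot\theta_k^t$ and $\partial\theta_i^t/\partial w_j=\cot(\text{angle of }t\text{ opposite the edge }ij)$, where $j,k$ are the other two vertices of $t$. Summing over the triangles at $i$ and collecting, for each edge, the contributions of its two opposite angles — with $E_{ij}(T)$, $E_i(T)$, $E_{ii}(T)$ doing the bookkeeping for multiple and loop edges of the possibly non-simplicial $T$ — should produce exactly \eqref{eqn:pKpw} on the interior of every Delaunay cell, together with $C^\infty$-smoothness there.

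Next I would upgrade this to $C^1$ on all of $C(d)$, which simultaneously pins down \eqref{eqn:pKpw} everywhere. By Lemma~\ref{lem:conformalmetricspace}, $C(d)$ is a manifold carrying a finite cell decomposition into the closed cells $C(d)\cap\mathcal M_D(T)$, so it suffices to check that at a point $w_0$ on the face shared by $\mathcal M_D(T)$ and $\mathcal M_D(T')$ — where $T'$ is obtained from $T$ by a cocircular switch of a diagonal $e$ to $e'$ inside a quadrilateral $Q=f\cup f'$ — formula \eqref{eqn:pKpw} gives the same value with $T$ as with $T'$. Continuity of $K_i$ itself across the face is free (it is the intrinsic $2\pi$ minus the cone angle), and a function that is real-analytic on each of the two closed cells and has matching gradients on the common face is $C^1$ there; so everything reduces to comparing the two versions of \eqref{eqn:pKpw}, which differ only in the terms coming from the four triangles of $Q$ and from $e$ versus $e'$. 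At $w_0$ the four vertices of $Q$ are concyclic, and I expect the inscribed-angle theorem to settle it: if $ij$ is a side of $Q$, the angle opposite $ij$ inside $Q$ is subtended from the arc carrying the other two vertices and is unchanged by the switch; the diagonal $e$ (and likewise $e'$) contributes nothing, since its two opposite angles subtend its chord from complementary arcs, sum to $\pi$, and give $\cot\alpha_e+\cot\alpha'_e=0$; and the diagonal entries $\partial K_i/\partial w_i$, $i\in Q$, match via an identity of the form $\cot\beta_1+\cot\beta_2=\cot\beta_3+\cot\beta_4$ that again reduces to pairwise equality of inscribed angles on a common arc. A short finite case check over the affected pairs $(i,j)$ then completes the verification, so \eqref{eqn:pKpw} is independent of the chosen Delaunay triangulation; this yields (i) and (ii) and shows $M:=(\partial K_i/\partial w_j)$ depends only on the PL metric $w*d$.

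For (iii): symmetry $M_{ij}=M_{ji}$ is immediate from \eqref{eqn:pKpw}. Picking any Delaunay triangulation $T$ of $w*d$ and writing $\omega_e=\cot\alpha_e+\cot\alpha'_e$, I would rearrange \eqref{eqn:pKpw} into the weighted graph-Laplacian form $x^{\top}Mx=\sum_{e=uv\in E(T)}\omega_e(x_u-x_v)^2$, in which loop edges contribute $0$ — exactly the role of the correction term $-\sum_{e\in E_{ii}(T)}$. The Delaunay condition $\alpha_e+\alpha'_e\le\pi$ gives $\omega_e=\sin(\alpha_e+\alpha'_e)/(\sin\alpha_e\sin\alpha'_e)\ge 0$, hence $M\succeq 0$. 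The zero-weight edges are precisely the cocircular diagonals — the edges of $T$ not belonging to the Delaunay cell decomposition of $w*d$ — while the strictly positive-weight edges are exactly the edges of that cell decomposition, whose $1$-skeleton is connected and spans all of $V$ (no marked point can lie in the interior of a Delaunay cell, since the cell's circumscribed disk contains no point of $V$). So $x^{\top}Mx=0$ forces $x_u=x_v$ along every cell-decomposition edge, hence $x$ is constant on the connected surface, while $M\mathbf 1=0$ because adding a constant to $w$ merely rescales the metric and leaves all angles, hence all curvatures, unchanged. This proves (iii).

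The routine ingredient is the single-triangle differentiation; the main obstacle, I expect, is the gluing step — establishing that the explicit cotangent formula is literally invariant under a cocircular diagonal switch. That is what turns the piecewise-smooth $K_i$ into a genuine $C^1$ function and makes $M$ triangulation-independent, after which the positivity and the characterization of $\ker M$ are just bookkeeping with the Delaunay cell decomposition. The remaining care is combinatorial — handling non-simplicial triangulations, loop edges and multiple edges correctly in the $E_{ij}/E_i/E_{ii}$ sums — and the elementary observation that Delaunay triangles cannot degenerate at finite conformal factor (an angle reaches $\pi$ only when it is opposite an edge, and that edge becomes non-Delaunay strictly beforehand).
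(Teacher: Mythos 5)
Your argument for (i) and (ii) is essentially the paper's: on the interior of a Delaunay cell the formula is Luo's fixed-triangulation computation, and $C^1$-ness across a cell wall is checked by comparing \eqref{eqn:pKpw} for the two triangulations related by one cocircular switch, where the diagonal contributes $\cot\alpha_e+\cot\alpha'_e=0$ because $\alpha_e+\alpha'_e=\pi$ and each side's opposite-angle quantity is preserved by the inscribed-angle theorem; this is exactly the paper's proof, including the reduction to a single switch. Where you differ is (iii): the paper disposes of it in one line (Delaunay gives $\cot\alpha_e+\cot\alpha'_e\ge 0$, hence the matrix is diagonally dominant, with the interior case attributed to Luo), whereas you write the quadratic form as a weighted graph Laplacian $\sum_e\omega_e(x_u-x_v)^2$ with loops cancelling, and pin down the kernel by observing that the zero-weight edges are precisely the cocircular diagonals, so the strictly positive-weight edges form the $1$-skeleton of the Delaunay cell decomposition, which is connected and contains all of $V$. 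That extra argument is correct and actually supplies the null-space statement that diagonal dominance alone does not give; the paper leaves this to the cited reference, so your version is the more self-contained of the two, at the cost of a few lines of combinatorial bookkeeping.
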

\begin{proof}
For a $w \in \mathbb{R}^{|V|}$ so that $w*d$ is in the interior of a cell of $C(d)$, the above lemma was proved 
by Luo~\cite{luo}. In fact, in our setting, due to that the triangulation $T$ is Delaunay, we have  
$\alpha_e + \alpha'_e\leq \pi$ and thus $\cot\alpha_e + \cot \alpha_e'\geq 0$ for any edge $e$, which means the 
matrix $(\frac{\partial K_i}{\partial w_j})_{i, j}$ is diagonally dominant. 
So it remains to show that $K_i$ is $C^1$ on the cell boundaries.

Assume $T'$ is another Delaunay triangulation in the metric $w*d$. Since $T$ and $T'$ are related by a sequence of 
cocircular diagonal switches, we may assume $T'$ is obtained from $T$ by one cocircular diagonal switch. 
Assume the diagonal $e$ of the quadrilateral $f\cup f'$ is switched to the other diagonal $e'$, as shown in 
the right picture of Figure~\ref{fig:basic-operations}. 
For any vertex $i$, $K_i$ obviously remains the same before and after the diagonal switch. From the equation~\eqref{eqn:pKpw}, 
the evaluation of $\frac{\partial K_i}{\partial w_j}$ only involves the quantity $\cot\alpha_e + \cot \alpha'_e$ 
associated to any edge $e$. Observe that only for the sides and the diagonals of the quadrilateral $f\cup f'$, 
this quantity may differ before and after the diagonal switch. For the diagonal $e$, this quantity is $0$ in $T$ 
due to that  $\alpha_e + \alpha'_e = \pi$, and remains $0$ in $T'$ as $e$ is not an edge in $T'$, and 
similarly for another diagonal $e'$. For any side, say $e_1$ (see Figure~\ref{fig:basic-operations}), 
as the angle opposite to $e_1$ in the triangle $f$ equals the angle opposite to $e_1$ in the triangle $g'$, 
this quantity associated to $e_1$ remains the same before and 
after the diagonal switch. This shows that $K_i$ is $C^1$ for any vertex $i$. 
\end{proof}

Define a differential 1-form on the space of conformal factors as 
$\Omega(w) = \sum_{i=1}^{n} K_i(w)dw_i$ for any $w \in \mathbb{R}^{n}$
From Equation~\eqref{eqn:pKpw}, $\frac{\partial K_i}{\partial w_j} = \frac{\partial K_j}{\partial w_i}$ for 
any $i, j$, implying that $\Omega$ is closed and thus exact as the domain  $\mathbb{R}^{n}$ is simply connected. 
This means the path integral of $\Omega$ only depends on the endpoints of the path. Given a prescribed curvature 
$K^* \in \mathbb{R}^{n}$, define the energy $E$ over the space of discrete conformal factors as 
\begin{equation}
E(w) = \int_0^w\sum_{i=1}^{n} K_i(w)dw_i - \sum_{i=1}^{n} K^*_iw_i.
\end{equation}
Note that the gradient of the energy $\nabla E = (K_1 - K^*_1, \cdots, K_{n}- K^*_{n})^t$ and 
the Hessian of the energy $H(E) = (\frac{\partial K_i}{\partial w_j})_{i, j}$. From Lemma~\ref{lem:pKpw}, the Hessian 
$H(E)$ is semi-positive definite and thus the energy $E$ is convex and strictly convex restricted to the subspace 
$W = \{w\in \mathbb{R}^{n} | w_1+\cdots+w_{n} = 0\}$. If the prescribed curvature $K^*$ satisfies the conditions
stated in Theorem~\ref{thm:main}, there exists a discrete conformal factor $w^*\in W$ so that $K^* = K(w^*)$. 
This means $\nabla E(w^*)=0$, implying that $w^*$ is the unique minimum of the energy $E$ on the subspace $W$.
Thus, one can employ the Newton's method to find $w^*$ and hence the PL metric $w^* *d$ which realizes the 
prescribed curvature. 

\section{Discrete Conformal Map}
\label{sec:disconf-map}
In this section, we construct a map $\phi: (S, V, d) \rightarrow (S, V, d')$ on the same marked 
surface $(S, V)$ but with two PL metrics $d$ and $d'$ discrete conformal to each other, which
we call the discrete conformal map from $d$ to $d'$. 
In~\cite{glsw2}, given a PL metric $d$ on $(S, V)$,  we equip $(S, V)$ with another hyperbolic metric 
with cusps (but no decorations) at $V$, denoted $h(d)$. We show that $d'$ and $d$ are discrete conformal 
to each other if and only if $h(d)$ and $h(d')$ are isometric to each other by an isometry homotopy to the identity. 
The discrete conformal map $\phi$ from $d$ to $d'$ is defined as that isometry from $h(d)$ to $h(d')$. 
In this paper, instead of establishing the connection to the hyperbolic metric, we give a more constructive 
description of the discrete conformal map for the purpose of better understanding the algorithm of 
explicitly constructing the map. To make it concrete, assume the triangulations
$T=(V, E, F)$ and $T'=(V, E', F')$ are Delaunay under $d$ and $d'$, respectively. Think of the 
surface $(S, V, d)$ as the disjoint union of the Euclidean triangles in $F$  with pairs of 
edges identified by isometries, and similarly for the surface $(S, V, d')$. 
Note that the map $\phi$ restricted to $V$ is the identity map on $V$ and the task is to extend 
the map to the interiors of the edges in $E$ and the interiors of the triangles in $F$. 

\begin{figure}[t]
\begin{center}
\begin{tabular}{c}
\includegraphics[width=0.5\textwidth]{./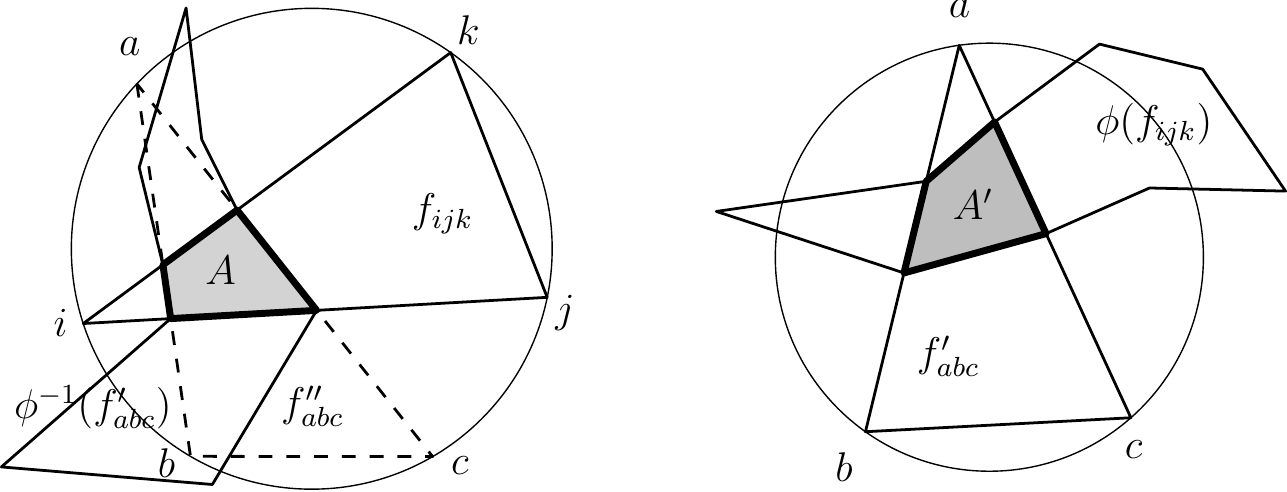}
\end{tabular}
\vspace{-0.1in}
\end{center}
\caption{The mapping triangle of a polygonal facet of $T\cup T'$. 
\label{fig:mapping_piece}}
\end{figure}

Let $w\in \mathbb{R}^{n} $ be the conformal factor so that $d' = w*d$. 
First, we consider the special case where there is a triangulation $T$ which 
is Delaunay in both $d$ and $d'$, i.e.,  $d, d'$ are in the same cell $\mathcal{M}_D(T)$.
In this case, the discrete conformal map is the so-called piecewise circumcircle
preserving projective map introduced by Bobenko et al.~\cite{bps}. 
Let the triangles $f_{ijk}$ and $f'_{ijk}$ be the same triangle in $T$ with 
the vertices $i, j, k$ and the edge lengths measured in $d$ and $d'$ respectively. The 
the map
$\phi|_{f_{ijk}}:
f_{ijk} \rightarrow f'_{ijk}$ is defined in terms of the barycentric coordinates as
\begin{equation}
\phi|_{f_{ijk}}(u_i, u_j, u_k) = (u_ie^{-2w_i}/z, u_je^{-2w_j}/z, u_ke^{-2w_k}/z)
\label{eqn:cppm}
\end{equation}
where $z = u_ie^{-2w_i}+u_je^{-2w_j} +u_ke^{-2w_k}$ is the normalizing factor. 
It is shown in~\cite{bps} the map $\phi|_{f_{ijk}}$ is a projective map from $f_{ijk}$
onto $f'_{ijk}$ which also maps the circumcircle of $f_{ijk}$ to the circumcircle
of $f'_{ijk}$. For two triangles $f_{ijk}$ and $f_{jil}$ sharing the 
edge $e_{ij}$, the maps $\phi|_{f_{ijk}}$ and  $\phi|_{f_{jil}}$ coincide on the 
common edge $e_{ij}$. Thus, we can glue the maps on individual triangles together
to form a globally continuous map, which by definition is the discrete conformal map
$\phi: (S, V, d)\rightarrow (S, V, d')$. Note that the straight line remains 
straight within a triangle under the map $\phi$ as any projective map preserves straight lines. 

Next, we consider the general case where $d$ and $d'$ may not be in the same cell in $C(d)$.
Consider a path $\gamma: [0, 1] \rightarrow C(d)$ with $\gamma(0) = d$ to $\gamma(1) = d'$. 
Let $d_1, d_2, \cdots, d_{m-1}$ be the intersections of $\gamma$  with the boundaries of the cells
in $C(d)$ listed in the increasing order of their path parameter. See Figure~\ref{fig:conformal_class}
for an illustration. For convenience, let
$d_0 = d$ and $d_{m} = d'$. For any $i=0, 1, \cdots, m-1$,  $d_{i}$ and $d_{i + 1}$ are in the 
same cell $\mathcal{M}_D(T_i)$ for some triangulation $T_i$. 
Let $\phi_i: (S, V, d_i)\rightarrow (S, V, d_{i+1})$ be the discrete conformal map from $d_i$ to
$d_{i + 1}$ defined in the above special case. Then the discrete conformal map from $d$ to $d'$ 
by definition is the compositions of the above maps $\phi = \phi_0\circ\phi_1\circ\cdots\circ\phi_m$.

We now state some properties of the discrete conformal map. For their proofs, the interested readers
are referred to the paper~\cite{glsw2}. The most important property is that the map $\phi$ is 
independent of the choice of the path $\gamma$. Namely, if we choose another path $\gamma'$, 
we may end up with a different set of maps $\phi'_0, \phi'_1, \cdots, \phi'_k$ but their composition 
gives the same map $\phi$. Therefore, the map $\phi$ is indeed a well-defined map from $d$ to $d'$. 
The second property is that a straight line on $(S, V, d)$ 
remains straight within a triangle in $T'$ under the map $\phi$ and similarly for a straight line 
on $(S, V, d')$ under the inverse of $\phi$. Another important property is
that $\phi$ remains a piecewise circumcircle preserving projective map but on the smaller pieces. 
Specifically, for two triangles $f_{ijk} \in T$ and $f'_{abc}\in T'$, let $A = f_{ijk} \cap \phi^{-1}(f'_{abc})$ 
and $A' = f'_{abc} \cap \phi(f_{ijk})$. If $A \neq \emptyset$, then $\phi(A) = A'$ and 
$\phi|_A: A\rightarrow A'$ is the restriction onto $A$ of the circumcircle preserving projective
map from a triangle $f''_{abc}$ to the triangle $f'_{abc}$. The triangle $f''_{abc}$ is constructed as 
follows.  The preimage of the edges of $f'_{abc}$ inside $f_{ijk}$ are straight segments. 
See Figure~\ref{fig:mapping_piece} for an illustration. 
Extend them linearly to intersect the circumcircle of the triangle $f_{ijk}$. One can show that there are always 
exactly three intersection points. If we labeled the intersections according to the labels of the endpoints of 
the edges in $f'_{abc}$, this constructs the triangle $f''_{abc}$, which we call the {\it mapping triangle} of $A$. 
Let $d'(st)$ and $d''(st)$ the lengths of the edge $st$ in $f'_{abc}$ and $f''_{abc}$ respectively for 
any $\{s, t\} \subset \{a, b, c\}$.  Calculate $w'_a = \left(\frac{d'(ab)d'(ac)}{d'(bc)}\right)^{1/2}$ and similarly 
for $w'_b, w'_c$. Then by replacing $w$ by $w'$ in \eqref{eqn:cppm}, we construct the circumcircle preserving
projective map from $f''_{abc}$ to $f'_{abc}$. 


For the surface $(S, V)$ with boundary, one can verify that the straight line which cuts the region 
$\cup_{f\in F_s(T_i)} f$ into two identical subregion (see Figure~\ref{fig:boundary_triangles}) is 
the image of the segment $s$ under the discrete conformal map $\tilde{\phi}$. 
Therefore, if let $\tilde{\phi}$ denote the discrete conformal map on the doubled surface, then the restriction 
of $\tilde{\phi}$ onto a copy of $S$ is a map from $(S, V, d)$ to $(S, V, d')$, which we define as the discrete 
conformal map $\phi$ from $(S, V, d)$ to $(S, V, d')$. 

\begin{figure}[!t]
\begin{center}
\begin{tabular}{c}
\includegraphics[width=0.75\textwidth]{./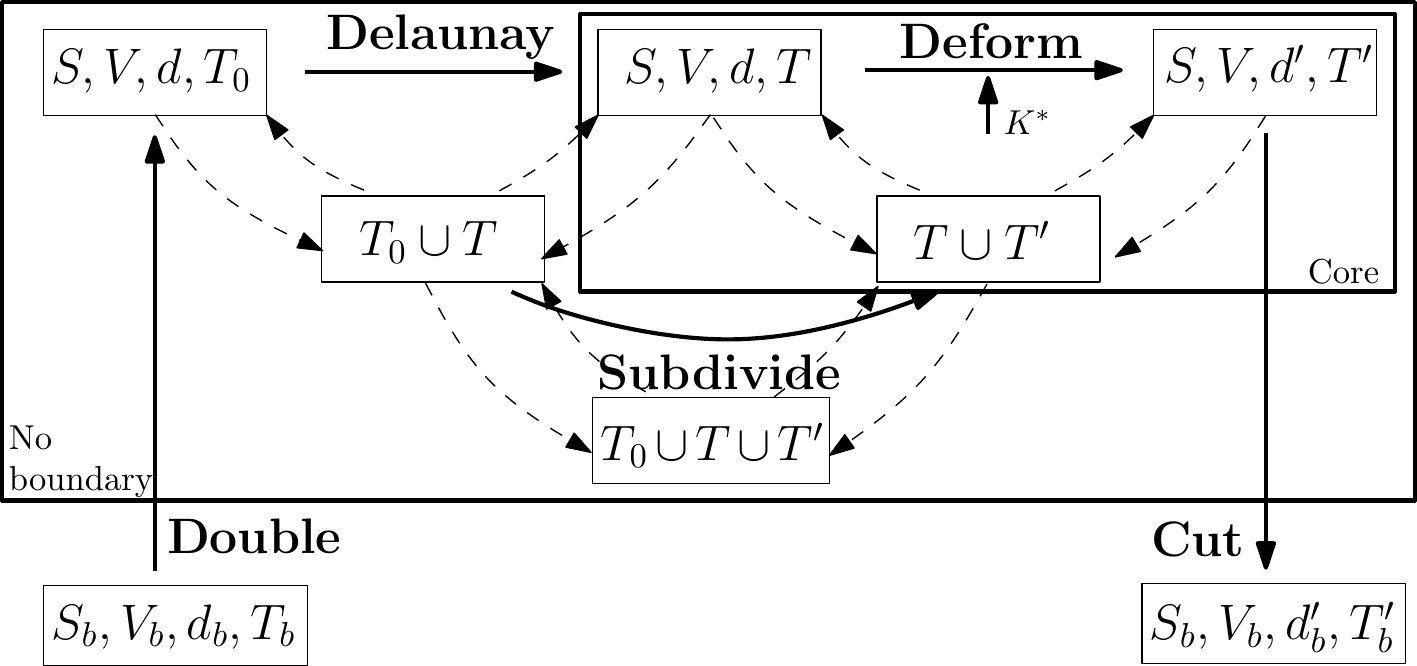}
\end{tabular}
\end{center}
\vspace{0.1in}
\caption{The main objects and the procedures of the algorithm for solving 
the problem of prescribing curvature. 
\label{fig:algorithm_overview}}
\end{figure}

\section{Algorithm}
\label{sec:algorithm}
We have presented the main ideas of the algorithm for solving the problem 
of prescribing curvature. In this section, we give more details of the algorithm at 
the implementation level.

The main objects and the procedures used in the algorithm are shown in Figure~\ref{fig:algorithm_overview}. 
In the problem of prescribing curvature, assume we are given a closed surface $S$ with 
a Euclidean triangulation $T=(V, E, F)$ which is Delaunay, and a desired curvature
$K^*: V \rightarrow \mathbb{R}$. Note that the initial PL metric $d$ on $(S, V)$
is determined by the edge lengths of the Euclidean triangles in $T$. The goal of the algorithm is: 
(1) to find a triangulation $T'=(V, E', F')$ on $S$ and an edge length assignment $l'$ 
over the edges in $E'$ so that the PL metric $d'$ on $(S, V)$ determined by $l'$
is discrete conformal to $d$ and the curvature of $d'$ equals $K^*$, 
and (2) to construct the discrete conformal map $\phi$ from $d$ to $d'$. This is the 
core of the algorithm, which is performed by the procedure ``Deform'' using the Newton's method
described in Section~\ref{sec:convexenergy}. In many applications, 
the given Euclidean triangulation $T_0$ on $S$ may not be Delaunay. The procedure 
``Delaunay'' is employed to convert $T_0$ to a Delaunay triangulation $T$ under the same PL metric
by diagonal switches, as described in~\cite{Fisher:2006}. 
Moreover, if the initial surface has boundaries, the procedure ``Double'' is to double the surface 
to remove the boundary, and the procedure ``Cut'' is to cut out a copy of the deformed surface 
from the deformed doubled surface, as described in Section~\ref{surfaceswithboundary}. 
Both the procedures ``Double'' and ``Cut'' are straightforward to implement.

The procedure ``Deform'' deforms the metric and also changes the triangulations as shown 
in Algorithm~\ref{alg:deform}, whose implementation is more involved than the other procedures. 
Note that the discrete conformal map $\phi$ is a piecewise circumcircle preserving projective map 
on the pieces of the common refinement of the triangulations $T$ and $T'$, denoted $T\cup T'$. 
Topologically, the refinement $T\cup T'$
is a also polyhedral surface whose vertices consists of the
vertices $V$ and the intersections of the edges in $T$
with the edges in $T'$. To see the geometry of $T\cup T'$,
consider the discrete conformal map $\phi: (S, V, d)\rightarrow (S, V, d')$.
According to the theory of discrete conformal mapping described in~\cite{glsw2},
an edge in $T'$ is pulled back to $(S, V, d)$ and geometrically
becomes a polygonal line which is straight inside a triangle of $T$. 
Similarly, an edge in $T$ is pushed forward
to $(S, V, d')$ and is straight within a triangle of $T'$. Therefore,
each edge of $T\cup T'$ is geometrically straight on both $(S, V, d)$ and $(S, V, d')$.
See Figure~\ref{fig:refinement} for an illustration. For instance, the triangle $v_1v_2v_{10}$ in
$T$ is subdivided into three polygonal facets in $T\cup T'$ and similarly
for the triangle $v_9v_0v_{10}$ in $T'$.
All of the involved polyhedral surfaces are represented by halfedge data structures.  
A mechanism is built for these polyhedral surfaces to communicate with each other as follows:
each edge in $T$ or $T'$ has the access to its first sub-edge in $T\cup T'$,
and each edge in $T\cup T'$ has the access to the edge in $T$ and/or $T'$
to which it belongs. In the example shown in Figure~\ref{fig:refinement}, for instance,
each halfedge of the edge $v_7v_9$ in $T$ has a pointer pointing to its first
sub-halfedge in $T\cup T'$ and similarly for the halfedges of $v_6v_8$ in $T'$.
At the same time, each sub-halfedge of $v_7v_9$ ($v_6v_8$) in $T\cup T'$ is equipped
with a pointer pointing back to the corresponding halfedge of $v_7v_9$ in $T$
($v_6v_8$ in $T'$).

\begin{figure}[!t]
\centering
\includegraphics[width = 0.6\textwidth]{./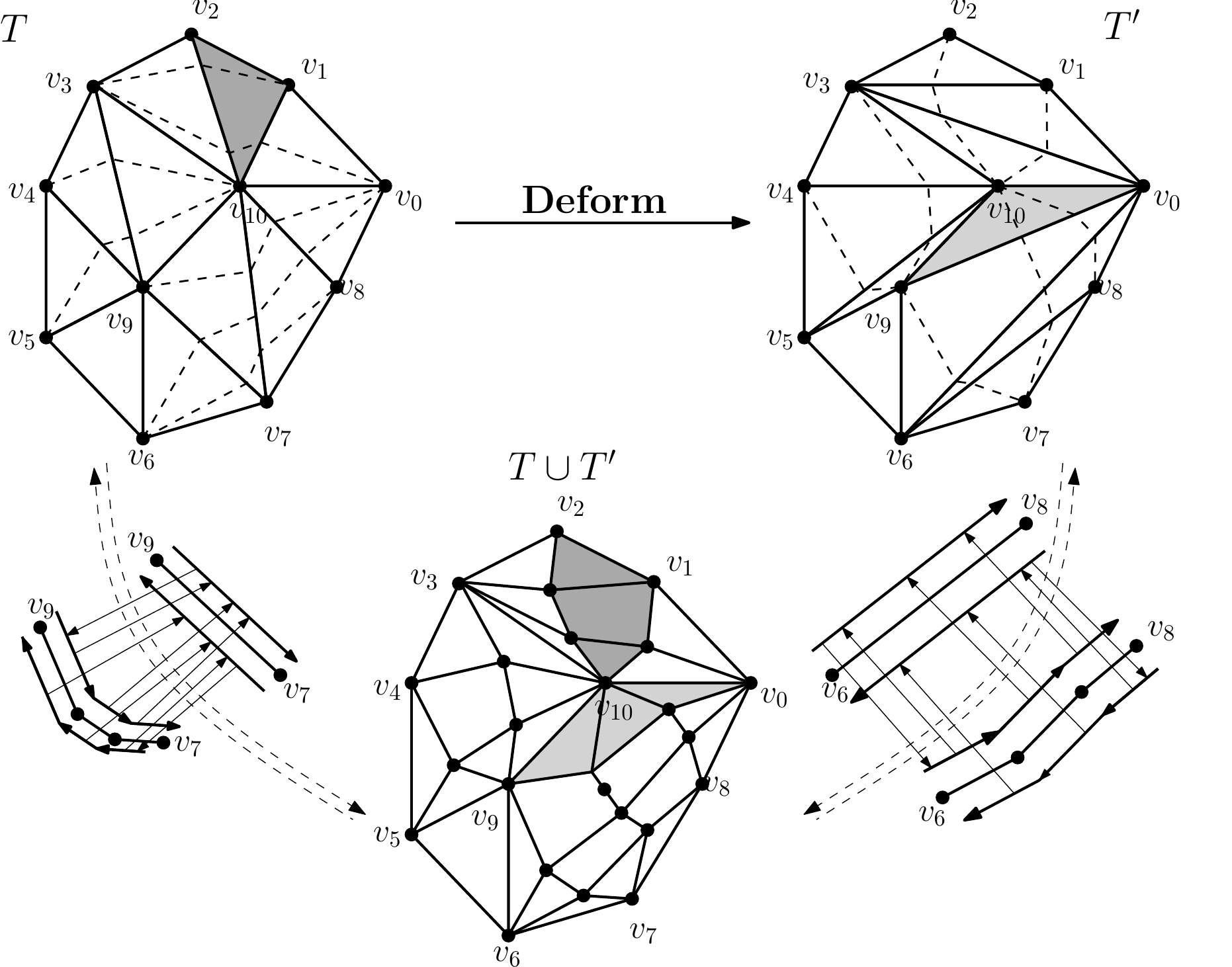}
\vspace{0.1in}
\caption{Refinement: The dotted lines in $T$  are the images
of the edges in $T'$ under the discrete conformal map from $d$ to $d'$ ,
which are straight within a triangle in $T$, and similarly for the dotted lines in $T'$.
}
\label{fig:refinement}
\end{figure}

The final PL metric is determined by the edge length assignment $w *_{T'} l_{T'}$ over the edges in $T'$.
For a vertex $v$ of $T\cup T'$ which is the intersection of an edge $e$ in $T$ and 
an edge $e'$ in $T'$ in the interior, we store both its positions on $e$ and $e'$. In this way, 
we can visualize $\phi^{-1}(e)$ for any edge $e$ in $T'$ on the input surface 
$(S, V, d)$ and $\phi(e)$ for any edge $e$ in $T$ on the deformed surface $(S, V, d')$. 
For a polygon $A$ in $T\cup T'$, we store the edge lengths of its mapping triangle
for the purpose of constructing the discrete conformal map $\phi$.  
In each iteration in the Newton's method, the conformal factor $w$ is updated to $w-\Delta w$, 
which may change the triangulation $T'$, the refinement $T\cup T'$ and $l_{T'}$ as well. 
The sub-procedure ``MoveTo'' shown in Algorithm~\ref{alg:moveto} presents more details on how to 
update the conformal factor and the combinatorial structures of $T'$ and $T\cup T'$.

Let $l_e = l_{T'}(e)$ for an 
edge $e$ in $T'$ and $x(u) =e^{-2w(u)}$. Consider an edge $e$
as shown in Figure~\ref{fig:basic-operations}.  That $e$ is Delaunay is by cosine law
equivalent to 
\begin{eqnarray}
\frac{l_{e_1}l_{e'_1}+l_{e_2}l_{e'_2}}{l_{e_1}l_{e'_2}}x(v)+ \frac{l_{e_1}l_{e'_1}+l_{e_2}l_{e'_2}}{l_{e_2}l_{e'_1}}
x(u)- \frac{l_{e}^2}{l_{e_1}l_{e_2}}x(u')- \frac{l_{e}^2}{l_{e'_1}l_{e'_2}}x(v') \geq 0, 
\end{eqnarray}
which is a linear constraint in the variables $x$. Thus if we change the variables from $w$ to $x$, the 
cell $C(d) \cap \mathcal{M}_D(T')$ becomes a convex polytope. We choose a path from $w$ to $w+\Delta w$
so that it is a line segment in the variables $x$. This makes it easy to detect which edge to switch first
as it amounts to compute the intersections of the line segment with the hyperplanes defined by the 
linear constraints.  

\begin{algorithm}[!h]
\floatname{algorithm}{Algorithm}
\caption{Deform($T=(V, E, F)$, $l_T$, $K^*$ and $\epsilon$)}
\label{alg:deform}
\begin{algorithmic}[1]
\STATE Initialize $T\cup T' = T' = T$
\STATE Set $w=0$;
\STATE Evaluate $K(w)$ and set $\nabla E = K(w) - K^*$
\WHILE {$\|\nabla E\|>\epsilon$} 
\STATE Evaluate $H(E)$ and set $\Delta w = H(E)^{-1} \nabla E$ 
\STATE MoveTo($w$, $w-\Delta w$, $T'$, $T\cup T'$, $l_{T'}$)
\STATE $w \leftarrow w-\Delta w$
\ENDWHILE
\STATE Output $T'$, $T\cup T'$, $l_{T'}$.
\end{algorithmic}
\end{algorithm}

\begin{algorithm*}[!h]
\floatname{algorithm}{Algorithm}
\caption{MoveTo($w_1$, $w_2$, $T'$, $T\cup T'$, $l_{T'}$)}
\label{alg:moveto}
\begin{algorithmic}[1]
\STATE Assume $w(t)$ with $t \in [0, 1]$ be a path satisfying $e^{-2w(t)} = (1-t)e^{-2w_1}+ te^{-2w_2}$. 
\STATE Let the edge $e$ in $T'$ be the first edge that fails the Delaunay condition along the path $w(t)$. 
\IF  {$e$ exists}
\STATE Assume $e$ fails to be Delaunay at $w(t_1)$
\STATE Switch the edge $e$, and update $l_{T'}$
\STATE Update $T\cup T'$: (i) for the newly generated polygons, compute the edge lengths of their mapping triangles, 
and (ii) for the vertices of $T\cup T'$ which are not the vertices of $T'$, 
compute their new positions on the edges of $T'$ under the edge length assignment $w(t_1)*_{T'} l_{T'}$. 
\STATE MoveTo($w(t_1)$, $w_2$, $T'$, $T\cup T'$, $l_{T'}$)
\ELSE
\STATE For the vertices of $T\cup T'$ which are not the vertices of $T'$, compute their new positions 
on the edges in $T'$ with the edge length assignment $w_2*_{T'} l_{T'}$.
\ENDIF
\STATE Output $T'$, $T\cup T'$, $l_{T'}$, and $w$.
\end{algorithmic}
\end{algorithm*}

\begin{figure}[!t]
\begin{center}
\begin{tabular}{c}
\includegraphics[width=0.9\textwidth]{./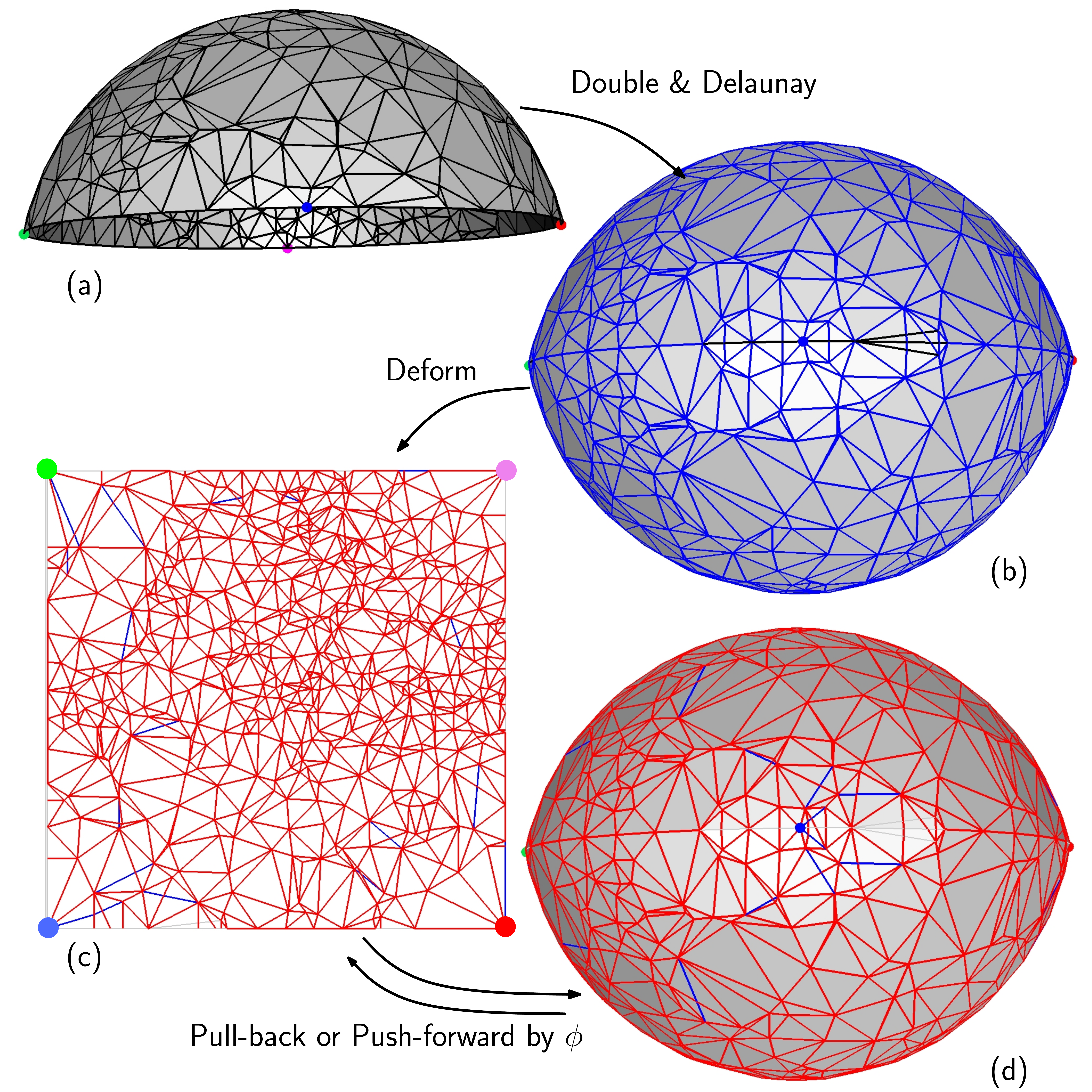}
\end{tabular}
\end{center}
\vspace{0.1in}
\caption{(a): The polyhedral surface $(S_b, V_b, d_b, T_b)$ of a spherical cap.  
(b): The doubled polyhedral surface. The Delaunay 
triangulation $T$ consists of the triangles with blue edges. 
The black edges are non Delaunay edges in the triangulation $T_0$. 
(c): Half of the doubled polyhedral surface after discrete conformal deformation. 
The Delaunay triangulation $T'$ consists of the triangles with red edges. 
The blue edges are the push-forward of the switched edges in $T$
under the discrete conformal map $\phi$. 
Note that the blue edges may not be straight. 
(d): The triangles with red edges are the pull-back of the edges in $T'$ 
under the map $\phi$. Note that the red edges may not be straight. 
\label{fig:algorithm_illustration}}
\end{figure}

Finally, the purpose of constructing the refinements $T_0\cup T$ and $T_0\cup T \cup T'$ 
is for visualization. When the input Euclidean triangulation $T_0$ on $S$ is embedded 
in $\mathbb{R}^3$, we can pull back the triangulations $T$ and $T'$ onto $T_0$ for the purpose 
of visualization.  The common refinement of the triangulations $T_0$ and $T$, denoted $T_0\cup T$,
is also computed in the procedure ``Delaunay''. The procedure ``Subdivide'' is to
compute the common refinement of the triangulations of $T_0, T$ and $T'$. In this way, we can
pull back the edges in $T$ back to $T_0$ under the identity map over $(S, V, d)$ and the edges in
$T'$ back to $T_0$ under the discrete conformal map $\phi$ from $d$ to $d'$. 

Figure~\ref{fig:algorithm_illustration} shows the results of the different procedures 
when the algorithm runs over the polyhedral surface of a spherical cap.  
In this example, we can embed the doubled polyhedral surface into $\mathbb{R}^3$
and visualize both the triangulations $T_0$ and $T$ (Figure~\ref{fig:algorithm_overview}(b)). 
Moreover, we can visualize the 
pull-back of the triangulation $T'$ under the map $\phi$ (Figure~\ref{fig:algorithm_overview}(d)). 
In addition, we set the target curvature $0$ everywhere except at four marked points on the boundary 
where it is set to be $\pi/2$. In this way, we can embed the deformed polyhedral surface 
into a rectangle as shown in Figure~\ref{fig:algorithm_overview}(c). We use the procedure described 
in~\cite{gu:2008} to layout a flat surface into the plane. 
Note in all the examples shown in the paper, we fix the $\epsilon$ in Algorithm~\ref{alg:deform} to be $10^{-5}$. 

\begin{figure}[!ht]
\begin{center}
\begin{tabular}{c}
\includegraphics[width=0.9\textwidth]{./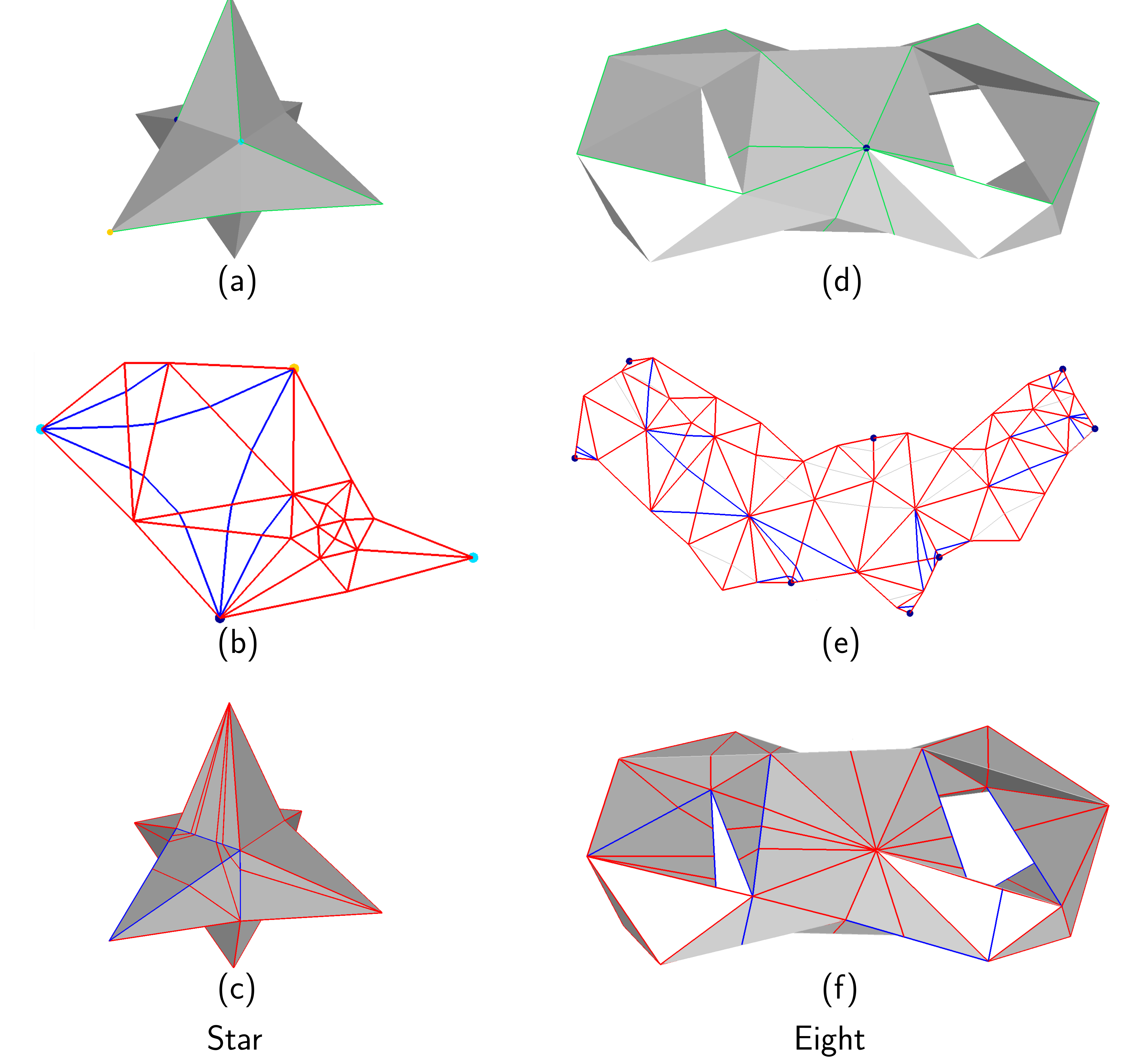}
\end{tabular}
\end{center}
\vspace{0.1in}
\caption{(a, d): The original polyhedral surfaces. The marked vertices
have non zero prescribed curvature. The green edges show a tree (a cut graph)
passing the marked vertices.
(b, e): The planar embedding of the deformed polyhedral surfaces after cutting
them open along the tree (the cut graph). 
The red edges are the edges in the triangulation $T'$ and the blue edges are the images of
the switched edges in the triangulation $T$ under the discrete conformal map $\phi$. 
(c, f): The red edges are the preimages of the edges in the triangulation $T'$ under the map $\phi$, 
the blue edges are the edges in $T$ got switched during the conformal deformation.
The gray edges in (b, c, e, f) are the non Delaunay edges in the 
triangulation $T_0$.
\label{fig:simple_examples}}
\end{figure}

\section{Experimental Results}
In this section, we will show numerical examples, demonstrate numerically 
the convergence of our discrete conformality, and compare to the state of the art. 
For convenience, we follow the notation in Figure~\ref{fig:algorithm_overview}, and 
denote $T_0$ the (doubled) input triangulation, $T$ and $T'$ the Delaunay triangulations
under the initial metric $d$ and the deformed metric $d'$ respectively. 

\subsection{Simple Examples}
\label{sec:simple-examples}
In this subsection, we show a few examples with small number of triangles 
for a clear illustration of the geometric deformation of the metric and the 
combinatorial changes of the triangulation. In the first couple of examples, 
to visualize the result metric, we prescribe the curvature to be zero except at 
a few vertices in order to satisfy the Gauss-Bonnet theorem. A vertex with non 
zero curvature is called {\it singular}. The first example is a polyhedral surface of 
topological sphere which we call Star shown in the left column of 
Figure~\ref{fig:simple_examples}. The total curvature 
of Star is $4\pi$. We choose three singular vertices as marked in 
Figure~\ref{fig:simple_examples}(a) where the curvature is set to be $\frac{4\pi}{3}$. 
To embedding the deformed
Star into the plane, we cut Star along a tree of the edges in $T'$ passing through three singular 
vertices. The tree is shown in green in Figure~\ref{fig:simple_examples}(a). 
The planar embedding of the deformed Star is shown in Figure~\ref{fig:simple_examples}(b),
where the red edges are the edges of the triangulation $T'$ and the blue edges are the images 
of the switched edges in the triangulation $T$ under the discrete conformal map $\phi$. 
The red edges in Figure~\ref{fig:simple_examples}(c) are the preimage of the edges in the 
triangulation $T'$ under the map $\phi$.  The second example is a polyhedral surface of 
genus two which we call Eight shown in the right column of Figure~\ref{fig:simple_examples}. 
The total curvature of Eight is $-4\pi$. We choose one singular vertex  as marked in 
Figure~\ref{fig:simple_examples}(b)  whose curvature is set to be  $-4\pi$.
To embedding the deformed Eight into the plane, we cut Eight
along a cut graph consisting of the edges in $T'$ passing through the singular vertex.
The cut graph is shown in green in Figure~\ref{fig:simple_examples}(d). 
The planar embedding of the deformed Eight is shown in Figure~\ref{fig:simple_examples}(e). 
The red edges and the blue edges in 
Figure~\ref{fig:simple_examples}(e, b) have the same meaning as those in Star. The gray edges
are the non Delaunay edges in the triangulation $T_0$. 

The main purpose of the next couple of examples is to show the triangulation of $T'$ when we 
prescribe a curvature close to the boundary of the domain of all possible curvatures. In both examples, 
the curvature is prescribed to be $2\pi -0.1$ at every vertex except at one vertex (labeled by $a$ 
for later reference) whose curvature is 
set to satisfy the Gauss-Bonnet theorem and usually a negative value. In the example of Star, the 
prescribed curvature at the vertex $a$ is $-22\pi + 1.4$, and in the example of Eight, it is $-58\pi +2.7$. 
In Figure~\ref{fig:extreme-curvatrue}, the red edges are the preimage of the edges in the triangulation $T'$
pulled-back by the discrete conformal map $\phi$ into the input surface. All of the triangle in $T'$ has 
$a$ as its vertex. In fact, in these two examples, at least two of three vertices of any triangle 
in $T'$ are $a$. 

\begin{figure}[!t]
\begin{center}
\begin{tabular}{cc}
\includegraphics[width=0.3\textwidth]{./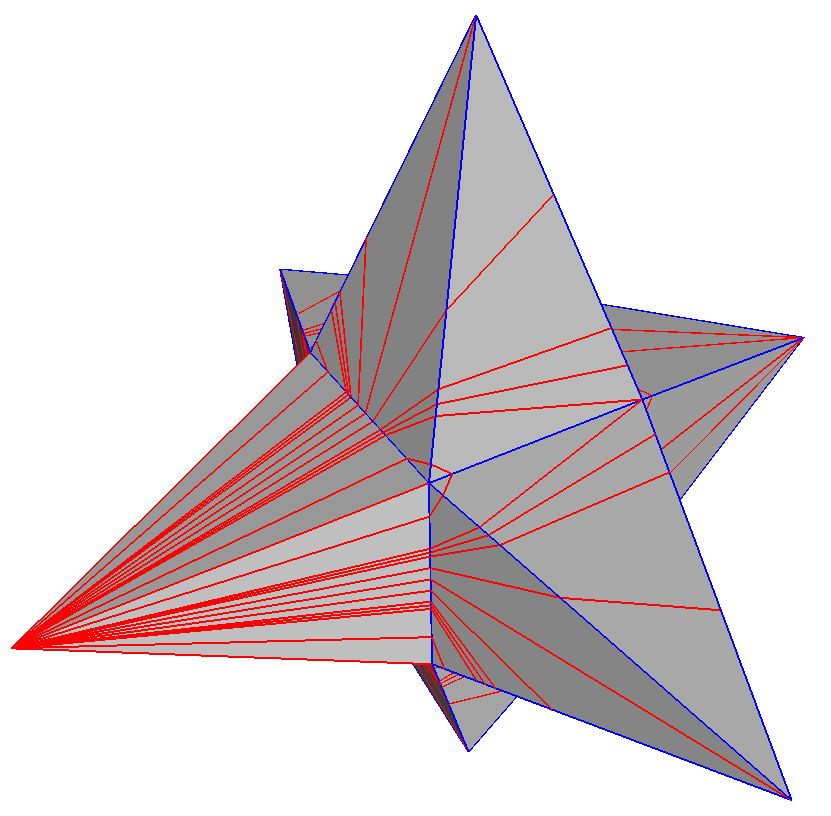} & 
\includegraphics[width=0.5\textwidth]{./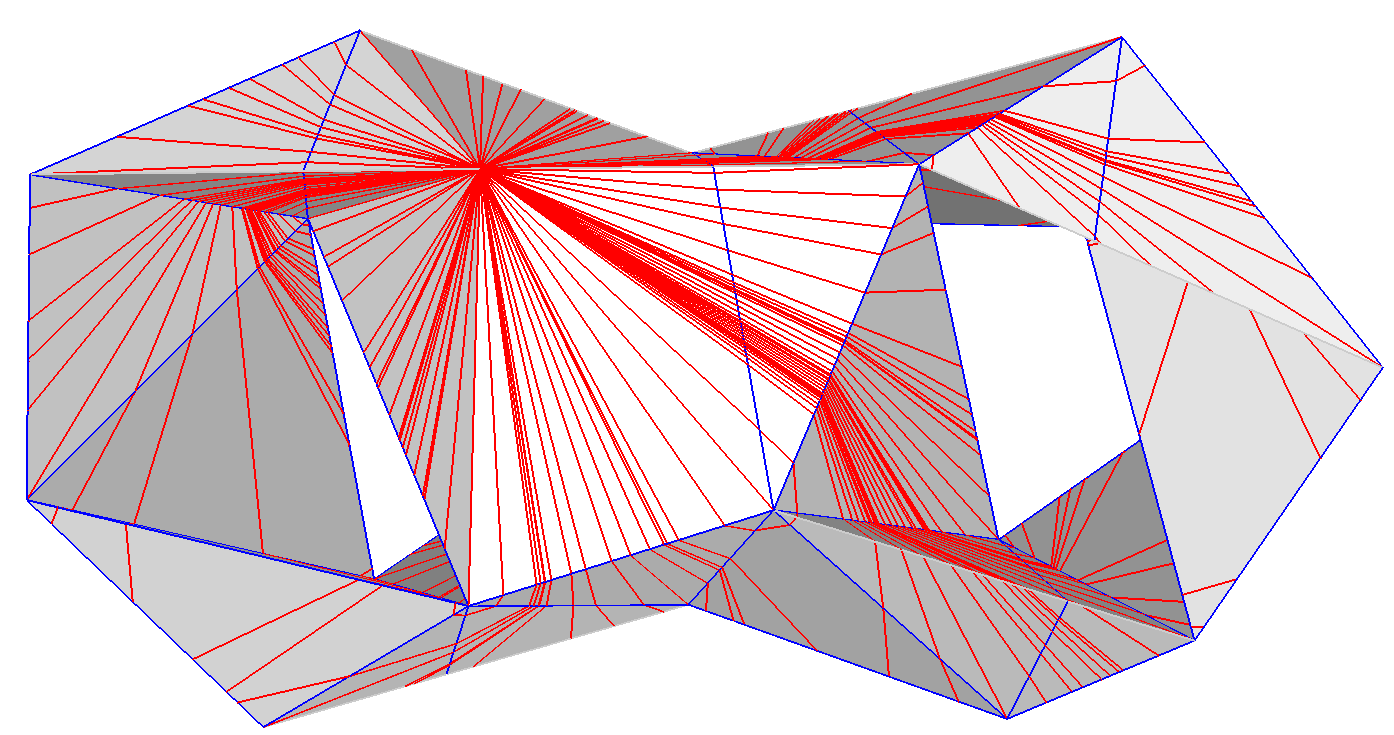}\\
Star & Eight
\end{tabular}
\end{center}
\vspace{-0.1in}
\caption{
The color scheme of the edges are the same as that in Figure~\ref{fig:simple_examples}(c, f). 
\label{fig:extreme-curvatrue}}
\end{figure}

\subsection{Convergence}
In this subsection, we will present numerical evidences showing the convergence of our discrete conformality.
In addition, we will demonstrate the efficiency and the robustness of our algorithm, in particular against 
the quality of the input triangulations, and compare its performance to the state of the art. 
We check how much the conformality is preserved when the triangulated surfaces
are flattened into the plane, and use two types of criteria to measure the conformality. 

\subsubsection{Criteria} 
For the examples where the (approximated) ground truth of conformal flattening is known, we can compare
the results with the ground truth. 
Let $u_{gt}$ be the flattening map of the (approximated) ground truth, and 
$u$ be the flattening map constructed by our algorithm or other methods described below. 
We use the following two norms to measure the approximation error:
\begin{equation}
e_2 = \left(\frac{\sum_{i\in V}\|u(i) - u_{gt}(i)\|^2 A_i}{\sum_{i\in V} A_i}\right)^{1/2},~\text{and}~~e_{\infty} = \max_{i\in V}\{\|u(i) - u_{gt}(i)\|\}.
\end{equation}
where $A_i$ is the area weight, which is estimated as a third of the total area of 
the triangles in $T$ incident to the vertex $i$. 

In general, the ground truth of conformal flattening is not known. Given an orientation 
preserving map $h$ between two Riemann surfaces, the Beltrami coefficient is
$\mu = \frac{h_{\bar{z}}}{h_{z}}$, where $z$ is a complex number representing the local coordinates. 
The map $h$ sends an infinitesimal circle to an 
infinitesimal ellipse with the ratio of major semiaxis to minor semiaxis equal
$D(h) = \frac{1+|\mu|}{1-|\mu|}$. Note $|\mu|< 1$ as the map $h$ preserves orientation. 
$D(h)$ is called the conformal distortion of the map $h$ and $D(h) = 1$ if and only if $h$ 
is conformal. So we check the conformality of the map $h$
by measuring how far $D(h)$ is away from $1$. Specifically, we estimate $\|D(h)-1\|_{L_2}$ and 
$\|D(h)-1\|_{L_\infty}$.

Let $(S, T=(V, E, F))$ be the input triangulated surface.
In the methods we described below for comparison, 
the constructed flattening map $h$ is piecewise linear, namely on a triangle $f \in F$, 
$h|_f$ is the linear extension of the map on the vertices of the triangle. 
let $L_f(z) = \alpha z + \beta \bar{z}$ represent the linear map $h|_f$. The conformal distortion 
of this linear map $L_f$ can be computed as  $D(L_f) = \frac{|\alpha| + |\beta|}{|\alpha| - |\beta|}$. 
For a piecewise linear flattening map $h$, we have
\begin{eqnarray}
\|D(h)-1\|_{L_2} &=& d_2 = \left(\frac{\sum_{f \in F}(D(L_f) - 1)^2 \text{area}(f)}{\sum_{f\in F} \text{area}(f)}\right)^{1/2},~\text{and}\\
\|D(h)-1\|_{L_\infty} &=& d_{\infty} = \max_{f\in F}\{D(L_f) - 1\}.
\end{eqnarray}
In our method, from the discussion in Section~\ref{sec:disconf-map}, the constructed flattening map $h$ 
is piecewise circumcircle preserving projective. Specifically, for a polygonal face $A$ in
the common refinement $T\cup T'$, let $f$ and $f'$ be the triangle in $T$ and $T'$ containing $A$. 
The map $h$ restricted to $A$, denoted $h|_A$, is the restriction to $A$ of the circumcircle preserving 
projective map from the mapping triangle $f''$ to the triangle $f'$.
Let $L_A(z) = \alpha z + \beta \bar{z}$ be the linear map from $f''$ to $f'$. In~\cite{glsw2}, we have shown
that $D(h|_A)\leq D(L_A)$. Therefore, for our flattening map $h$, we have the 
following upper bounds on $\|D(h)-1\|_{L_2}$ and $\|D(h)-1\|_{L_\infty}$, which are easy to estimate. 
\begin{eqnarray}
\|D(h)-1\|_{L_2} &\leq& d_2 = \left(\frac{\sum_{A \in F(T\cup T')}(D(L_A) - 1)^2 \text{area}(A)}{\sum_{f\in F(T\cup T')} \text{area}(A)}\right)^{1/2},~\text{and}\\
\|D(h)-1\|_{L_\infty} &\leq& d_{\infty} = \max_{f\in F}\{D(L_A) - 1\}, 
\end{eqnarray}
where $F(T\cup T')$ denotes the set of the polygonal faces in $T\cup T'$ and $\text{area}(A)$ denotes
the area of $A$ as a subset of the triangle $f$. 

\subsubsection{Conformal Flattening Methods}
We briefly describe three methods including ours of conformally flattening triangulated surfaces into the plane.

\vspace{0.1in}
\noindent{\bf Method of Discrete conformal Deformation (DC).} 
This flattening method is based on our discrete conformal deformation. To flatten a triangulated surface
into the plane, we basically prescribe the curvature to be $0$ and solve the problem of prescribing 
curvature using the algorithm described in Section~\ref{sec:algorithm}. Due to the obstruction of 
topology, the target curvature can not be $0$ everywhere. We call those whose curvature are not zero 
the singular vertices. For a surface of topological disk, we choose three singular vertices on the boundary
and set the curvature there to be $2\pi/3$. In this way, we flatten a triangulated surface of topological
disk onto an equilateral triangle. This flattening map is guaranteed to be one to one. For a surface of 
topological sphere, as we did in Section~\ref{sec:simple-examples} for Star, we choose three singular 
vertices whose curvatures are set to be $4\pi/3$, and cut the surface along a tree of the edges in $T'$
passing through the singular vertices for flattening the triangulated surface. 
For a surface of genus $g\geq 1$, we choose $2(g-1)$ singular vertices whose curvatures are set to be $-2\pi$. 
and cut the surface along a cut graph consisting of the edges in $T'$ and passing the singular vertices 
for flattening the triangulated surface.  

\vspace{0.1in}
\noindent{\bf Method of Holomorphic Form (HF).} 
Gu and Yau~\cite{Gu:2003} proposed a method to conformally flatten a surface of 
genus $g\geq 1$ into the plane using holomorphic one-form. Assume $h=f(z)dz$ is a 
holomorphic one-form of the surface and it is well-known that the metric 
$|f(z)|^2 dz d\bar{z}$ is conformal and flat when $f(z) \neq 0$. Noticing that 
any holomorphic one-form can be decomposed as $h = \omega + i(*\omega)$
where $w$ is a real harmonic one-form and $*w$ is its conjugate, Gu and Yau 
developed discrete algorithms for approximating from a triangulated surface
a basis $\{\omega_1, \cdots, \omega_{2g}\}$ of the space of real harmonic one-forms
and their conjugates ${*\omega_1, \cdots, *\omega_{2g}}$. Then 
$\{h_1=\omega_1 + i(*\omega_1), \cdots, h_{2g}=\omega_{2g} + i(*\omega_{2g})\}$ 
contains a basis of the space of holomorphic one-forms and any linear combination
$h=\sum_i a_i h_i$ is a holomorphic one form. Integrate the real part 
and the imaginary part of $h$ along the edges of the triangulated surface to
obtain the $x$-coordinates and the $y$-coordinates  respectively for the vertices. 
Note the $x, y$-coordinate functions computed by integration are multi-valued 
at a subset of vertices. The edges with both endpoints multi-valued form a cut-graph 
of the surface. Cut the surface along this cut-graph and obtain a fundamental domain 
of the surface. The $x, y$-coordinate functions conformally map this fundamental domain 
into a planar region. For the surfaces with boundary, one can double the surface to remove
the boundary by gluing two copies along the boundary, and apply the above algorithm and 
take half of the computed planar embedding. For a surface with topological disk, in order
to obtain an embedding onto unit disk, the following procedure is used: (1) remove a triangle
from the given triangulated surface to make an annulus; (2) apply the above algorithm to 
obtain an planar embedding of rectangular shape; (3) take exponential to map the rectangle domain
into an annulus with unit outer radius, and put back the removed triangle to obtain the final embedding
onto unit disk.  
In our experiments, we use the implementation made available to us by the authors.

\vspace{0.1in}
\noindent{\bf Method of Bounded Distortion (BD).} 
Lipman~\cite{Lipman12} considered the problem of finding a piecewise linear map $h$ mapping
a triangulated surface into the plane so that the conformal distortion $D(h)$ is less than 
some prescribed number $C$. This amounts to solve a non-convex optimization problem, 
which was reduced to a conic optimization problem by restricting the domain of optimization
to a convex subset. In~\cite{Lipman12}, Lipman also proposed a binary search strategy to find
a map with the ``optimal'' conformal distortion. Note the reduced conic optimization may miss
a map with the conformal distortion less than the prescribed number, and thus gives a wrong 
feedback to the binary search, which therefore may not reach a map with true
optimal conformal distortion. In our experiments, the following iterative procedure is used to 
find a map with small conformal distortion, which is more efficient comparing to 
the binary search strategy. In each iteration, assume a piecewise linear map $h^k$ is given,
and one construct a convex set of maps whose conformal distortion is less than 
$\|D(h^k)\|_\infty$ and then use the conic optimization to find the next map $h^{k+1}$ 
in this convex set.  The iteration is started with the Tutte embedding where the position 
of an interior vertex is the average of the positions of its neighboring vertices, and 
iterate at most 10 times. For the vertices on the boundary, we fix the positions of 
three of them and impose the linear constraints on the others so that the resulting 
range is a triangle. The optimization package MOSEK~\cite{mosek} is used for conic optimization. 
In our experiments, we use the implementation made available to us by the author.

\subsubsection{Examples}
Now we run the above conformal flattening methods over several examples to show their performance, in particular
to compare their convergence property.  

\vspace{0.1in}
\noindent{\bf Spherical Cap.} A spherical cap is a portion of a sphere cut off by a plane, which can be
conformally flattened onto unit disk by composing the stereographic projection with a scaling. So in this 
example, we have the ground truth for conformal flattening, denoted as $u_{gt}$. We run the aforementioned 
methods over four continuously refined triangulated surfaces with approximately $1000$, $4000$, $16000$ and $64000$ vertices, 
which are obtained by applying Cocone~\cite{Amenta00asimple}, a surface reconstruction algorithm, to the randomly drawn samples 
on the spherical cap and some additional samples on its boundary. Figure~\ref{fig:hemisphere_input} shows the triangulated surfaces
with $1000$ and $4000$ vertices.  

\begin{figure}[!t]
\begin{center}
\begin{tabular}{cc}
\includegraphics[width=0.42\textwidth]{./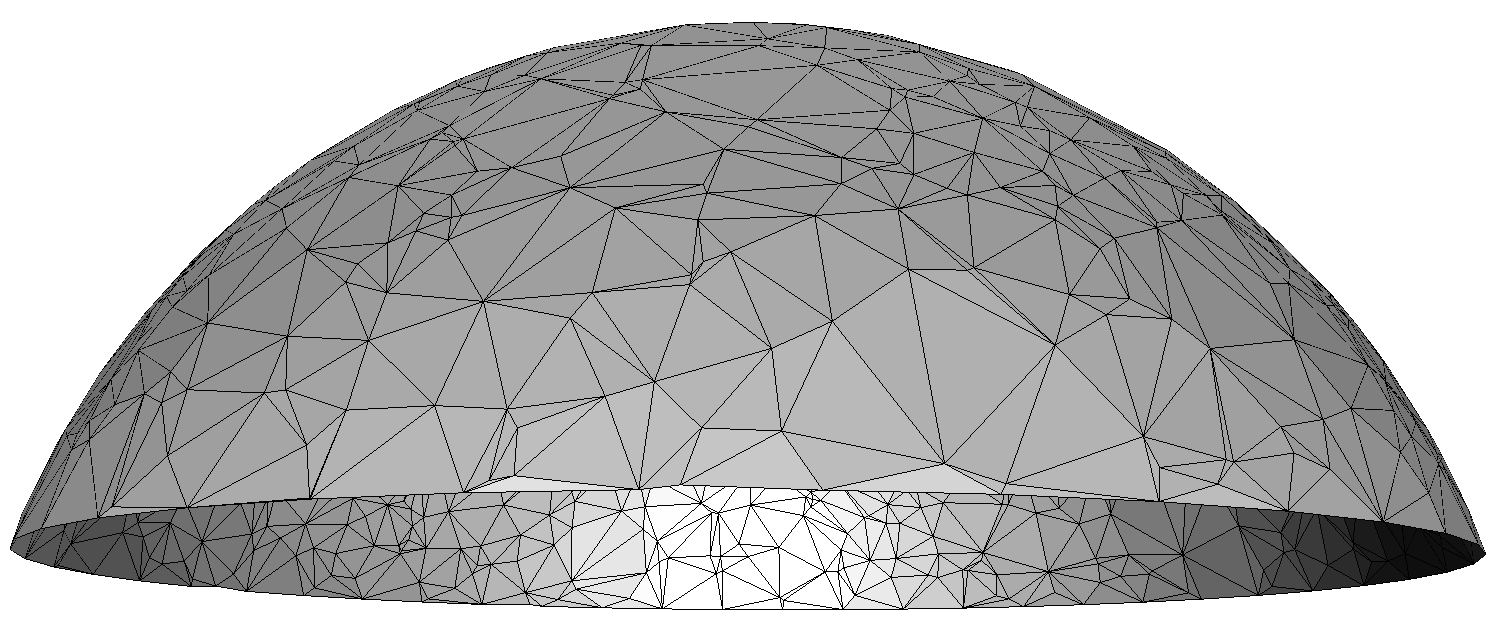} & 
\includegraphics[width=0.42\textwidth]{./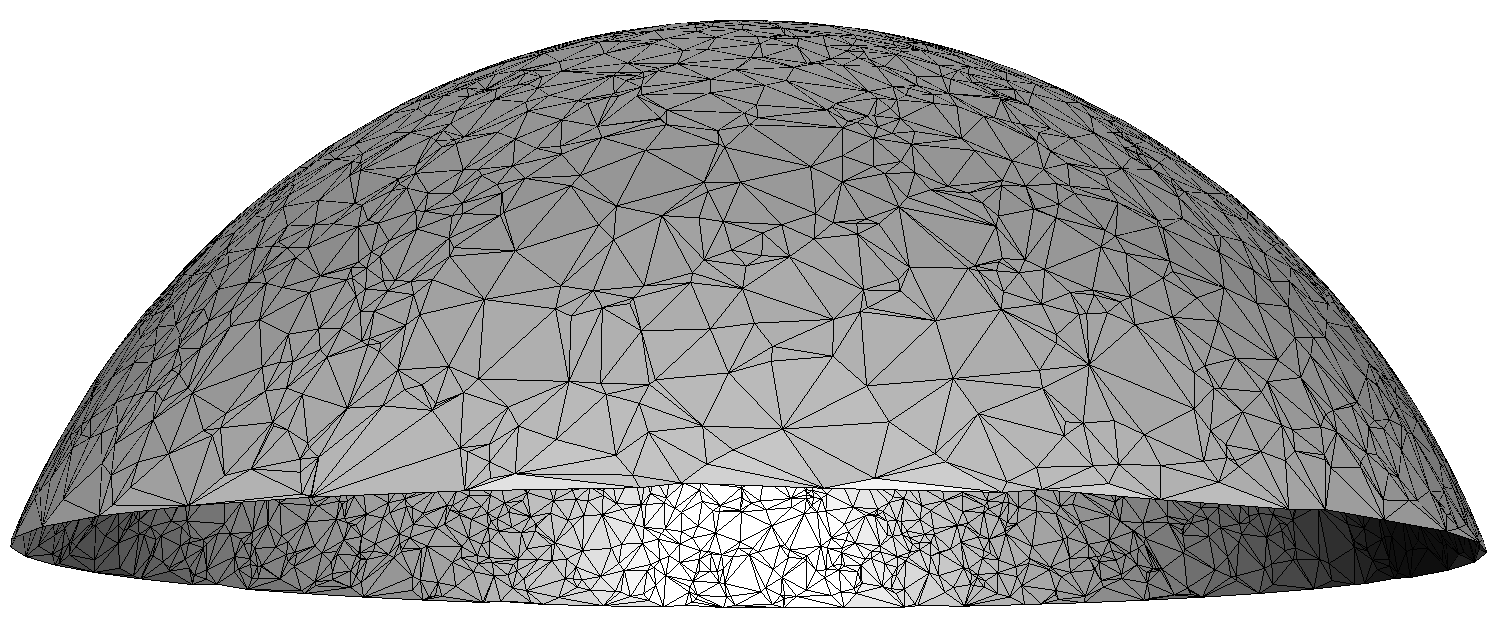}
\end{tabular}
\end{center}
\vspace{-0.1in}
\caption{The triangulated surfaces of Spherical Cap with $1000$ (Left) and $4000$ (Right) vertices. 
\label{fig:hemisphere_input}}
\end{figure}

\begin{figure}[!h]
\begin{center}
\begin{tabular}{ccc}
\includegraphics[width=0.33\textwidth]{./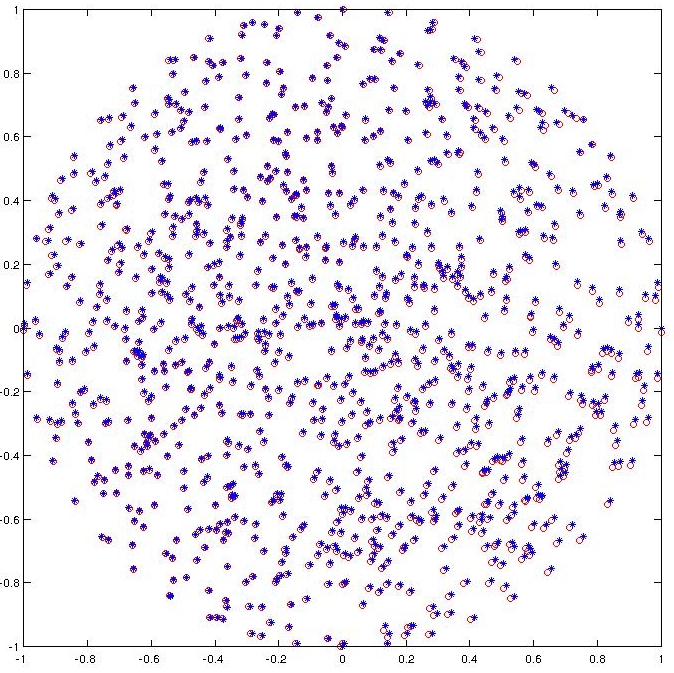} & 
\includegraphics[width=0.33\textwidth]{./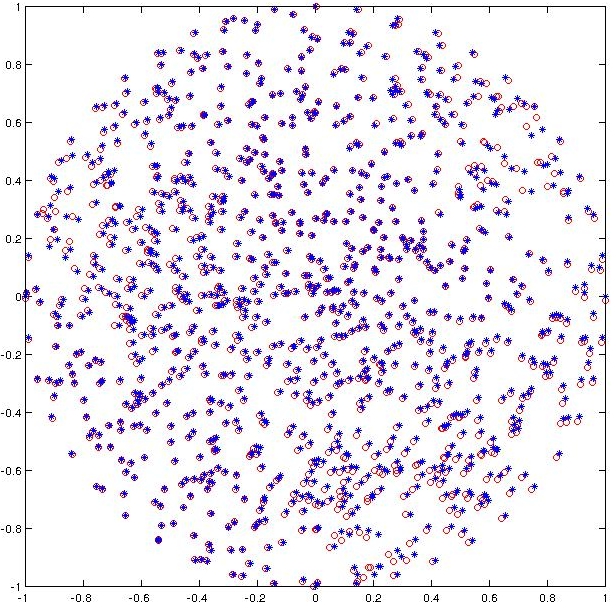}& 
\includegraphics[width=0.33\textwidth]{./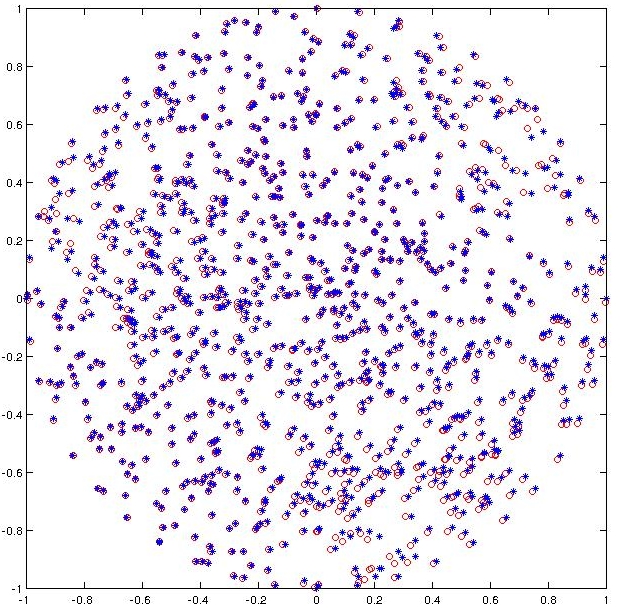} \\
\includegraphics[width=0.33\textwidth]{./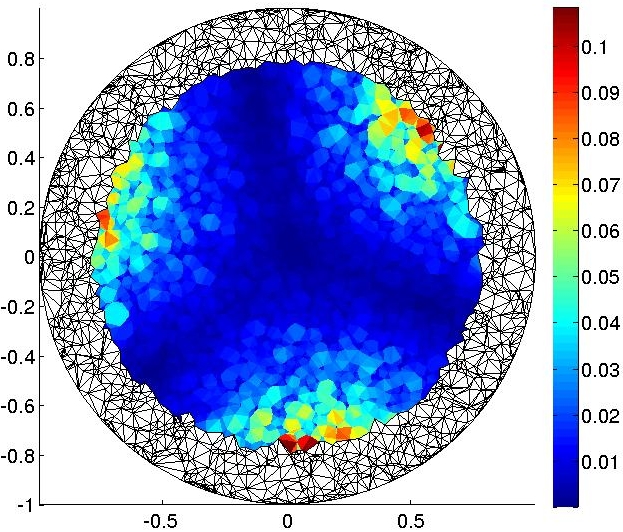} & 
\includegraphics[width=0.33\textwidth]{./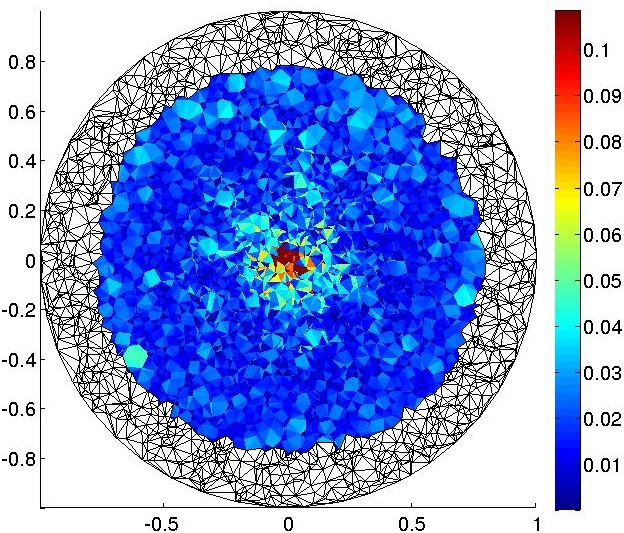} &
\includegraphics[width=0.33\textwidth]{./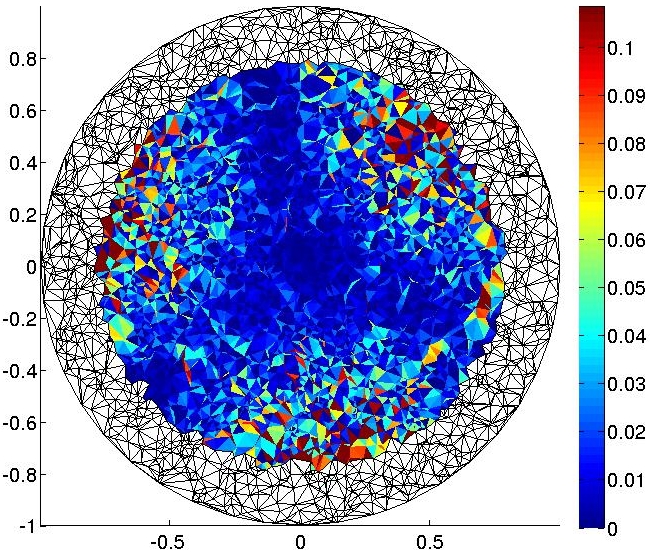} \\
{\bf DC} & {\bf HF} & {\bf BD}
\end{tabular}
\end{center}
\vspace{-0.1in}
\caption{Results for Spherical Cap: The first row shows the planar embedding of the vertices
for the triangulated surface with $1000$ vertices. The blue "*" is the ground truth and the 
red "o" is the results computed by different methods; 
The second row plots the conformal distortion $D(h)-1$ of the planar embedding computed by 
different methods from the triangulated surface
with $4000$ vertices. Note that for the purpose of comparison, the range of color map
is fixed as $[0, 0.11]$, although the maximal conformal distortion of the planar embedding 
by HF and BD is larger than $1.11$, as shown in Table~\ref{tbl:hemisphere}.
\label{fig:hemisphere}}
\end{figure}

\begin{table}[!h]
\begin{center}
\begin{tabular}{| c | c | c | c | c |}
\hline
Method  & 1000 &  4000 & 16000 & 64000   \\
\hline
{\bf DC} & (0.0102, 0.0137) & (0.0034, 0.0048) & (0.0014, 0.0020) & (0.0010, 0.0014) \\
\hline
{\bf HF} & (0.0023, 0.0078 ) & (0.0010, 0.0036) & (0.0011, 0.0030) & (0.0009, 0.0016)\\
\hline
{\bf BD}& (0.0131, 0.0300 ) & (0.0054, 0.0150) & (0.0040, 0.0092) & (0.0023, 0.0051)\\
\hline
\multicolumn{5}{|c|}{ $(e_2, e_\infty)$} \\
\hline
{\bf DC} & (0.0505, 0.1592) & (0.0303, 0.1084) & (0.0160, 0.0570) & (0.0082, 0.0377) \\
\hline
{\bf HF} & (0.0638, 12.9485 ) & (0.0270, 1.4460) & (0.0169, 1.3563) & (0.0085, 0.8276)\\
\hline
{\bf BD}& (0.0829, 0.6582 ) & (0.0547, 0.9318) & (0.0313, 1.0055) & (0.0192, 1.1988)\\
\hline
\multicolumn{5}{|c|}{ $(d_2, d_\infty)$} \\
\hline
{\bf DC} & 0.152 & 0.636 & 4.54 & 26.9 \\
\hline
{\bf HF} &  1.03 & 3.14 & 12.7 & 53.7 \\
\hline
{\bf BD}& 28.4 & 182  & 744 & 3395\\
\hline
\multicolumn{5}{|c|}{ timing (sec)} \\
\hline
\end{tabular}
\end{center}
\vspace{-0.1in}
\caption{Spherical Cap: approximation errors and running time.
\label{tbl:hemisphere}
}
\end{table}

For the methods of DC and BD, we choose three vertices $\{a, b, c\}$ on the boundary so that they are mapped to 
the vertices of an equilateral triangle. To compare with the ground truth, we use the Schwarz-Christoffel mapping
which can explicitly evaluate the  conformal transformation mapping unit disk onto a triangle. 
In fact, we use the Schwarz-Christoffel Toolbox~\cite{driscoll} to compute the inverse map sending $a, b, c$ to 
$u_{gt}(a), u_{gt}(b), u_{gt}(c)$, respectively. In the method of HF, we have already embedded the spherical
cap onto unit disk. We align the computed embedding to the ground truth by a M\"{o}bius transformation.

The first row of Figure~\ref{fig:hemisphere} shows the embedding of the vertices computed by different methods. 
In Table~\ref{tbl:hemisphere}, we show the approximation errors: $e_2$, $e_\infty$, $d_2$, $d_\infty$, and 
the timing in seconds used by different conformal flattening methods to compute the embedding. 
Note since the convergence is often stated for a compact region away from the boundary, we estimate the errors
$e_2$ and $e_\infty$ over the vertices which are mapped into the disk $D$ of radius $0.8$, and the errors
$d_2$ and $d_\infty$ over the polygons or triangles whose vertices are mapped into the disk $D$. 
In the second row of Figure~\ref{fig:hemisphere}, we only plot the conformal distortion of those polygons 
or triangles used to evaluate $d_2$ and $d_\infty$. 

From Table~\ref{tbl:hemisphere}, three methods all converge about linearly in terms of the $e_2$, 
$e_\infty$ and $d_2$ errors. In terms of the absolute value of these approximation errors, 
BD performs worse than DC and HF.  Only DC has a convergent $d_\infty$ error, which is 
approximately linear. In terms of running time, DC and HF have a similar performance, 
while $BD$ is much slower.


\begin{figure}[!t]
\begin{center}
\begin{tabular}{cc}
\includegraphics[width=0.35\textwidth]{./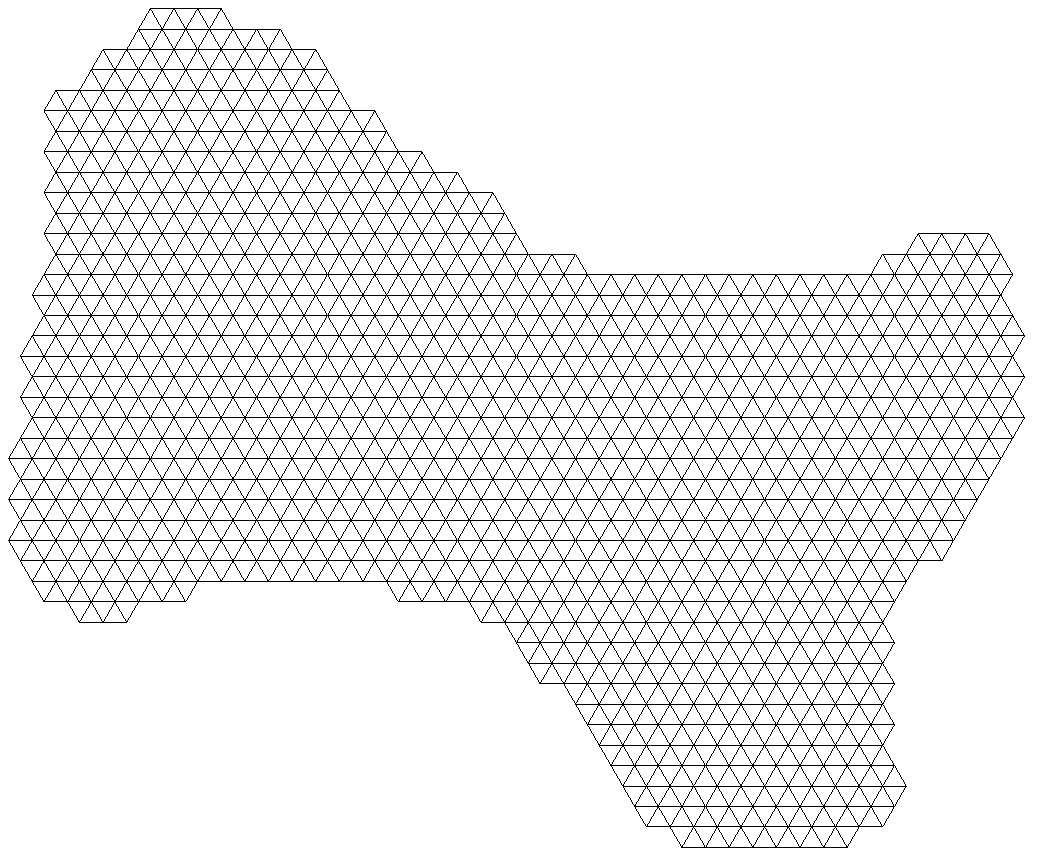} & 
\includegraphics[width=0.35\textwidth]{./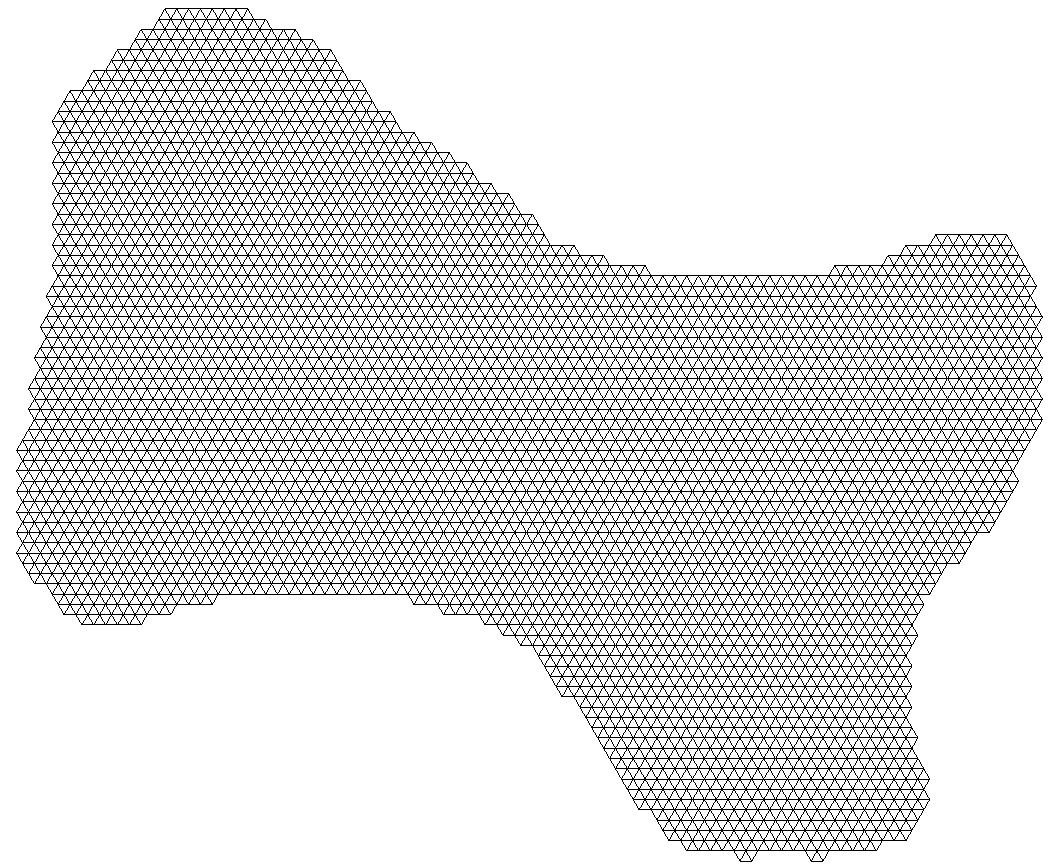} \\
\includegraphics[width=0.3\textwidth]{./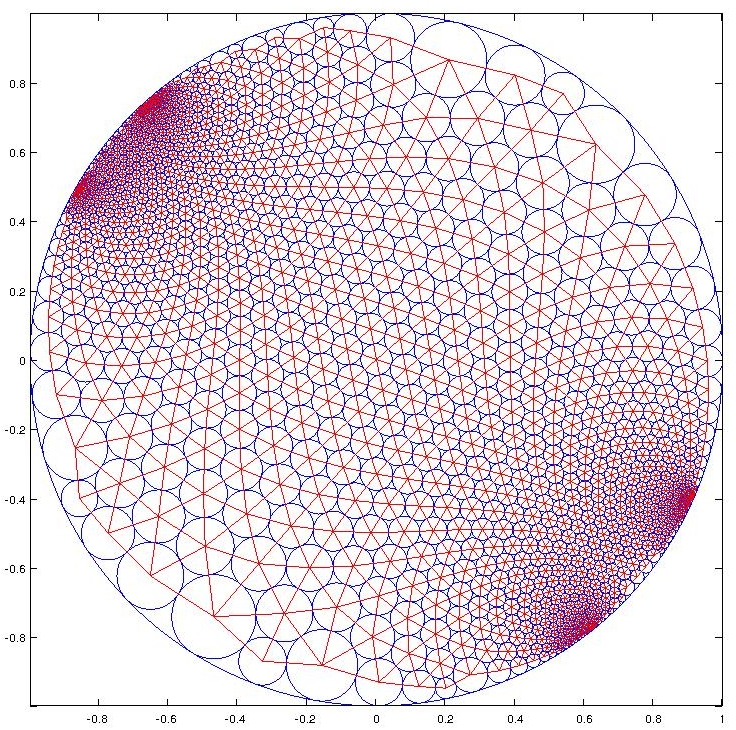} &
\includegraphics[width=0.3\textwidth]{./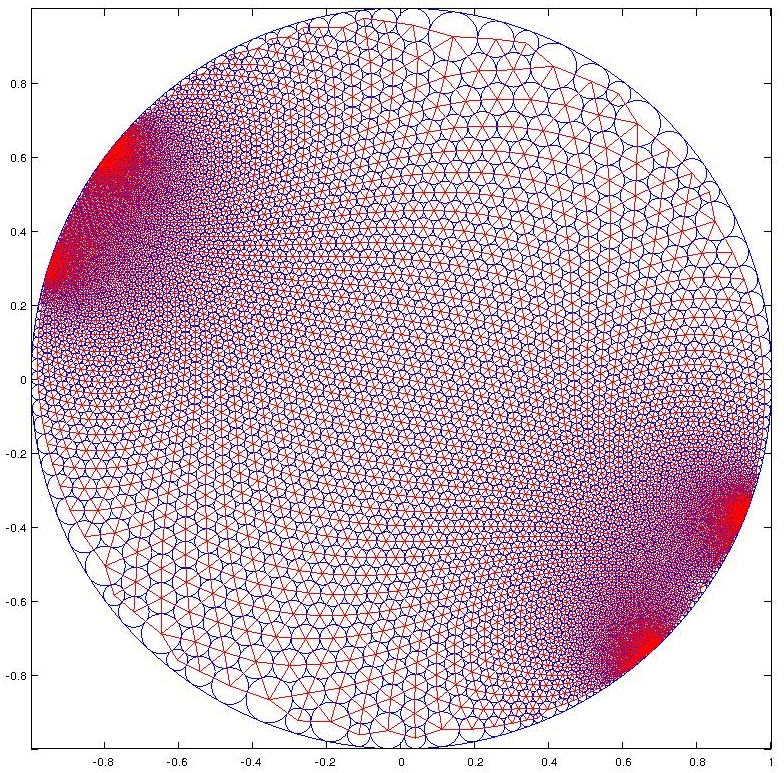}
\end{tabular}
\end{center}
\vspace{-0.1in}
\caption{Hexagonal Meshes and their circle packing. The first row: the input hexagonal triangulations with 1000 vertices (Left)
and 4000 vertices (Right). The second row: the circle packing in unit disk to the hexagonal triangulation above. 
\label{fig:hexagon_input}}
\end{figure}

\vspace{0.1in}
\noindent{\bf Hexagonal Mesh.}
The Riemann mapping from a planar region to unit disk
can be approximated using the Thurston's circle packing. 
Consider a hexagonal triangulation inside a planar region. 
One can explicitly construct a circle packing of unit disk, that is a collection of closed disks inside  unit disk 
having the following properties: (1) the interiors of the disks are disjoint; (2) the nerve of this collection 
of disks is isomorphic as graph to the $1$-skeleton of the hexagonal triangulation; (3) the boundary of the disk
corresponding to each boundary vertex of the hexagonal triangulation tangentially touches unit circle.  
This construction induces a map, denoted $u_{gt}$, from the hexagonal triangulation into unit disk by 
mapping the vertices of the hexagonal triangulation to the centers of the corresponding disks and extending 
linearly to the triangles. Figure~\ref{fig:hexagon_input} shows two hexagonal triangulations inside a fixed planar 
region and their corresponding circle packings of unit disk.
Rodin and Sullivan~\cite{RS} showed the above induced map converges to the Riemann mapping
from the planar region to unit disk as the size of the triangles in the hexagonal triangulation goes to $0$.

\begin{figure}[!t]
\begin{center}
\begin{tabular}{ccc}
\includegraphics[width=0.33\textwidth]{./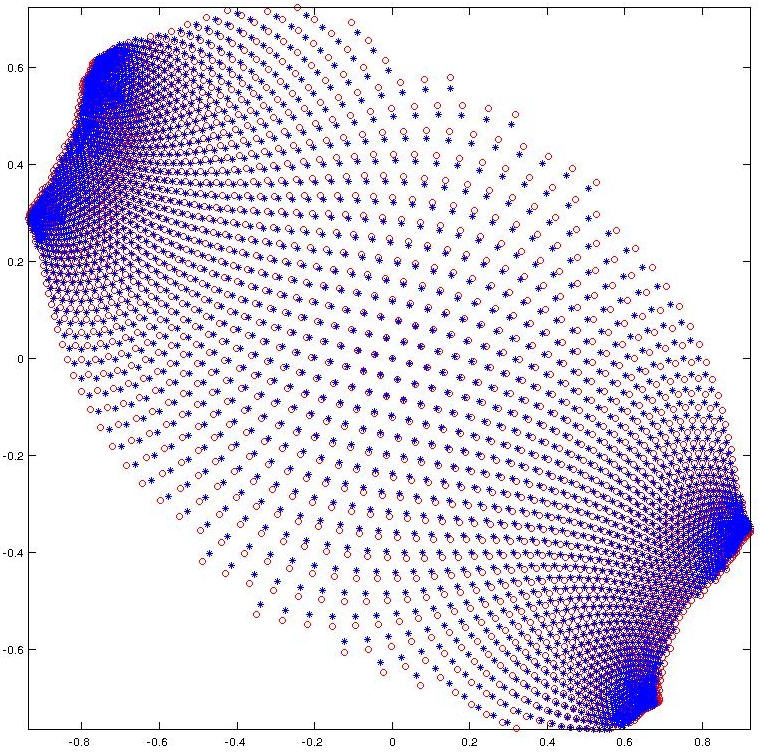} & 
\includegraphics[width=0.33\textwidth]{./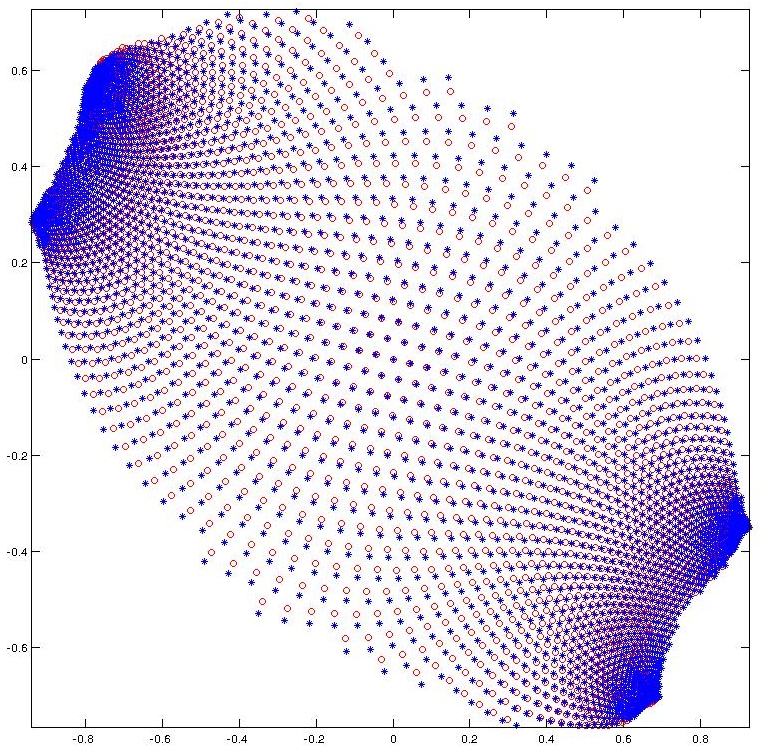}& 
\includegraphics[width=0.33\textwidth]{./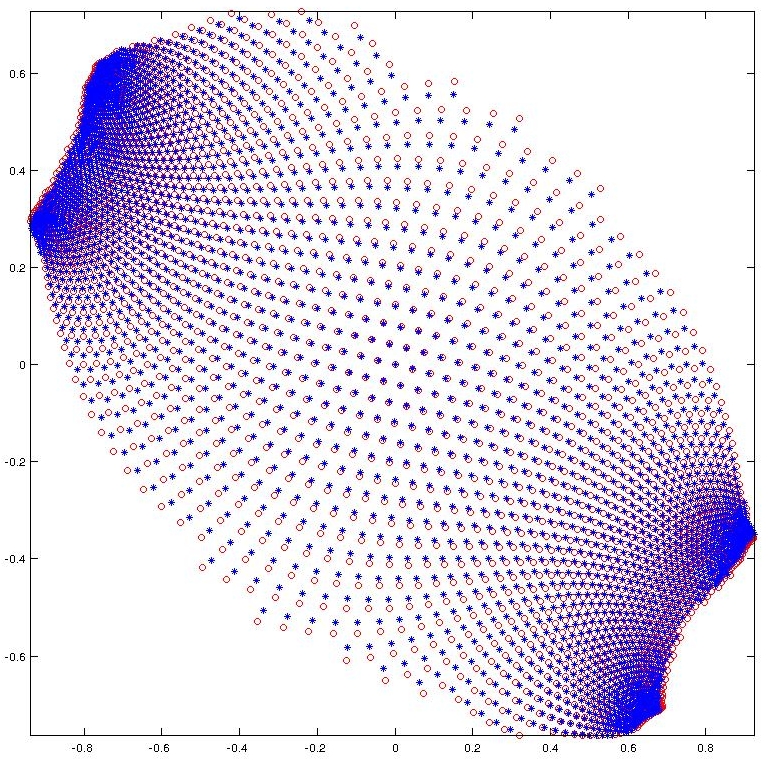} \\
\includegraphics[width=0.33\textwidth]{./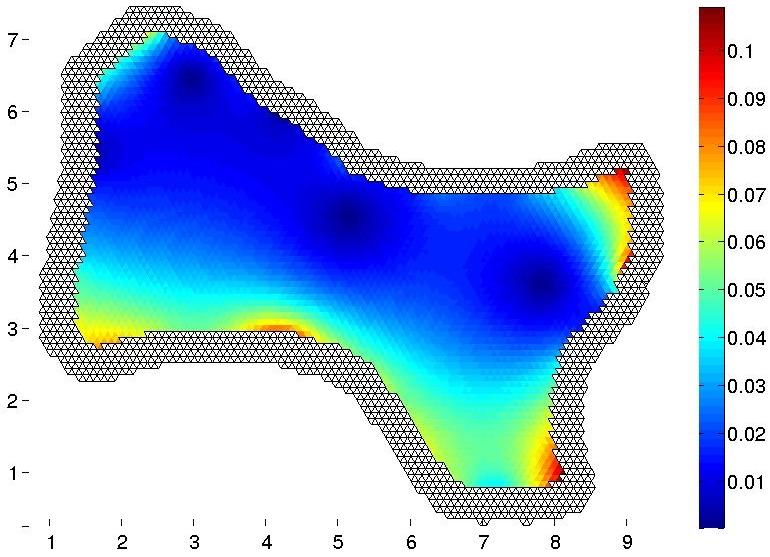} & 
\includegraphics[width=0.33\textwidth]{./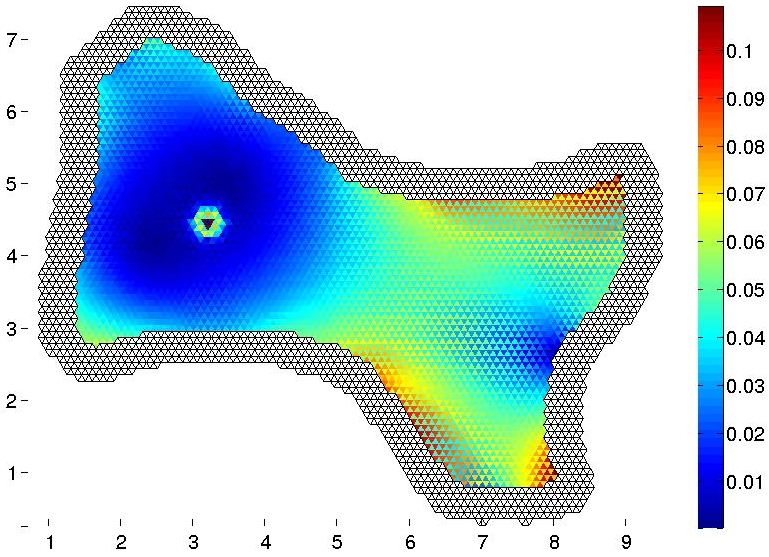} &
\includegraphics[width=0.33\textwidth]{./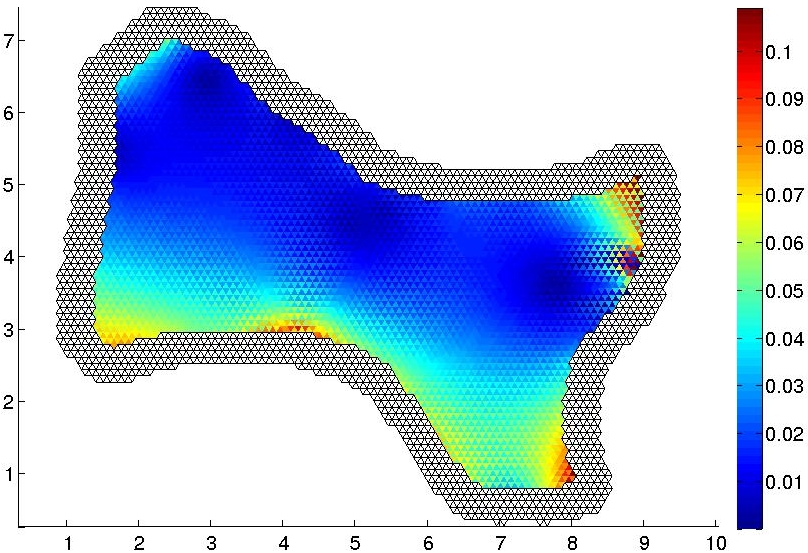} \\
{\bf DC} & {\bf HF} & {\bf BD}
\end{tabular}
\end{center}
\vspace{-0.1in}
\caption{
Results for Hexagonal Mesh: The first row shows the planar embedding of the vertices
for the triangulated surface with $1000$ vertices. The blue "*" is the ground truth and the 
red "o" is the results computed by different methods; 
The second row plots the conformal distortion $D(h)-1$ of the planar embedding computed by 
different methods from the triangulated surface with $4000$ vertices. 
Note that for the purpose of comparison, the range of color map
is fixed as $[0, 0.11]$, although the maximal conformal distortion of the planar embedding 
by HF and BD is larger than $1.11$.
\label{fig:hexagon}}
\end{figure}

\begin{table}[!h]
\begin{center}
\begin{tabular}{| c | c | c | c | c |}
\hline
Method  & 1000 &  4000 & 16000 & 64000   \\
\hline
{\bf DC} & (0.0257, 0.0486) & (0.0133, 0.0266) & (0.0067, 0.0141) & (0.0034, 0.0074) \\
\hline
{\bf HF} & (0.0275, 0.0538 ) & (0.0142, 0.0305) & (0.0070, 0.0154) & (0.0035, 0.0079)\\
\hline
{\bf BD}& (0.0273, 0.0524) & (0.0137, 0.0280) & (0.0069, 0.0153) & (0.0035, 0.0081)\\
\hline
\multicolumn{5}{|c|}{ $(e_2, e_\infty)$} \\
\hline
{\bf DC} & (0.0650, 0.2000) & (0.0326, 0.1088) & (0.0163, 0.0490) & (0.0081, 0.0259) \\
\hline
{\bf HF} & (0.0883, 0.2238 ) & (0.0418, 0.1405) & (0.0203, 0.1386) & (0.0100, 0.1379)\\
\hline
{\bf BD}& (0.1333, 0.2028 ) & (0.0781, 0.1278) & (0.0445, 0.0690) & (0.0277, 0.0344)\\
\hline
\multicolumn{5}{|c|}{ $(d_2, d_\infty)$} \\
\hline
{\bf DC} & 0.096 & 0.584 & 3.11 & 24.5 \\
\hline
{\bf HF} &  0.987 & 3.14 & 12.0 & 47.3 \\
\hline
{\bf BD}& 19.0 & 77.8  & 322 & 1821\\
\hline
\multicolumn{5}{|c|}{ timing (sec)} \\
\hline
\end{tabular}
\end{center}
\vspace{-0.1in}
\caption{Hexagonal Mesh: approximation errors and running time.
\label{tbl:hexagon}
}
\end{table}

We run the aforementioned methods over four continuously refined hexagonal triangulations of a planar region
of the side lengths $0.2$, $0.1$, $0.05$ and $0.025$. The number of vertices in those triangulations are approximately
$1000$, $4000$, $16000$ and $64000$. The first row in Figure~\ref{fig:hexagon_input} shows the input hexagonal triangulations. 
Note a circle packing in unit disk of a triangulation is not unique. We normalize the circle 
packing by choosing a vertex, denoted $o$ for later reference, from the hexagonal triangulation and mapping it to the origin.
The remaining freedom is a rotation, which however doe not affect the consistency of the error estimations.
To make the normalization consistent across different hexagonal triangulations, the four chosen vertices $o$ 
(one from each triangulation) have the same coordinate. See the second row in Figure~\ref{fig:hexagon_input} for the
resulting circle packings of the hexagonal triangulations. 

For the methods of DC and BD, we again choose three vertices $\{a, b, c\}$ 
on the boundary so that they are mapped to the vertices of an equilateral triangle, and then
we map the equilateral triangle onto unit disk by the inverse of the Schwarz-Christoffel mapping
and finally, we employ an automorphism of unit disk to obtain the map $u$ with $u(o) = 0$ and 
$\text{arg}(u(a)) = \text{arg}(u_{gt}(a))$. For the method of HF, we also apply an automorphism 
of unit disk to the computed embedding to obtain the same alignment.

Similarly, since the convergence is often stated for a compact region away from the boundary,
we estimate the errors $e_2$ and $e_\infty$ over the vertices which are more than $0.4$ away 
from the boundary of the planar region, and the errors $d_2$ and $d_\infty$ over the polygons 
or triangles with their vertices satisfying the same requirement. The first row of Figure~\ref{fig:hexagon} 
shows the embedding of the vertices computed by different methods, and the second row of Figure~\ref{fig:hexagon} 
plots the conformal distortion of the approximated Riemann mapping by different methods.
In Table~\ref{tbl:hexagon}, we show the approximation errors: $e_2$, $e_\infty$, $d_2$, $d_\infty$, and 
the timing in seconds used by different conformal flattening methods for computing the embedding. 
From Table~\ref{tbl:hexagon}, a similar pattern as Spherical Cap is observed: 
all of the methods show linear convergence in the errors $e_2$, $e_\infty$, $d_2$, and 
DC remains converging linearly in the $d_\infty$ error. In this example, BD becomes convergent linearly
in the $d_\infty$ error. This may be due to the fact that the triangles are all well-shaped in the 
hexagonal triangulations.

\vspace{0.1in}
\noindent{\bf Planar Region.}
The main purpose of this example is to show how the quality of the input triangulation affects 
the conformality. We generate a triangulation with $1500$ vertices of a planar region as shown on 
the left in Figure~\ref{fig:plane_triangulation_input}, and then subdivide the triangulation three times 
by adding the midpoints of the edges and splitting each triangle into four smaller ones, and finally
obtain three more continuously refined triangulations of the planar region with approximately 
$5000$, $20000$ and $80000$ vertices. The right picture in Figure~\ref{fig:plane_triangulation_input}
shows the one with about $5000$ vertices. There are a few triangles, in particular near the boundary, having 
the largest angle close to $\pi$.
\begin{figure}[!t]
\begin{center}
\begin{tabular}{cc}
\includegraphics[width=0.4\textwidth]{./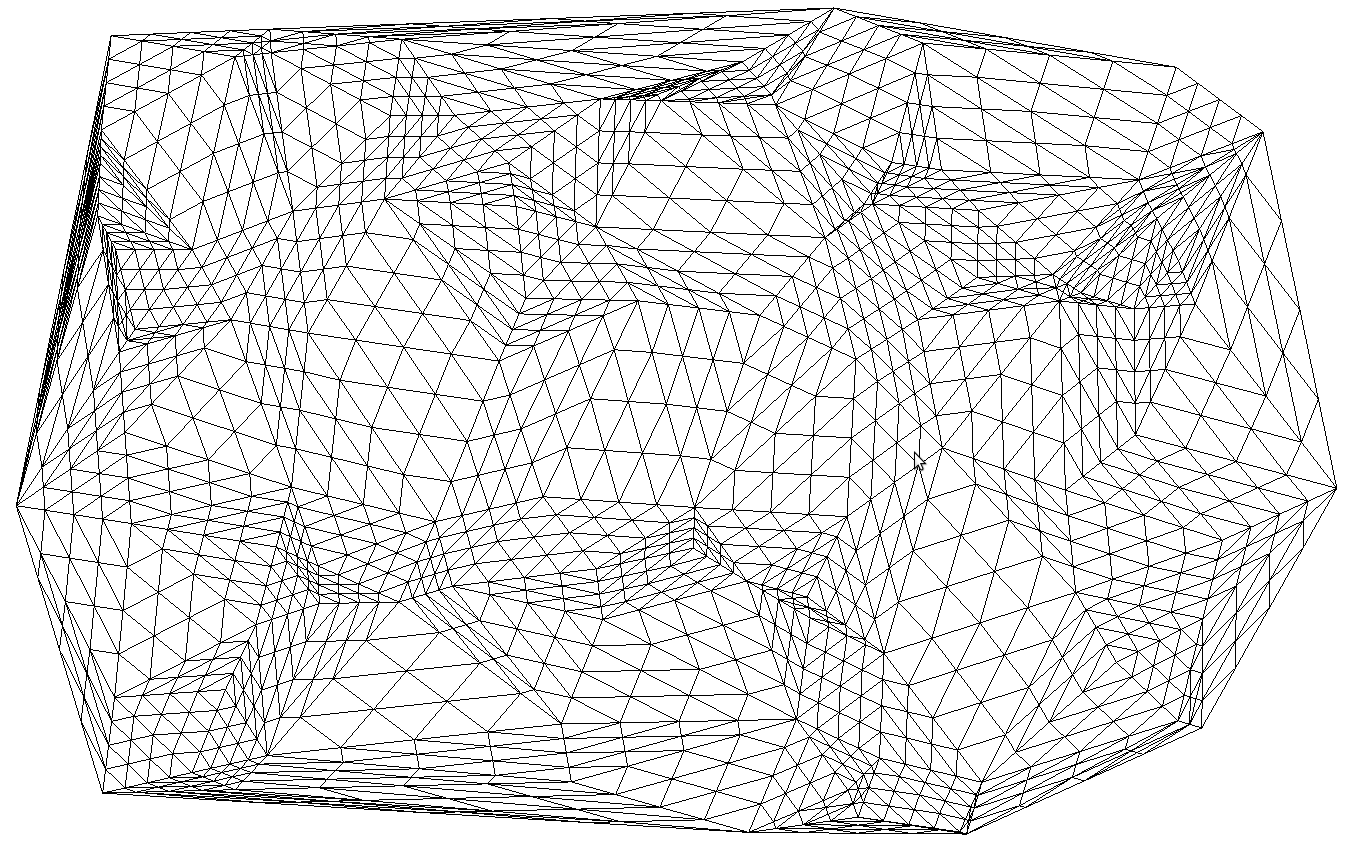} & 
\includegraphics[width=0.4\textwidth]{./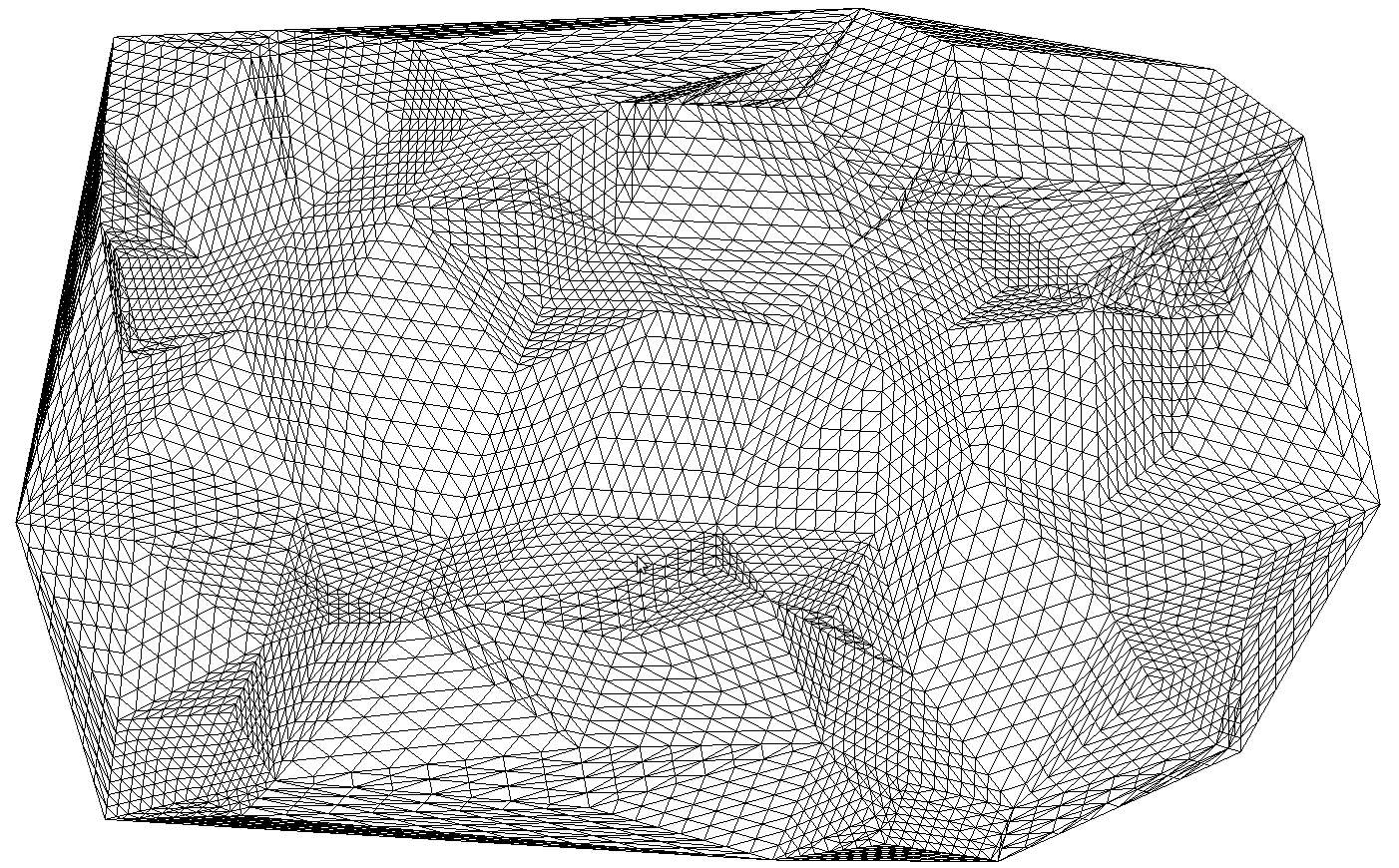} \\
\end{tabular}
\end{center}
\vspace{-0.1in}
\caption{The input triangulations of Planar Region with  $1500$ vertices (Left) 
and $5000$ vertices (Right). 
\label{fig:plane_triangulation_input}}
\end{figure}

\begin{figure}[t]
\begin{center}
\begin{tabular}{ccc}
\includegraphics[width=0.33\textwidth]{./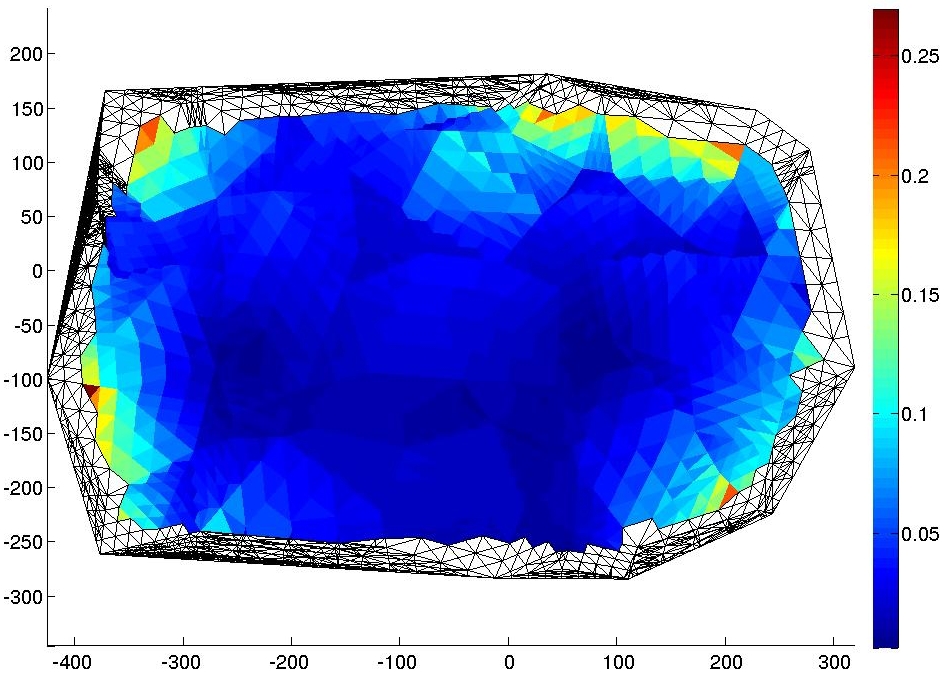} & 
\includegraphics[width=0.33\textwidth]{./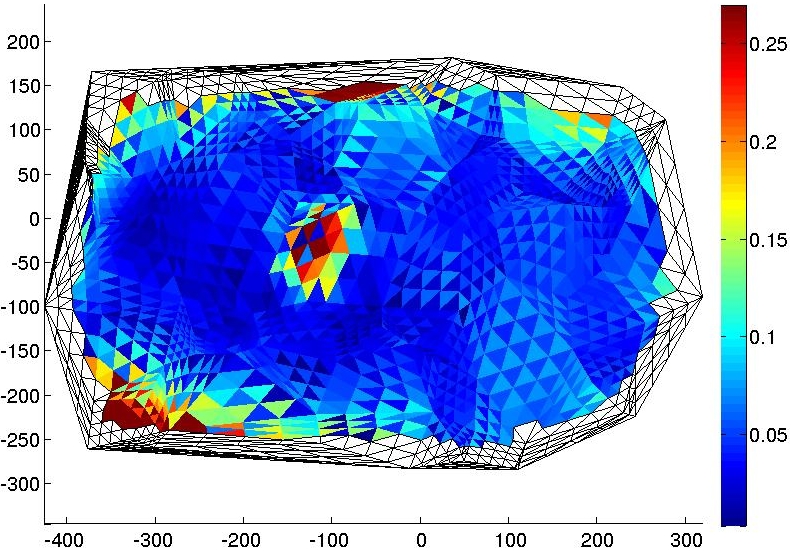} &
\includegraphics[width=0.33\textwidth]{./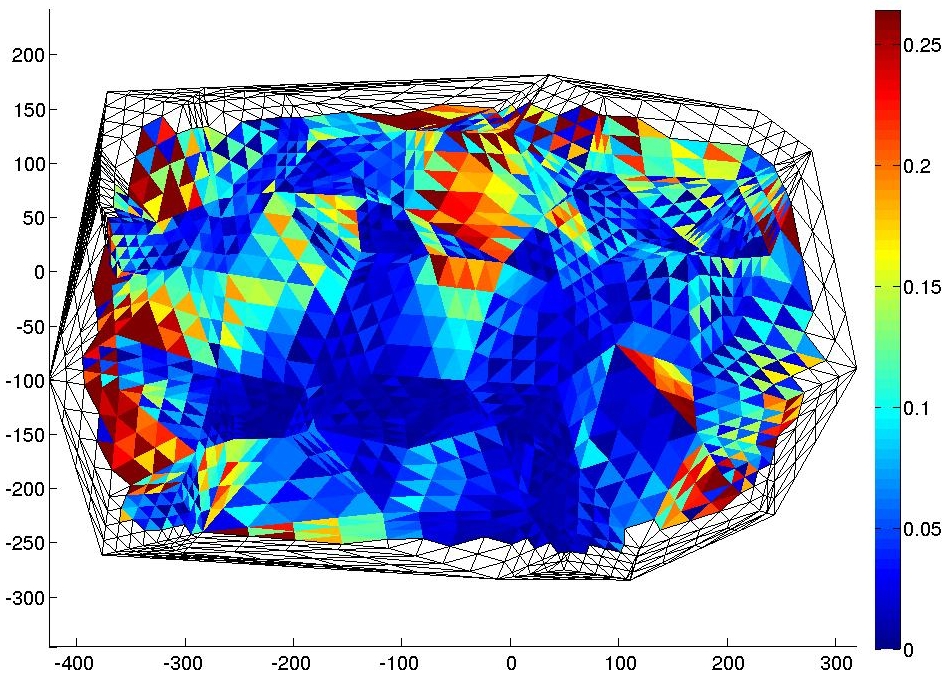} \\
{\bf DC} & {\bf HF} & {\bf BD}
\end{tabular}
\end{center}
\vspace{-0.1in}
\caption{
Results for Planar Region: the plots the conformal distortion $D(h)-1$ of the planar embedding computed by 
different methods from the triangulation with $1500$ vertices. 
For the purpose of comparison, the range of color map is fixed as $[0, 0.27]$, 
although the maximal conformal distortion of the planar embeddings computed 
by HF and BD is larger than $1.27$. 
\label{fig:plane_triangulation}}
\end{figure}

\begin{table}[!h]
\begin{center}
\begin{tabular}{| c | c | c | c | c |}
\hline
Method  & 1500 &  5000 & 20000 & 80000   \\
\hline
{\bf DC} & (0.0553, 0.2692) & (0.0286, 0.1292) & (0.0144, 0.0738) & (0.0072, 0.0401) \\
\hline
{\bf HF} & (0.0881, 0.9840 ) & (0.0473, 0.5649) & (0.0210, 0.5558) & (0.0093, 0.5570)\\
\hline
{\bf BD}& (0.1333, 1.307) & (0.0781, 1.285) & (0.0445, 1.482) & (0.0277, 1.324)\\
\hline
\multicolumn{5}{|c|}{ $(d_2, d_\infty)$} \\
\hline
{\bf DC} & 0.096 & 0.584 & 3.11 & 24.5 \\
\hline
{\bf HF} &  0.987 & 3.14 & 12.0 & 47.3 \\
\hline
{\bf BD}& 74.2 & 227  & 940 & 4204\\
\hline
\multicolumn{5}{|c|}{ timing (sec)} \\
\hline
\end{tabular}
\end{center}
\vspace{-0.1in}
\caption{Planar Region: Approximation errors and running time.
\label{tbl:plane_triangulation}
}
\end{table}

We run the aforementioned methods over these triangulations.  
The methods of DC and BD map them onto an equilateral triangle while
the method HF map them onto unit disk. In this example, we do not have 
the ground truth and thus only estimate $d_2$ and $d_\infty$ errors as 
shown in Table~\ref{tbl:plane_triangulation}. Note the errors are estimated over the 
polygons or triangles with their vertices more than $1/20$ of the diameter
of the planar region away from the boundary.
Figure~\ref{fig:plane_triangulation} shows the conformal distortion 
by different methods from the triangulation with $1500$ vertices.
As we can see, the method of DC converges linearly
in both the $d_2$ and $d_\infty$ errors, and the methods of HF and BD only converge 
in the $d_2$ error. In the embedding computed by HF, there are some triangles close to 
the boundary whose orientations get reversed. 

\begin{figure}[t]
\begin{center}
\begin{tabular}{cc}
\includegraphics[width=0.3\textwidth]{./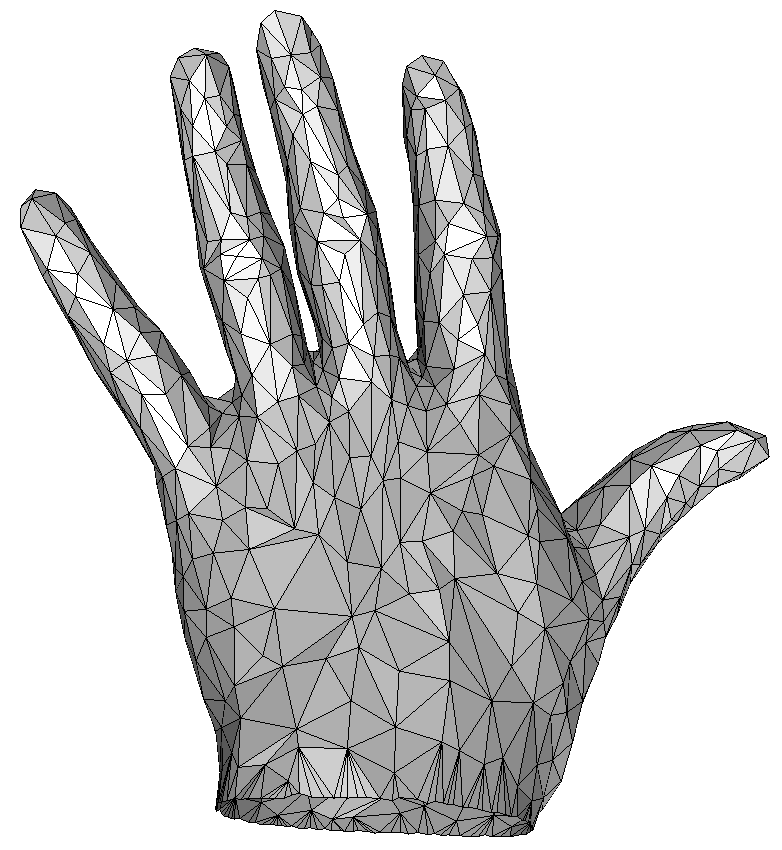} & 
\includegraphics[width=0.3\textwidth]{./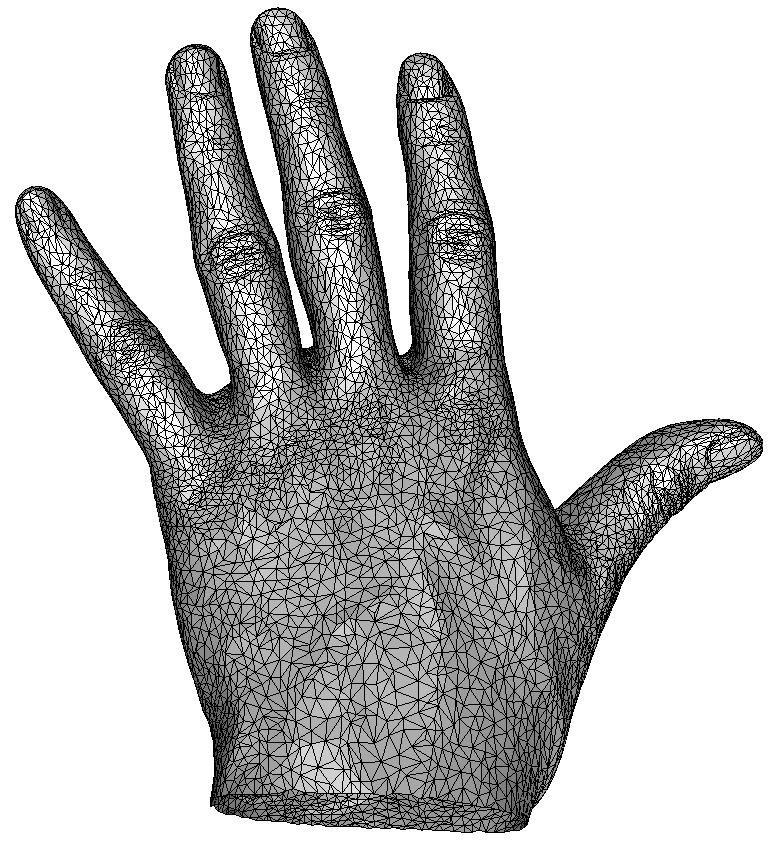} \\
\end{tabular}
\end{center}
\vspace{-0.1in}
\caption{The input triangulation of Left Hand with $800$ vertices (Left) and $10000$ vertices (Right). 
\label{fig:left_hand_input}}
\end{figure}

\begin{figure}[t]
\begin{center}
\begin{tabular}{ccc}
\includegraphics[width=0.36\textwidth]{./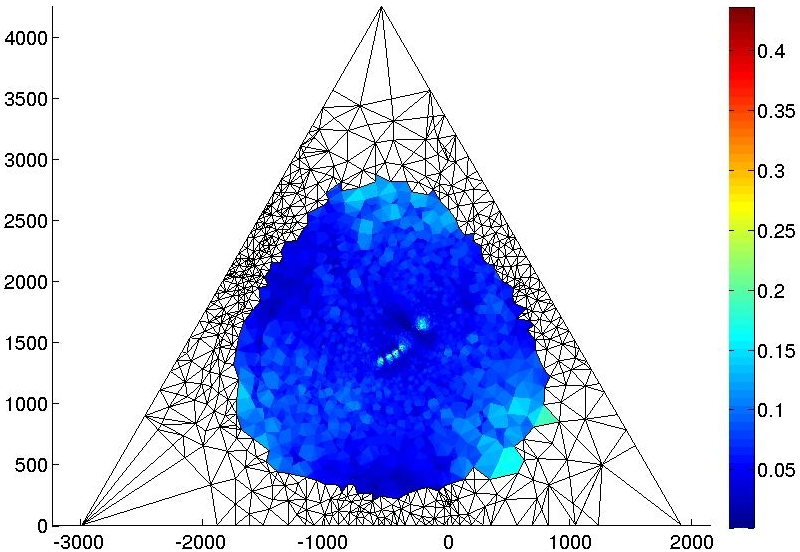} & 
\includegraphics[width=0.28\textwidth]{./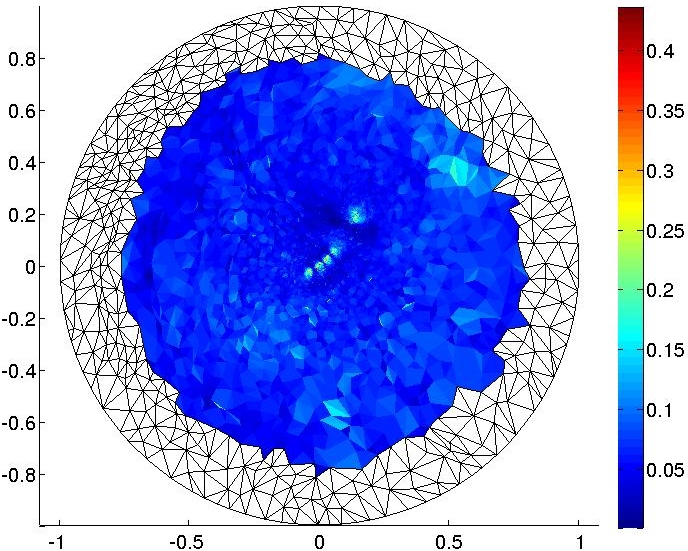} &
\includegraphics[width=0.36\textwidth]{./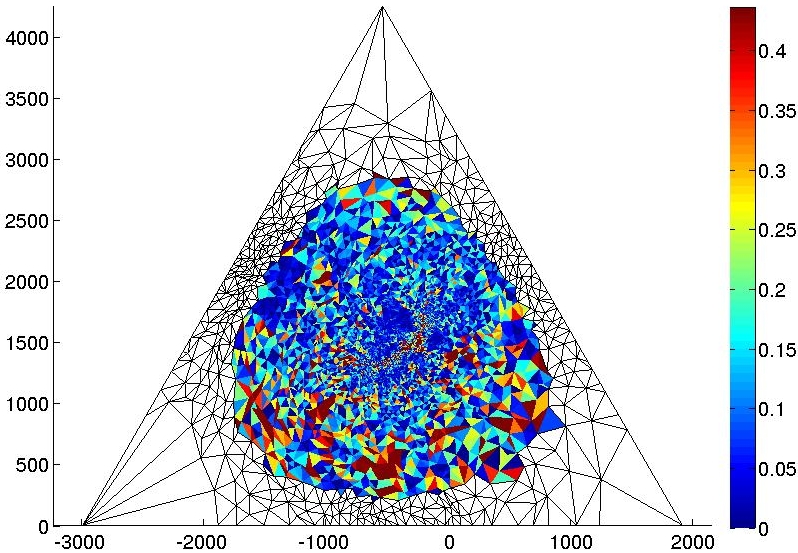} \\
{\bf DC} & {\bf HF} & {\bf BD}
\end{tabular}
\end{center}
\vspace{-0.1in}
\caption{Results for Left Hand: the plots the conformal distortion $D(h)-1$ of the 
planar embedding computed by different methods from the triangulation with $10000$ 
vertices. Note that for the purpose of comparison, the range of color map is fixed as $[0, 0.44]$, 
although the maximal conformal distortion of the planar embeddings computed 
by HF and BD is larger than $1.44$.
\label{fig:left_hand}}
\end{figure}

\begin{table}[!h]
\begin{center}
\begin{tabular}{| c | c | c | c | c |}
\hline
Method  & 800 &  2500 & 10000 & 40000   \\
\hline
{\bf DC} & (0.7436, 2.1873) & (0.3404, 1.1705) & (0.1457, 0.4359) & (0.0686, 0.2651) \\ 
\hline
{\bf HF} & ($\infty, \infty$) & (0.4247, 4.0458) & (0.1545, 1.3129) & (0.0716, 2.0045)\\
\hline
{\bf BD}& (1.1520, 6.5863) & (0.7330, 4.9571) & (0.3930, 3.9686) & (0.2873, 5.2264)\\
\hline
\multicolumn{5}{|c|}{ $(d_2, d_\infty)$} \\
\hline
{\bf DC} & 0.236 & 1.27 & 5.84 & 42.3 \\
\hline
{\bf HF} &  0.966 & 2.24& 8.52 & 37.4 \\
\hline
{\bf BD}& 38.0  &  116 & 475 & 1941\\
\hline
\multicolumn{5}{|c|}{ timing (sec)} \\
\hline
\end{tabular}
\end{center}
\vspace{-0.1in}
\caption{Left Hand: approximation errors and running time.
\label{tbl:left_hand}
}
\end{table}

\vspace{0.3in}
\noindent{\bf Left Hand.}
The model Left Hand is obtained using 3D scanning. The original model has 200k vertices, which 
is simplified using Meshlab~\cite{meshlab} to the triangulated surfaces with $800$, 
$2500$, $10000$ and $40000$ vertices. See Figure~\ref{fig:left_hand_input} for two of them. 
Again, the methods of DC and BD map these triangulated surfaces onto an equilateral triangle 
and the method HF map them onto unit disk. We estimate the $d_2$ and $d_\infty$ errors as 
shown in Table~\ref{tbl:left_hand}. Note the errors are estimated over the 
polygons or triangles with their vertices more than $1/60$ of the diameter
of the planar region away from the boundary. Figure~\ref{fig:left_hand} plots the conformal
distortions of the planar embedding computed by different methods. For a better visualization, 
in this example, we show the plots over the planar embedding. 

Again the method of DC converges linearly in both the $d_2$ and $d_\infty$ errors, and the methods 
of HF and BD only converge in the $d_2$ error. In the planar embedding of Left Hand with $800$ 
vertices computed by the method HF, there are some triangles even away from the boundary whose orientations 
get reversed. This is the reason that the corresponding $d_2$ and $d_\infty$ errors are $\infty$
in this case.


\vspace{0.1in}
\noindent{\bf Eight.}
Finally, we check the convergence for different methods over a model called Eight, which is a 
surface with genus $2$. We use Loop subdivision to subdivide a triangulated Eight with about $750$ 
vertices to obtain four more refined triangulated Eight with about $3000$, $12000$, $50000$ and $200000$ vertices. 
Figure~\ref{fig:eight_input} shows two of them. 

As the implementation of the method BD for surfaces of non disk topology is not available, we only 
show the performance of the methods of DC and HF. 
We estimate the $d_2$ and $d_\infty$ errors as shown in Table~\ref{tbl:eight}. Note the errors are
estimated over the polygons or triangles with their vertices more than one twentieth of the diameter
of Eight away from the singular vertices. 
Again, the method of DC converges linearly in both $d_2$ and $d_\infty$ errors. 
For the method of HF, the $d_2$ error decreases but the convergence rate is not clear, 
and the $d_\infty$ error does not even decrease. 
In this case, the planar embedding is in fact just an immersion and not necessary globally one-to-one. 
To visualize the conformal distortion, we plot it on the input triangulated surface as shown in Figure~\ref{fig:eight}.

\subsection{More examples and Statistics}
In this subsection, we present a few more examples and collect a few statistics
showing the performance of our algorithm. 

We run our DC method on four more examples: Maxplanck (a disk), 
Brain (a sphere), Protein (a torus), and Genus3 (a 3-hole torus). 
The results are shown in Figure~\ref{fig:more_examples}.  
The $d_2$ and  $d_\infty$ errors are estimated over the polygonal faces
whose vertices are more than one sixtieth of the diameter of the 
model away from either the boundary or the singular vertices.
Note there is no singular points for Protein since its Euler characteristic
number is $0$. For the model Genus3, we only choose one singular vertex.


In Table~\ref{tbl:statistics}, we collect the following statistics when the algorithm
runs over various examples: (1) the number of triangles in the input triangulated 
surface, labeled by {\it \#Fin}, representing the input complexity; (2) the number of 
polygonal faces in the common refinement of $T\cup T'$, labeled by {\it \#Fout}, representing
the output complexity. For surfaces with boundary, we count half of the faces
in $T\cup T'$; (3) the number of diagonal switches needed to transform the input triangulation
$T_0$ to the Delaunay triangulation $T$, labeled by {\it \#DelSW}; (4) the running time in second
of the above diagonal switches, labeled by {\it tDelSW}; (5) the number of co-circular diagonal switches performed
during the discrete conformal deformation, labeled by {\it \#CocSW}; (6) the running time in second
of the above co-circular diagonal switches, labeled by {\it tCocSW}; (7) the number of the Newton iterations, labeled by
{\it \#Newton}; (8) the running time in second of the Newton iterations excluding tCocSW, labeled by {\it tNewton}.
From Table~\ref{tbl:statistics}, we observe that the number of faces in $T\cup T'$ is often only a few hundred more
than that in $T$, and the Newtons method converges very fast and takes only 5-10 iterations, 
and the operation of diagonal switch costs very little. Note that the procedure of cocircular diagonal 
switch takes more time as it requires to find the first edge failing the Delaunay condition along the 
deforming path. 

\begin{figure}[t]
\begin{center}
\begin{tabular}{cc}
\includegraphics[width=0.4\textwidth]{./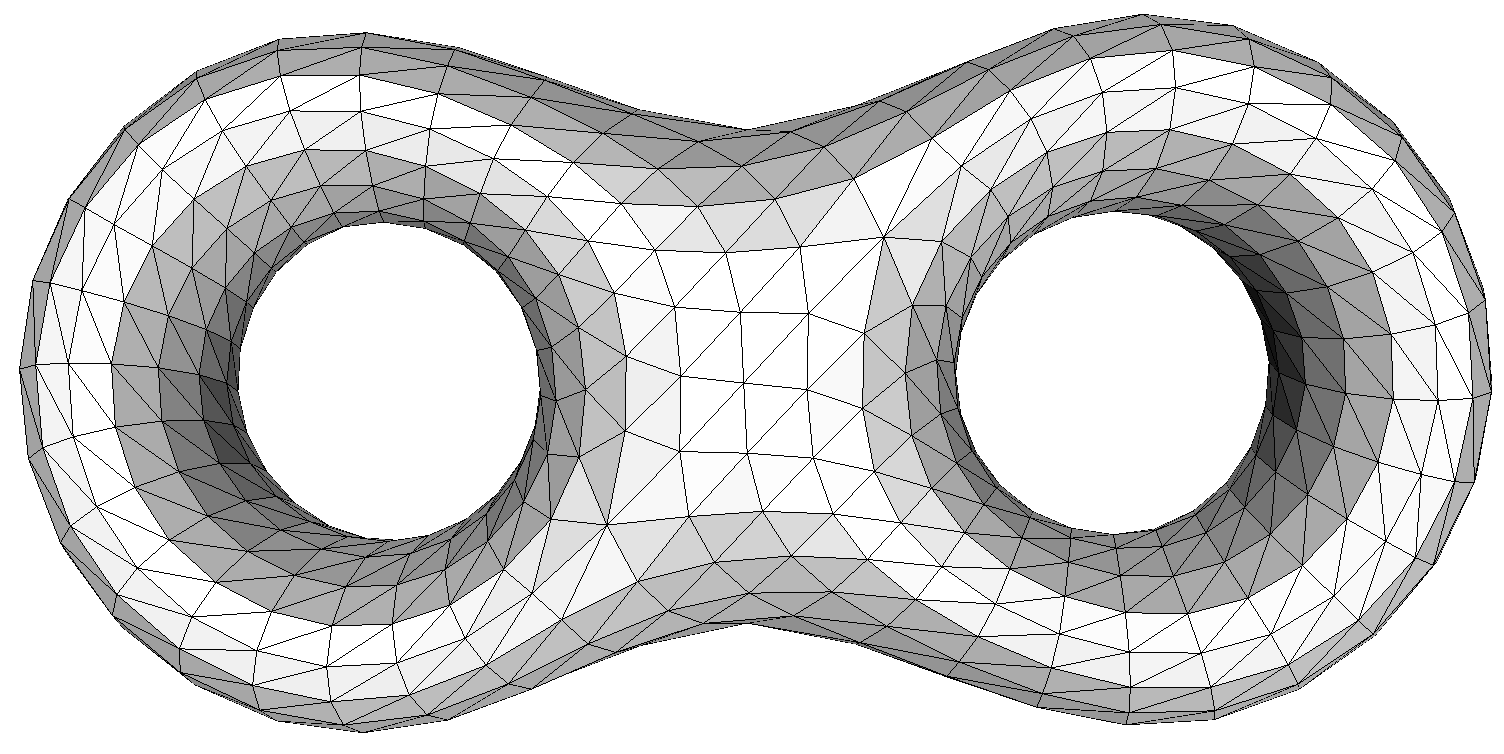} & 
\includegraphics[width=0.4\textwidth]{./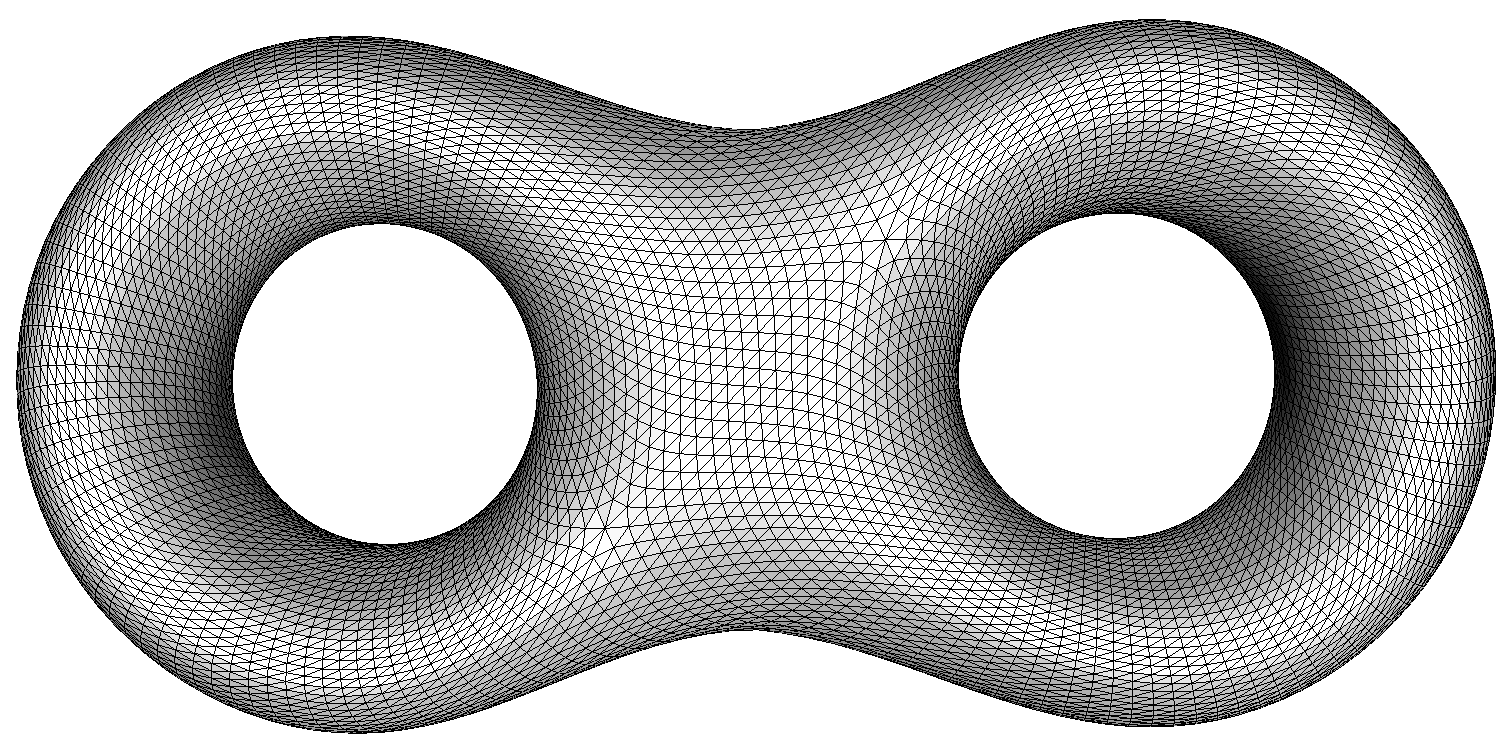} \\
\end{tabular}
\end{center}
\vspace{-0.1in}
\caption{
The input triangulations of Eight with $759$ vertices (Left) and $12000$ vertices (Right). 
\label{fig:eight_input}}
\end{figure}

\begin{figure}[t]
\begin{center}
\begin{tabular}{ccc}
\includegraphics[width=0.4\textwidth]{./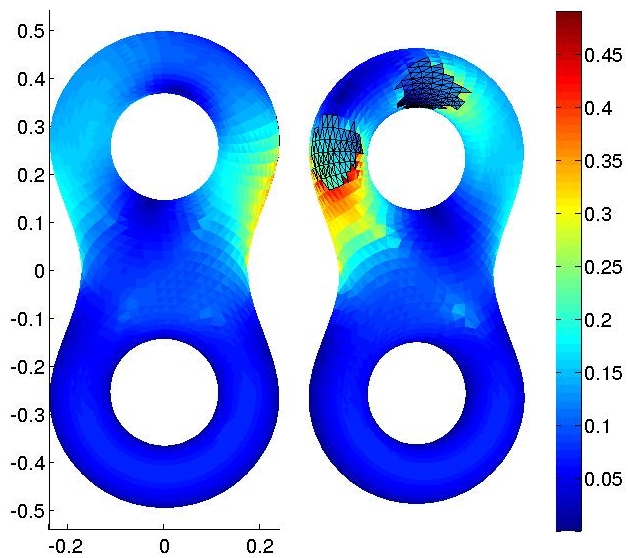} & 
\includegraphics[width=0.4\textwidth]{./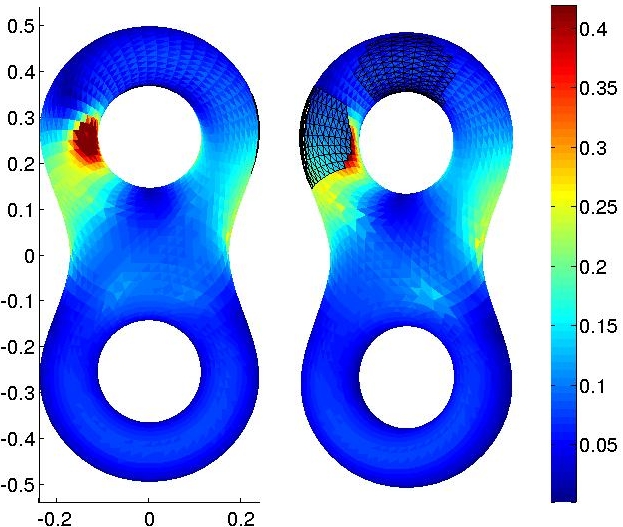} \\
{\bf DC} & {\bf HF} 
\end{tabular}
\end{center}
\vspace{-0.1in}
\caption{Results for Eight: the conformal distortion $D(h)-1$ of the 
planar embedding computed by DC and HF from the triangulation with $3000$ vertices
plotted on the input surface and shown in two different views. 
Note that for the purpose of comparison, the range of color map is fixed as $[0, 0.42]$, 
although the maximal conformal distortion of the planar embeddings computed 
by HF is larger than $1.42$.
\label{fig:eight}}
\end{figure}

\begin{table}[!h]
\begin{center}
\begin{tabular}{| c | c | c | c | c |c|}
\hline
Method  & 750 &  3000 & 12000 & 50000 & 200000 \\
\hline
{\bf DC} & (0.1422, 1.5259) & (0.1112, 0.4187) & ( 0.0290, 0.2091) & (0.0184, 0.1071) & (0.0085, 0.0361)\\
\hline
{\bf HF} & (0.1133, 0.3985) & (0.1236, 3.0128) & (0.0828, 6.2417) & (0.0325, 4.7646) & (0.0343, 14.158)\\
\hline
\multicolumn{6}{|c|}{ $(d_2, d_\infty)$} \\
\hline
{\bf DC} & 0.040 & 0.196 & 1.12 &5.88  & 59.1 \\
\hline
{\bf HF} & 1.58 & 4.91 & 19.2 & 80.6 & 339\\
\hline
\multicolumn{6}{|c|}{ timing (sec)} \\
\hline
\end{tabular}
\end{center}
\vspace{-0.1in}
\caption{Eight: Approximation errors and timing.
\label{tbl:eight}
}
\end{table}

\begin{table}[htbp]
\begin{center}
\begin{tabular}{| c | c | c | c | c | c|c|c|c|}
\hline
Model       & ~~\#Fin (k)~~ &  ~~\#Fout (k)~~ & ~~\#DelSW~~ & ~~tDelSW~~ &~~\#CocSW~~ & ~~tCocSW~~ & ~~\#Newton~~ & ~~tNewton~~   \\
\hline
{Planar Region} & 10.22	& 10.25   	  & 3223    & 0.072  & 14      & 0.272  & 5        & 0.34\\
\hline
{Planar Region} & 40.86	& 40.90   	  & 14113   & 0.38   & 17      & 1.30   & 5        & 2.01\\
\hline
{Planar Region} & 163.47	& 163.50   & 59288   & 1.50   & 18      & 5.41   & 5        & 13.53\\
\hline
{Left Hand} & 79.98	& 80.54   	  & 14192   & 0.52   & 272     & 30.6   & 7        & 4.97\\
\hline
{Eight} & 196.61	& 196.65   	  & 68428   & 0.34   & 44     & 13.84   & 5        & 43.1\\
\hline
{Maxplanck} & 47.08 		& 47.27   	  & 5969    & 0.248  & 93      & 4.53   & 6        & 2.22\\
\hline
{Brain} 	 	& 15.00	& 15.30   	  & 1048   & 0.016   & 300     & 4.6   & 6        & 1.05\\
\hline
{Protein} 	 	& 46.24	&  46.37  	  & 3064   & 0.072   & 132     & 8.07   & 9        & 5.34\\
\hline
{Genus3} 	 	& 26.62	&  26.74  	  & 12454   & 0.052   & 111     & 4.33   & 10        & 4.56\\
\hline
\end{tabular}
\end{center}
\caption{Statistics.
\label{tbl:statistics}
}
\end{table}

\begin{figure}[!t]
\begin{center}
\begin{tabular}{cc}
\includegraphics[width=0.2\textwidth]{./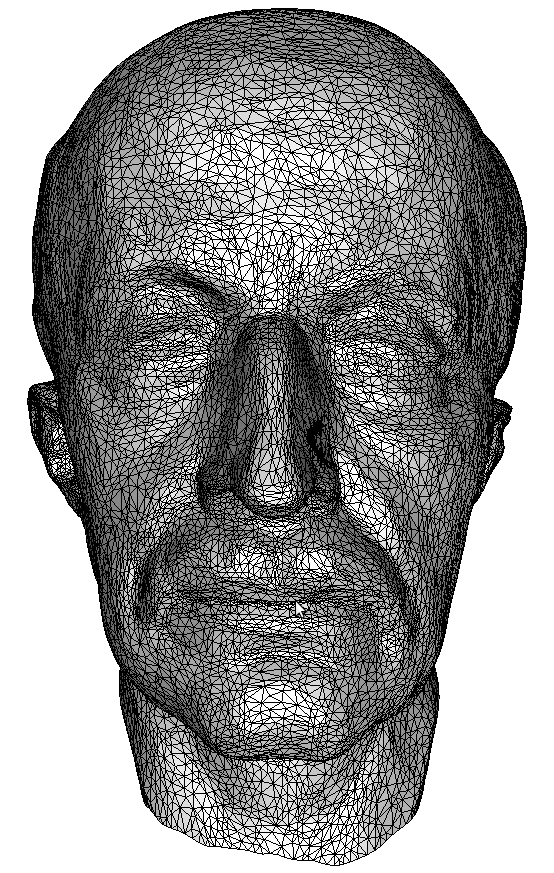} & 
\includegraphics[width=0.38\textwidth]{./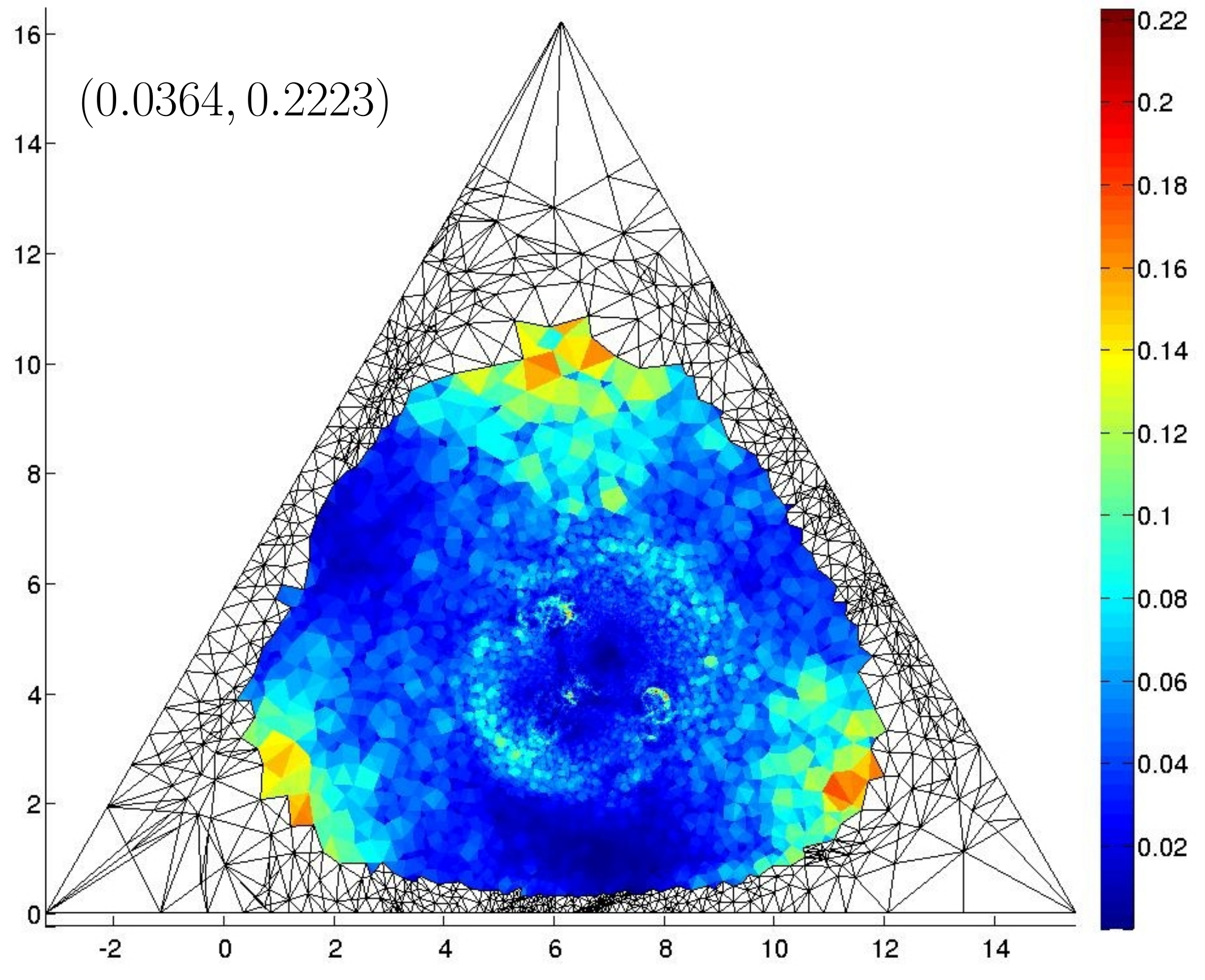} \\
\includegraphics[width=0.28\textwidth]{./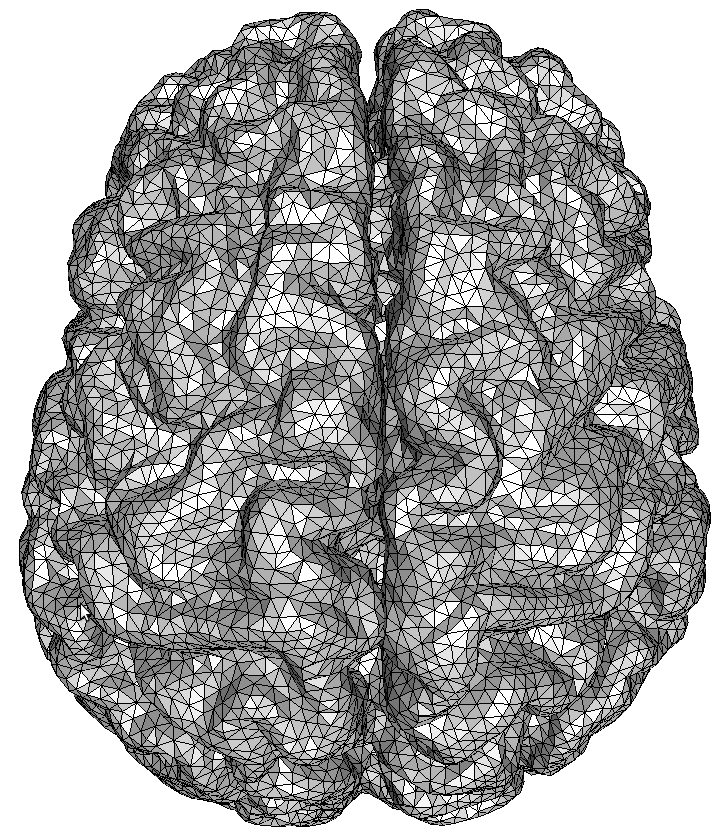} & 
\includegraphics[width=0.38\textwidth]{./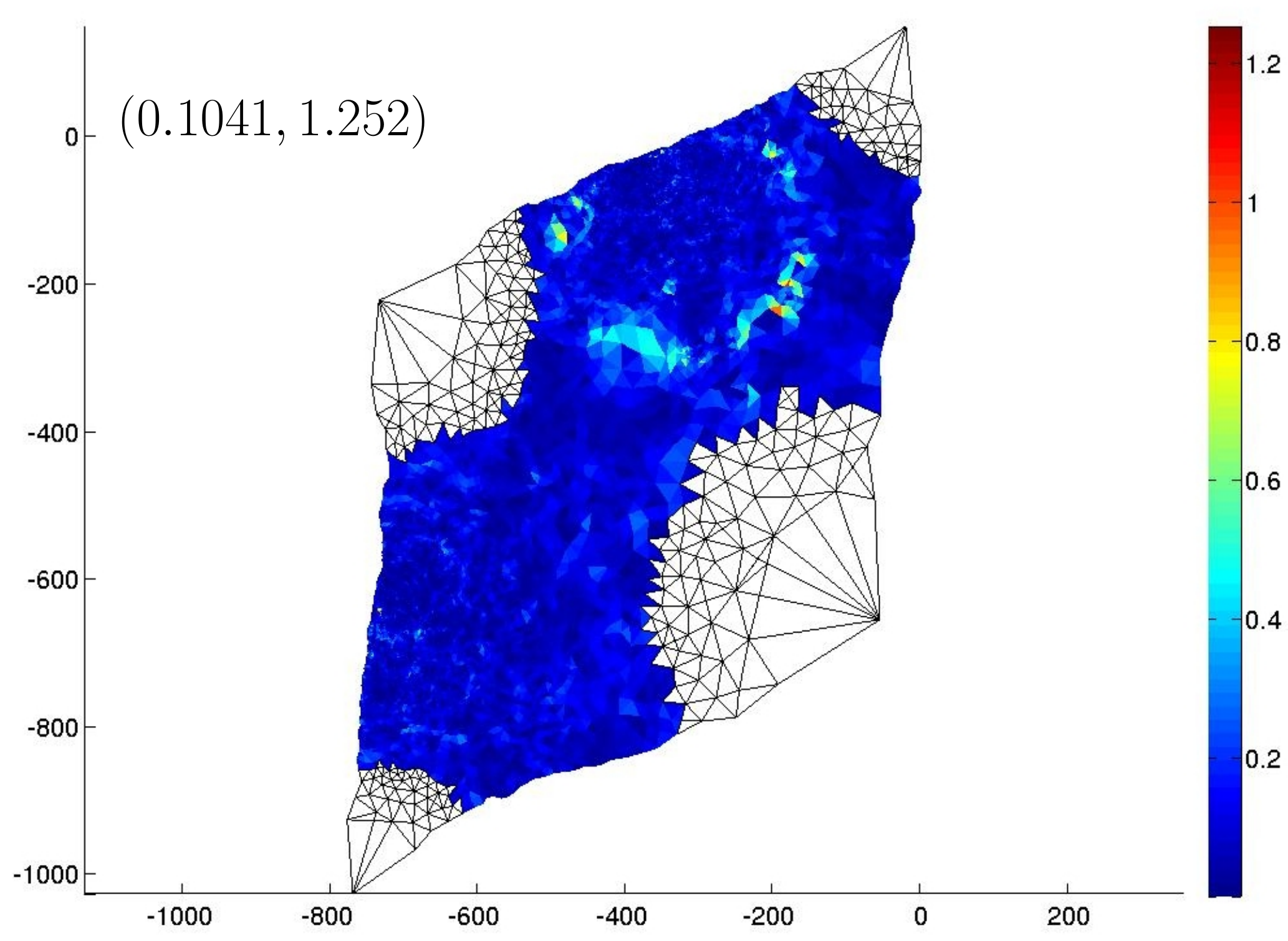} \\
\includegraphics[width=0.4\textwidth]{./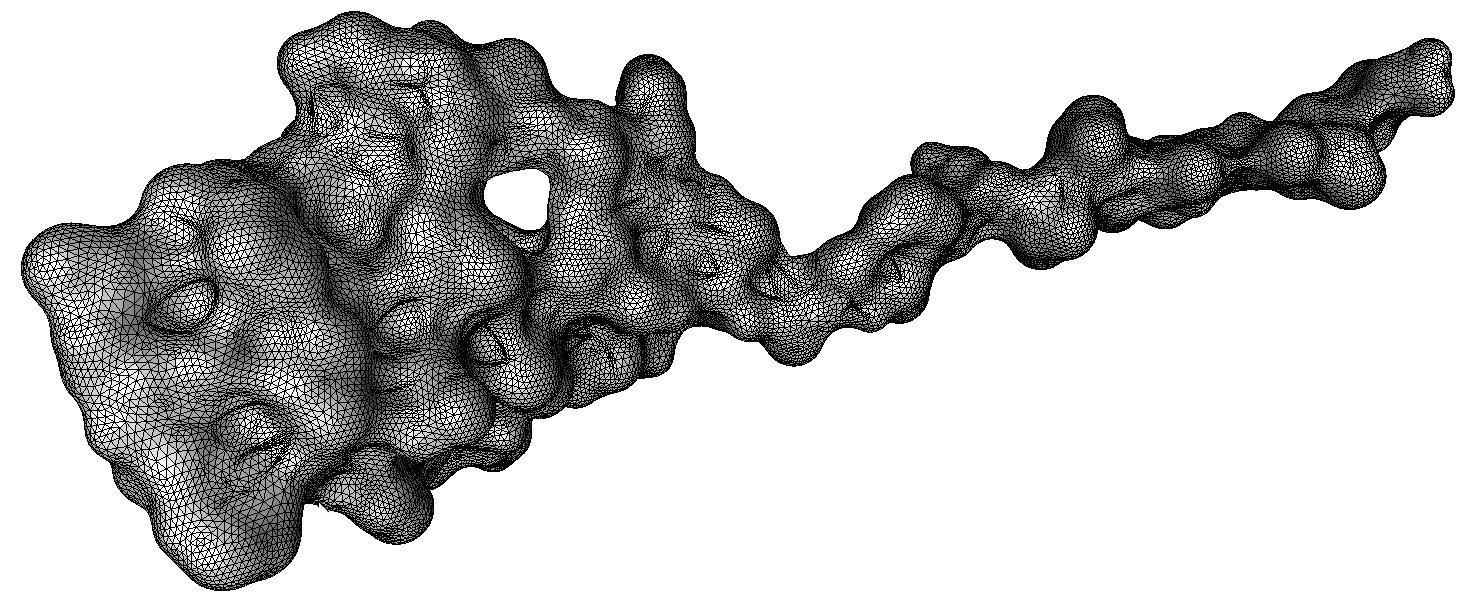} & 
\includegraphics[width=0.38\textwidth]{./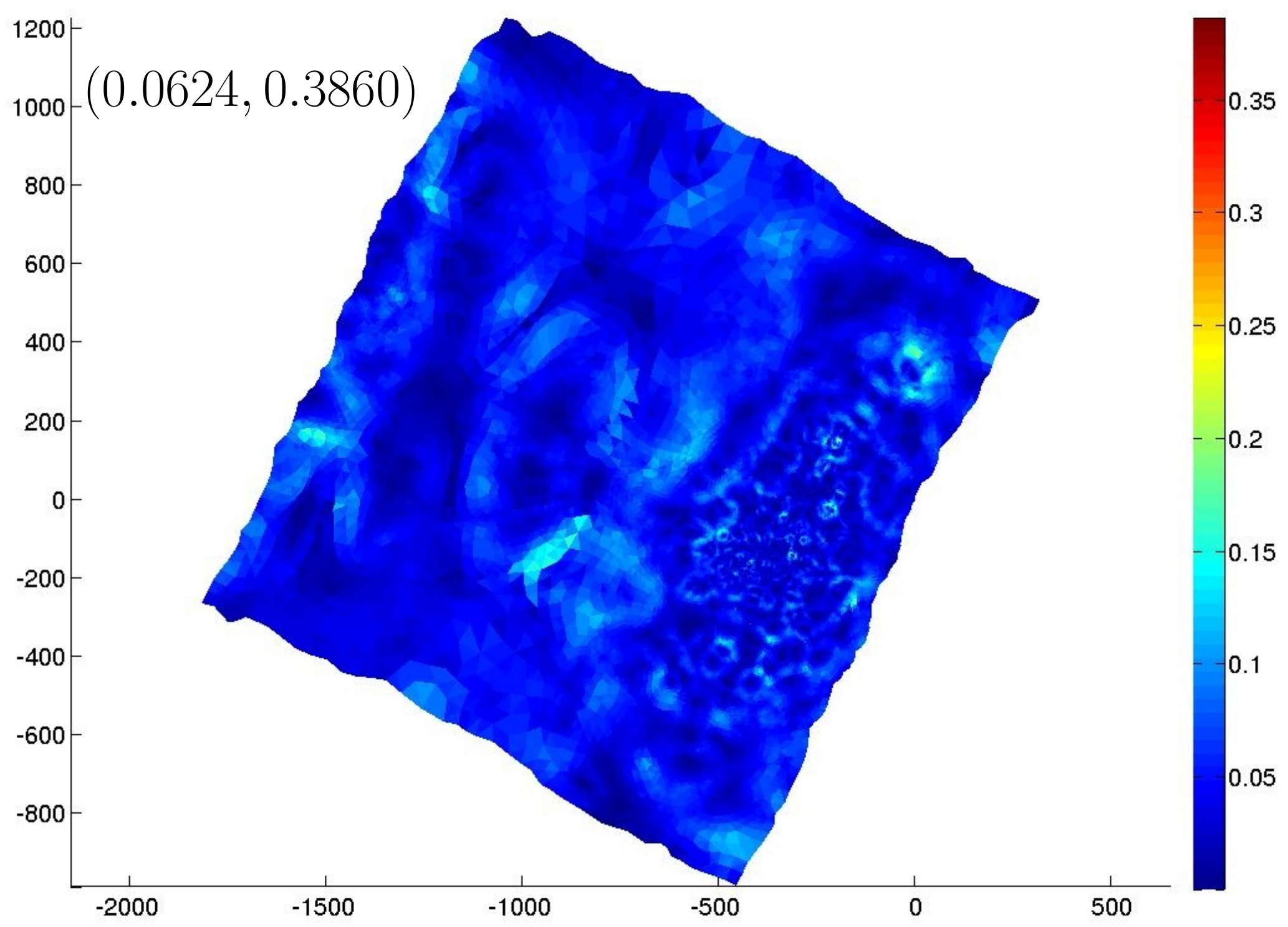} \\
\includegraphics[width=0.35\textwidth]{./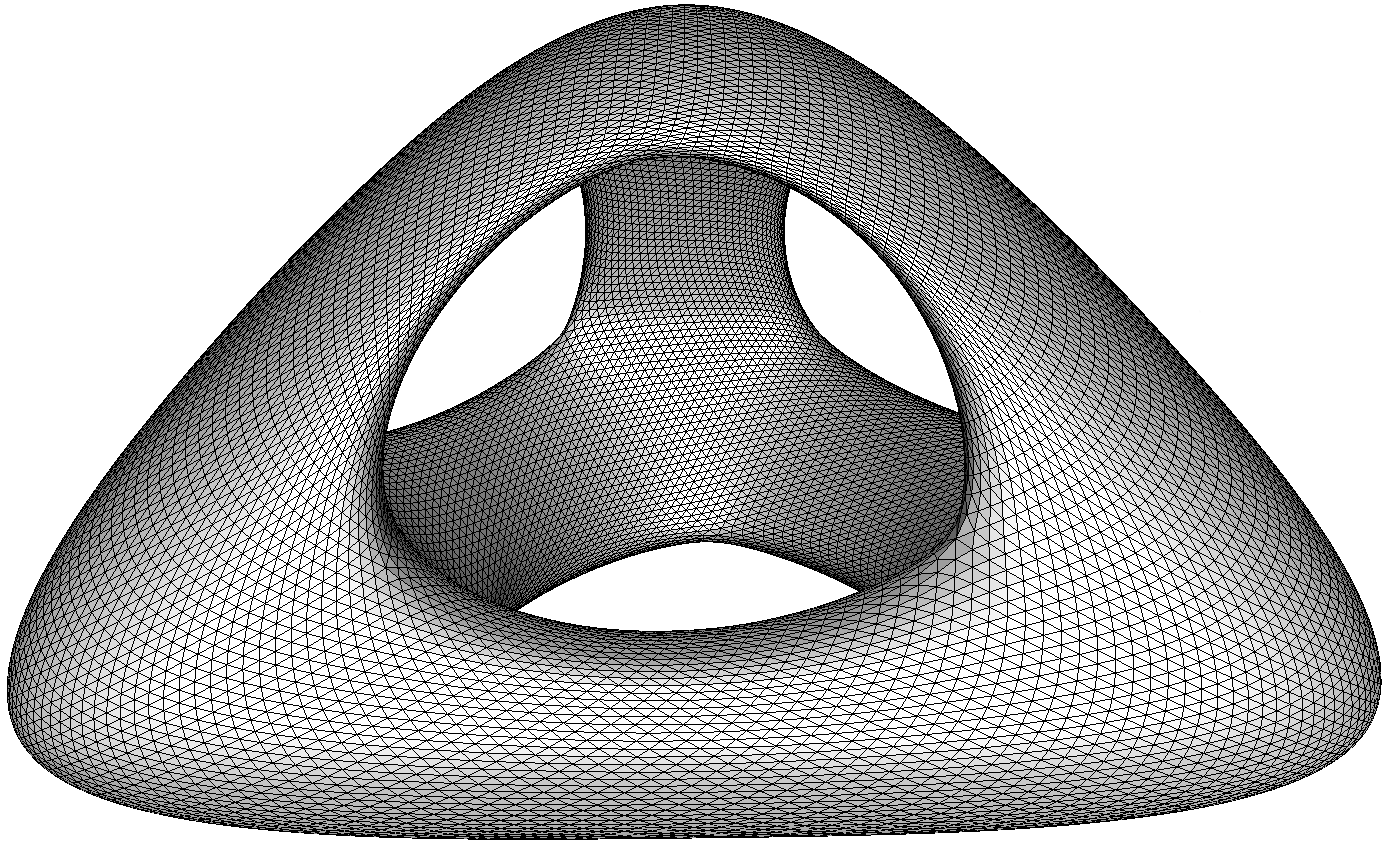} & 
\includegraphics[width=0.39\textwidth]{./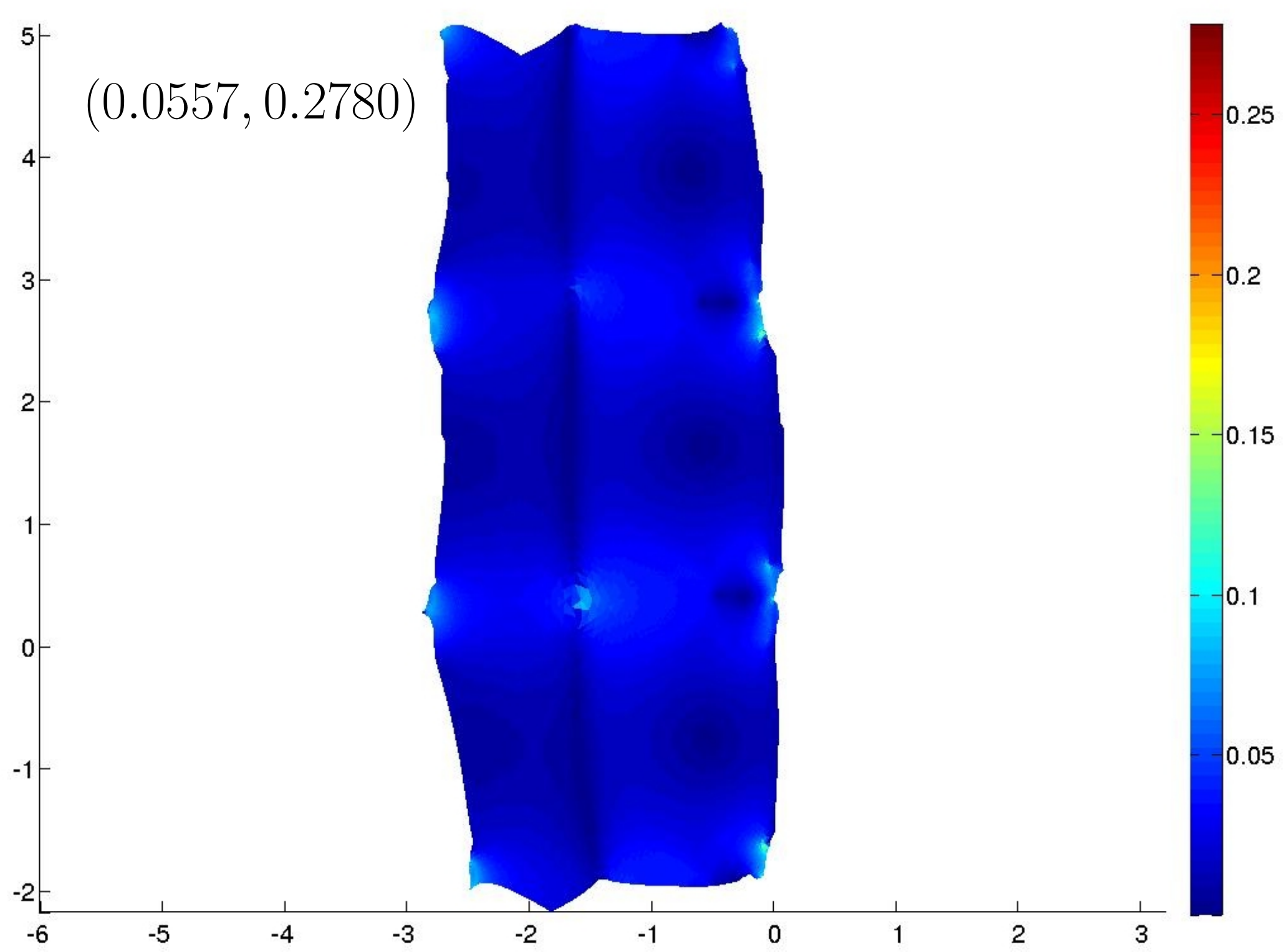} \\
\end{tabular}
\end{center}
\caption{Left: the input triangulated surfaces; Right: the planar embedding 
computed by our algorithm. The color maps plot $D(h) - 1$ and the pairs of numbers
are the $(d_2, d_\infty)$ errors. 
\label{fig:more_examples}}
\end{figure}

\subsection{Remark}
From the above experiments, we observe that our conformality numerically converges
to the classical one as the triangle size goes to $0$, in a linear rate. In particular, 
this convergence behavior is independent of the quality of the triangles in the
input triangulated surfaces. It is also very efficient comparing to the state of the art. 
One disadvantage of our method is that we may subdivide the input triangles into the
polygonal pieces to accurately represent the discrete conformal map. This may increase
the complexity of the output by our algorithm. However, from the statistics we show,
the increase of the complexity is very little for most of the prescribed curvatures.

\section{Conclusion}
We have introduced a new discrete conformality for triangulated surfaces possibly
with boundary, and showed a discrete uniformization theorem with this conformality, 
and described an algorithm for solving the problem of prescribing curvature, and 
explicitly constructed a discrete conformal map between the input triangulated surface
and the deformed triangulated surface.  
In addition, we have presented the numerical examples to show the convergence of 
our discrete conformality and to demonstrate the efficiency and the robustness
of our algorithm.

We point out a few possible few directions for future research. In~\cite{gglsw}, we 
have presented a similar discrete conformality for hyperbolic triangulated surfaces and 
a discrete uniformization theorem associated to it. We
plan to develop an algorithm based on this discrete conformality to solve the problem of 
prescribing curvature for hyperbolic triangulated surfaces. Hyperbolic triangulation is 
more natural for surfaces with genus bigger than $1$ as their fundamental domains can be 
flattened into hyperbolic plane without choosing any singular vertices.  
It remains open whether our discrete conformality and uniformization theorem can be extended
to spherical triangulated surfaces, which is definitely worth investigating in the future.  
Another interesting future research is to see if diagonal switches can be used in 
inversive distance circle patterns.

\section{Acknowledgment} 
We would like to thank Yaron Lipman for the implementation of the method BD, 
and Xianfeng Gu and S-T. Yau for the implementation of the method HF. 
The work is supported in part by the NSF of USA and the NSF of China.

\bibliographystyle{abbrv}
\bibliography{discrete_conformality}

\vspace{8mm}
\noindent\textbf{Appendix: Proof of Theorem~\ref{theorem:dc_boundary}}

\begin{proof}
Denote $(\tilde{S}, \tilde{V})$ the doubled surface of $(S, V)$ and $\tilde{d}$ the doubled metric
of $d$. Prescribe the curvature $\tilde{K}^*$ for $(\tilde{S}, \tilde{V})$ by setting 
$\tilde{K}^*(v) = 2 * K^*(v)$ for a vertex $v$ on the boundary, and $\tilde{K}^*(v) =  K^*(v)$ for a vertex $v$ in the interior.  
It is easy to verify that the curvature $\tilde{K}^*$ satisfies the hypotheses in Theorem~\ref{thm:main} imposed 
on a prescribed curvature on $ (\tilde{S}, \tilde{V})$. Thus there exists a PL metric $\tilde{d}'$ discrete
conformal to $\tilde{d}$ and the discrete curvature of $\tilde{d}'$ is the curvature of $\tilde{K}^*$.
We will show that $\tilde{d}'$ respects the doubling structure and the restriction of $\tilde{d}'$ onto
$S$ is the PL metric $d'$ with the property stated in the theorem. 

We first show for a PL metric $dd$ on $(\tilde{S}, \tilde{V})$ respecting the doubling structure, 
there is a Delaunay triangulation $T$ in $dd$ which has the following symmetric property: 
(1) Any triangle $f$ in $T$ not crossing the boundary has an identical mirror triangle $f'=h(f)$ in $T$; 
(2) Let $F_s(T)$ be set of triangles in $T$ crossing a segment $s\in B$. Then any triangle $f = uu'v \in F_s(T)$ 
must have a pair of vertices $u, u'$ with $u' = h(u)$, and moreover, if the third vertex $v$ of $f$ is not
the endpoints of the segment $s$, the neighboring triangle $f' = v'vu' \in F_s(T)$ has the property that $v'=h(v)$. 
Note that the quadrilateral $f\cup f'$ must be cocircular as the segment $s$ is the common bisector 
of the edge $uu'$ and $vv'$. See Figure~\ref{fig:boundary_triangles} for an illustration. 

Since one can reach a Delaunay triangulation starting from any triangulation by diagonally
switching the edges which fail to be Delaunay finite many times, we can prove this by induction
on the number of diagonal switches. We start with the triangulation $T_0$ on $(\tilde{S}, \tilde{V})$ 
so that the restrictions of $T_0$ onto both copies of $S$ are identical triangulations. Note a segment 
$s\in B$ must be an edge in $T_0$. Thus $F_s(T_0)$ is empty and the symmetric property trivially holds.
Assume by diagonally switching a set of edges which fails to be Delaunay, we reach a triangulation $T_k$ 
satisfying the symmetric property. Assume there is an edge $e\in T_k$ which fails to be Delaunay. If $e$ 
is not a side of any triangle in $F_s(T_k)$ for any segment $s\in B$, then its mirror $h(e)$ also fails 
to be Delaunay. Note that if the edge $e$ itself is a segment on $B$, then $h(e) = e$. Switch both $e$ 
and $h(e)$ and reach a triangulation $T_{k+1}$ which satisfies the symmetric property. If $e$ is a side of a 
triangle in $F_s(T_k)$ for some segment $s\in B$, there are two cases: (i) $e$ crosses $s$; 
and (ii) $e$ does not cross $e$. In the case (i), the endpoints $z, z'$ of $e$ must satisfy
$z'=h(z)$ and any edge in the triangles incident to $e$ which crosses $s$ must also fail to be Delaunay.  
For example, as shown in Figure~\ref{fig:switch_boundary_edge}, the edges $zu'$ and $z'v$ must also fail 
to be Delaunay. Switch these edges and reach a triangulation $T_{k+1}$ which satisfies the symmetric property. 
In the second case, switch both $e$ and $e'=h(e)$. If the endpoints of $e$ contain no endpoints of the segment 
$s$, as shown in Figure~\ref{fig:switch_boundary_edge}, switch the diagonal $vu'$ as it must also fail to be Delaunay. 
The resulting triangulation $T_{k+1}$ satisfies the symmetric property. This proves that there is a 
Delaunay triangulation $T$ in $dd$ satisfying the above symmetric property. 

\begin{figure}[!t]
\begin{center}
\begin{tabular}{c}
\includegraphics[width=0.6\textwidth]{./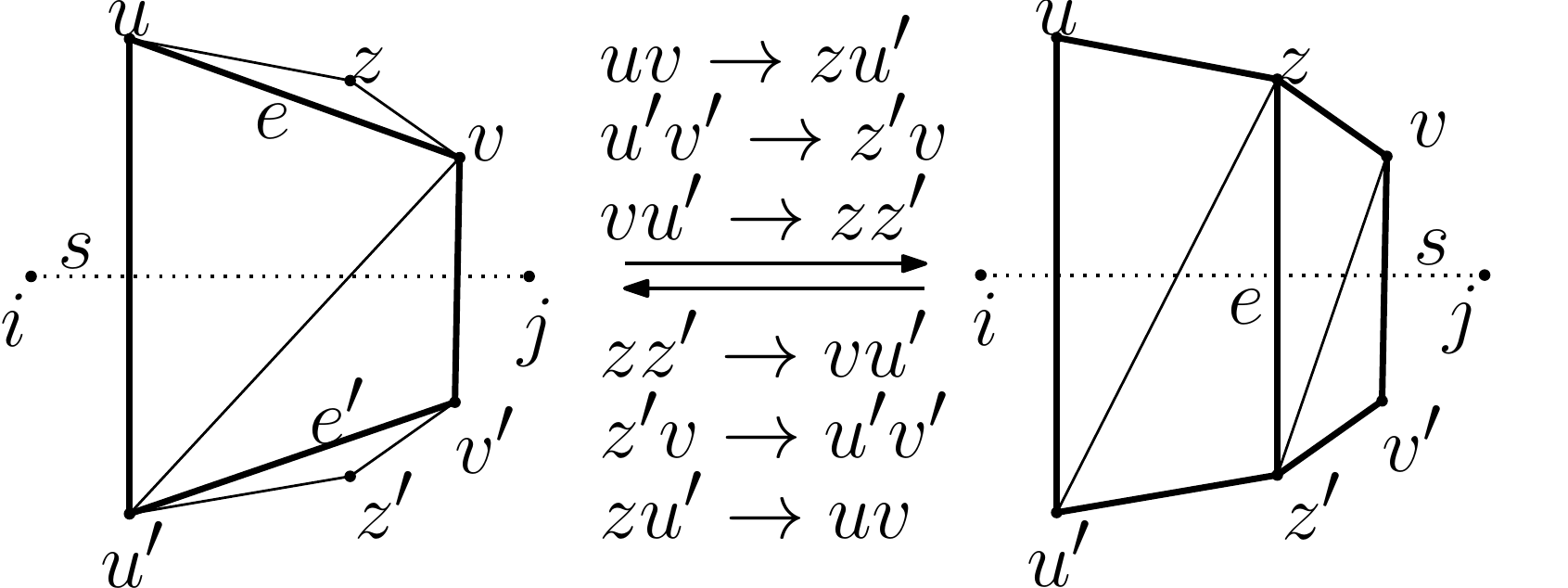}\\
\end{tabular}
\end{center}
\vspace{-0.1in}
\caption{Diagonal switches for the edges of the triangles crossing a segment on the boundary.
From left to right: the edges $e =uv, e'=u'v'$ always fail to be Delaunay at the same time. 
After their switches, the edges $u'v$ fails to be Delaunay. From right to left: the edges $e=zz', 
zu'$ and $vz'$ always fail to be Delaunay at the same time.  
\label{fig:switch_boundary_edge}}
\end{figure}

Let $w: \tilde{V} \rightarrow \tilde{V}$ with $\sum_{v\in \tilde{V}} w(v) = 0$ be the conformal factor 
so that $\tilde{d}' = w*\tilde{d}$. We claim $w$ respects the doubling structure, i.e., $w(v) = w(h(v))$ 
for any vertex $v\in \tilde{V}$. Otherwise, let us define a new conformal factor $w'$ so that $w'(v) = w(h(v))$ 
for any vertex $v\in \tilde{V}$, and then $w'\neq w$, which from Lemma~\ref{lem:conformalmetricspace} 
implies the metric $w'* \tilde{d}$ is different from  $\tilde{d}'$. However, it is easy to verify that the 
curvature of the metric $w'*\tilde{d}$ is also equal to $\tilde{K}^*$.
This contradicts to the uniqueness of $\tilde{d}'$. 

Now let $w(t) = tw$ for $t\in [0, 1]$ be a path from $0$ to $w$, and we have $w(t)$ respects the doubling 
structure for any $t$. As discussed in Section~\ref{sec:convexenergy}, $\tilde{d}(t) = w(t)*\tilde{d}$ for $t\in [0, 1]$
is a path in $C(\tilde{d})$. Let $0=t_0<t_1<t_2<\cdots<t_m=1$ is a partition of $[0, 1]$ so that for any $0\leq i\leq m-1$, 
$\tilde{d}(t)$ with $t\in [t_i, t_{i+1}]$ is a path inside the cell $\mathcal{M}_D(T_i)$ for some triangulation $T_i$.
If $\tilde{d}(t_i)$ respects the doubling structure and $T_i$ satisfies the symmetric property in the metric $\tilde{d}(t_i)$, 
then $T_i$ remains so in any PL metric $\tilde{d}(t)$ for any $t\in [t_i, t_{i + 1}]$. Indeed, the symmetric 
property (1) obviously holds as $w(t)$ respects the doubling structure. To show the symmetric property (2), it suffices 
to show is that the quadrilateral $f\cup f'$ remains cocircular.  This can be done by verifying that the sum of 
the cosines of the angles opposite to the diagonal remains $0$ along the path $w(t)$. 
Furthermore, consider the region $\cup_{f\in F_s(T_i)} f$, 
as shown in Figure~\ref{fig:boundary_triangles}. One can cut it into two geometrically identical subregions using a 
straight line connecting the endpoints of the segment $s$ and passing through the midpoints of the edges
in $F_s(T_i)$ of the form $uu'$ with $u' = h(u)$. This shows that $\tilde{d}(t)$ respects the doubling structure for 
any $t\in [t_i, t_{i + 1}]$, in particular, so is $\tilde{d}(t_{i+1})$. 
Now by construction, $\tilde{d}(t_0)$ respects the doubling structure.  From the previous discussion, $\tilde{d}(t_0)$
lies in the cell $\mathcal{M}_D(T_0)$ where $T_0$ satisfies the symmetric property. 
Then using induction, we show that $\tilde{d}'= \tilde{d}(1)$ respects the double structure. 
The restriction of $\tilde{d}'$ onto $S$ is the PL metric $d'$ on $(S, V)$. Finally, it is easy to verify 
that the curvature on $S$ in $d'$ equals $K^*$. This proves the theorem.  
\end{proof}

\end{document}